\documentclass[11pt]{article}
\usepackage[margin=1.05in]{geometry}
\usepackage{amsmath,amssymb,amsthm,mathtools}
\usepackage{tabularx}
\usepackage{booktabs,graphicx,enumitem,microtype,placeins}
\usepackage{bm}
\usepackage{xcolor}
\usepackage{tikz}
\usetikzlibrary{arrows.meta,positioning,calc,decorations.pathmorphing,decorations.pathreplacing,backgrounds,fit}
\usepackage[colorlinks=true,linkcolor=blue!60!black,citecolor=blue!60!black,urlcolor=blue!60!black]{hyperref}
\usepackage[capitalize,nameinlink]{cleveref}

\newtheorem{theorem}{Theorem}[section]
\newtheorem{proposition}[theorem]{Proposition}
\newtheorem{lemma}[theorem]{Lemma}
\newtheorem{corollary}[theorem]{Corollary}
\theoremstyle{definition}
\newtheorem{definition}[theorem]{Definition}

\theoremstyle{remark}
\newtheorem{remark}[theorem]{Remark}

\newcommand{\R}{\mathbb{R}}
\newcommand{\E}{\mathbb{E}}

\newcommand{\F}{\mathcal{F}}
\newcommand{\G}{\mathcal{G}}
\newcommand{\M}{\mathcal{M}}
\newcommand{\W}{\mathcal{W}}
\newcommand{\LL}{\mathcal{L}}

\newcommand{\one}{\mathbf{1}}
\newcommand{\tr}{\operatorname{tr}}
\newcommand{\diag}{\operatorname{diag}}

\newcommand{\Dstar}{\mathfrak{D}^{*}}
\newcommand{\Dset}{\mathfrak{D}}
\newcommand{\bb}{\mathfrak{b}}

\title{Dynamic Causal Portfolio Choice:\\ Hedging the Rotation of the Common-Driver Manifold}
\author{Alejandro Rodr\'iguez Dom\'inguez\thanks{Quantitative Analysis and Artificial Intelligence Department, Miralta Finance Bank S.A., Madrid, Spain; and Department of Computer Science, University of Reading, Reading, United Kingdom. Email: \texttt{arodriguez@miraltabank.com}. A reproducibility package accompanying this paper is available online.}}
\date{\today}

\begin{document}
\maketitle

\begin{abstract}
When a portfolio is conditioned on a minimal set of observable drivers under which its assets become mutually independent over the investment horizon, the dynamic investment problem acquires a distinctive geometric structure. We study continuous-time portfolio choice in this setting. The conditioning representation, rather than the asset vector, becomes the natural state of the problem, and it moves: the sensitivity of returns to the drivers depends on the state, the conditioning set may itself change over time, and the induced information geometry both rotates and, at discrete instants, jumps. The optimal policy separates into a static component that allocates along the conditioning geometry at each instant and an intertemporal component that hedges the predictable motion of that geometry, a first-order effect in the model rather than a refinement, placing the coordinates of the information geometry in the role played by exogenous state variables in classical intertemporal asset pricing. Because the problem is organized by the drivers, its computational cost is governed by their number rather than by the number of assets. Changes in the conditioning set generate a risk that continuous trading cannot span, so the market is incomplete in the direction of its own geometry. The analysis is carried out in a controlled diffusion model, and the resulting structure is illustrated on synthetic economies designed to isolate each mechanism.
\end{abstract}

\noindent\textbf{Keywords:} dynamic portfolio choice; manifold hedging; intertemporal hedging demand; causal separation; Kunita--Watanabe decomposition; information geometry; regime switching.

\medskip
\noindent\textbf{JEL:} G11; C58; G12. \quad \textbf{MSC 2020:} 91G10; 93E20; 62H22.

\section{Introduction}\label{sec:intro}

Static mean--variance choice conditioned on a causal separator has a closed-form solution: the projected Markowitz portfolio, in which the precision matrix is replaced by its projection onto the subspace of exposures compatible with the geometry of the conditioning drivers. That theory is developed in a companion paper \cite{RD2025Static}, where separation is taken as a probabilistic primitive at the level of $\sigma$-algebras and its consequences for a single-period program are characterized. The present paper asks what happens when the horizon is long enough that the separator, the response maps and the induced geometry move, so that the investor faces an intertemporal problem rather than a sequence of unrelated static ones.

We state the scope of the results at the outset. The dynamic control results of this paper rest on \emph{conditional-factorization} hypotheses, namely a low-rank conditional covariance, a diagonal idiosyncratic block, and state-dependent loadings, and are, strictly, statements about that conditional structure. Their identification with a \emph{causal} separator, the guarantee that the conditioning set is intervention invariant rather than merely predictive, is established in the static companion \cite{RD2025Static}; the single consequence the dynamic theory uses, invariance of the priced geometry under environment interventions, is stated and proved here from an explicit structural hypothesis, so the development is self-contained given that hypothesis. A reader may take the development as a conditional-factor dynamic portfolio theory; the causal reading is the additional content the companion supplies. With that understood, the dynamic structure is governed by the same object that organizes the static theory, the common manifold, now treated as a state variable in its own right. Three features distinguish the dynamic problem. First, the optimal policy is no longer purely myopic: it acquires a hedging term against the predictable motion of the representation, which is Merton's intertemporal hedging demand \cite{Merton1971,Merton1973} with the coordinates of the information geometry playing the role of the priced state. Second, because the apparatus is organized by the drivers, the intertemporal problem inherits the dimension reduction of the static one: the governing equation is posed in the driver state, whose dimension is the number of drivers rather than the number of assets. Third, the separator itself can switch, and when it does the value process jumps; in a continuous-asset market the jump is unspanned by any traded strategy, so the market is incomplete in the direction of its own geometry, a statement conditional on the traded universe.

This paper is theoretical, and its experiments are synthetic by design: the contribution is the mathematical content of the dynamic representation, the hedging decomposition, the geometry of switches, and the transport connection, established against a known generating process in which each claim is falsifiable. The empirical, out-of-sample side of the same framework is carried by its applied line. The sensitivity-based estimator of \cite{RD2023MLWA} identifies the common-driver geometry with neural-network sensitivities and reports out-of-sample outperformance over standard portfolio methods across several real markets and datasets, and the causal PDE-control models of \cite{RD2025Causal,RD2025CPCM}, with the conditional risk-neutral construction of \cite{RD2024PDE}, deliver higher risk-adjusted performance with lower turnover on a multi-decade U.S. equity panel with several hundred candidate drivers. The present paper supplies the dynamic theory those methods implement: it isolates the structure of the intertemporal problem in a setting where it can be verified against a known generating process, which real data cannot provide, while the applied line tests the same geometry on markets. The interventional stability that distinguishes causal from correlational conditioning is the invariant-prediction principle of \cite{PetersBuhlmannMeinshausen2016}.

The contributions are the following. First, a continuous-time normal form for causal separation and a bridge (\Cref{lem:bridge}) showing that a horizon-level screening-off condition is equivalent to instantaneous residual orthogonality, under stated nondegeneracy hypotheses. Second, a verification theorem (\Cref{thm:hedging}) decomposing the optimal policy into a myopic projected-Markowitz portfolio and an intertemporal hedge against the predictable motion of the conditioning geometry, together with its closed form for CRRA utility and a reduction of the control problem to the driver dimension rather than the asset dimension (\Cref{cor:crra}). Third, a classification of the ways the conditioning representation can move (\Cref{prop:threemotions}): a smooth rotation, a jump between separators of equal size, and a jump that changes the number of drivers, with the time-dependent efficient frontier and a decision rule for refining the separator (\Cref{prop:frontiermono}). Fourth, two results with no close antecedent in intertemporal portfolio theory: the value jump at a separator switch is Kunita--Watanabe orthogonal to every continuous traded strategy, so the market is incomplete in the direction of its own geometry (\Cref{thm:jumpkw}), and the transport of exposures across a moving manifold is governed by a metric-compatible connection whose strain term is derived, not posited (\Cref{thm:connection}). Fifth, a verification theorem for the case where the separator itself switches over time (\Cref{thm:verification}). The results are established in a controlled diffusion model and validated numerically against a known generating process; the appendix makes the two static inputs self-contained.

The remainder proceeds as follows. \Cref{sec:recap} recalls the static representation and its projected optimum. \Cref{sec:dynamic} develops the diffusion normal form, the hedging decomposition, the dimension-$m$ reduction and the switching verification. \Cref{sec:transport} treats tangent coordinates, the two-stage structure and the transport connection. \Cref{sec:experiments} reports the experiments, and \Cref{sec:discussion} concludes.

\section{The static representation, in brief}\label{sec:recap}

We recall only what the dynamic theory consumes. A finite universe of observable drivers is declared. A subset $\Dset$ is a \emph{separator} at date $t$ if, conditional on the drivers' realized path through the horizon, the asset returns are mutually independent; the minimal such conditioning information defines the operative $\sigma$-algebra $\G_t$ on which decisions are measurable. Separation induces an exact conditional factorization: the return vector has a diagonal conditional covariance in its idiosyncratic part, and all cross-sectional co-movement is carried by the response to driver innovations. The decision-node conditional covariance is therefore
\begin{equation}\label{eq:Q}
Q \;=\; B\Lambda B^\top + \Sigma^c, \qquad \Sigma^c=\diag(\varsigma_1^2,\dots,\varsigma_n^2)\succeq \varsigma^2 I\succ0,
\end{equation}
a diagonal-plus-low-rank matrix in which $B\in\R^{n\times m}$ collects the asset sensitivities to the $m$ drivers and $\Lambda\succ0$ is the driver-innovation covariance. Exposures are covectors $b=B^\top w$; minimal admissibility requires them to lie in the tangent directions of the driver manifold, a linear constraint $Cw=0$ with $C$ of rank $q$. Stacking the budget with this constraint into $A=[\one\ ;\,C]$ and $\bb=[1\,;\,0]$, the static program $\min_w \tfrac12 w^\top Qw-\gamma w^\top\mu$ subject to $Aw=\bb$, the conditional counterpart of \cite{Markowitz1952}, has the unique solution
\begin{equation}\label{eq:projected}
w^*(\gamma)=w_0+\gamma M\mu,\qquad M=Q^{-1}-Q^{-1}A^\top(AQ^{-1}A^\top)^{-1}AQ^{-1},
\end{equation}
classical Markowitz with $Q^{-1}$ replaced by its projection $M$ onto the constraint-compatible subspace, with frontier potential $\Delta_t=\mu^\top M\mu$; the solution is the unique KKT point of the strictly convex program \cite{BoydVandenberghe2004}. Two facts carry into the dynamic theory and we record them explicitly:
\begin{equation}\label{eq:Mprops}
\text{(P1)}\ M=M^\top\succeq0,\qquad \text{(P2)}\ MQM=M,\qquad \text{(P3)}\ AM=0,
\end{equation}
so $M$ is the projector onto the constraint-compatible subspace in the $Q$ metric; and $Q^{-1}$ applies in $O(nk^2)$ through the Woodbury identity because $Q$ is diagonal-plus-low-rank. Throughout, $M_C$ denotes the cash-financed projector, built from $A=C$, $\bb=0$.

\section{The dynamic problem: manifold hedging}\label{sec:dynamic}

\subsection{Diffusion normal form}

We place the representation in continuous time. Let $Z_t\in\R^m$ be a measurable representation of the driver state, following a mean-reverting diffusion with drift $\alpha(Z_t,t)$ and diffusion matrix $\sigma_Z(Z_t,t)$, driven by an $m$-dimensional Brownian motion $W^Z$. For each asset $i=1,\dots,n$, let $S^i_t$ be its price, $\mu^i(Z_t,t)$ its conditional drift, $\beta^i(Z_t,t)\in\R^m$ its vector of sensitivities to the driver innovations, and $\varsigma_i(Z_t,t)$ the volatility of an idiosyncratic Brownian motion $W^i$ specific to asset $i$. Collect the sensitivities into the loading matrix $B=[\beta^1,\dots,\beta^n]^\top$, write $\Lambda_Z=\sigma_Z\sigma_Z^\top$ for the driver innovation covariance, and $Q$ for the resulting conditional return covariance. The conditional excess premium and its projector appear below.

\begin{definition}[Instantaneous separation]\label{def:instsep}
In a continuous semimartingale model with driver filtration $\F^Z$, let $m^i$ be the residual local martingale of asset $i$: the component of its return martingale orthogonal, in the Kunita--Watanabe sense, to the stable subspace generated by $\F^Z$-martingales. Instantaneous separation holds if $d\langle m^i,m^j\rangle_t=0$ for all $i\neq j$.
\end{definition}

\begin{lemma}[Normal form]\label{lem:normalform}
In a Brownian diffusion model with uniform ellipticity $\varsigma_i\ge\varsigma>0$, instantaneous separation holds if and only if there exist an $m$-dimensional Brownian motion $W^Z$ and scalar Brownian motions $W^1,\dots,W^n$, all mutually independent, such that
\begin{equation}\label{eq:sde}
\frac{dS^i_t}{S^i_t}=\mu^i(Z_t,t)\,dt+\beta^i(Z_t,t)^\top\sigma_Z(Z_t,t)\,dW^Z_t+\varsigma_i(Z_t,t)\,dW^i_t,\qquad dZ_t=\alpha(Z_t,t)\,dt+\sigma_Z(Z_t,t)\,dW^Z_t.
\end{equation}
Then $Q=B\Lambda_Z B^\top+\diag(\varsigma^2)$ with $\Lambda_Z=\sigma_Z\sigma_Z^\top$, matching \eqref{eq:Q}.
\end{lemma}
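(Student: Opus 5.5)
We prove the two directions separately. The reverse direction is a direct computation; the forward direction needs a martingale representation followed by a normalization of the residuals.

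\emph{Reverse direction.} Assume \eqref{eq:sde} holds with $W^Z,W^1,\dots,W^n$ mutually independent. Take $\F^Z$ to be the filtration generated by $W^Z$. We may regard $Z$ as driven by $W^Z$, which is natural when $\sigma_Z$ is invertible; the general case is handled by enlarging to $\F^{W^Z}$. The return martingale of asset $i$ is
\[
\int \beta^{i\top}\sigma_Z\,dW^Z+\int\varsigma_i\,dW^i .
\]
The first term lies in the stable subspace generated by $\F^Z$-martingales. The second term is strongly orthogonal to that subspace, since $\langle W^i,W^Z\rangle=0$. Uniqueness of the Kunita--Watanabe decomposition then identifies the residual as
\[
m^i=\int\varsigma_i\,dW^i .
\]
Independence of $W^i$ and $W^j$ gives $d\langle m^i,m^j\rangle=\varsigma_i\varsigma_j\,d\langle W^i,W^j\rangle=0$ for $i\ne j$. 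Finally, the covariance formula follows by summing the two blocks of $d\langle\cdot,\cdot\rangle/dt$:
\[
Q=B\sigma_Z\sigma_Z^\top B^\top+\diag(\varsigma_i^2)=B\Lambda_Z B^\top+\diag(\varsigma^2),
\]
the cross terms vanishing by independence. This matches \eqref{eq:Q}.

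\emph{Forward direction.} Work in a Brownian model with $d$-dimensional Brownian motion $W$, and write the return martingale of asset $i$ as $\int \theta^{i\top}\,dW$.
\begin{enumerate}[label=(\roman*)]
\item Obtain the driver Brownian motion. Assume $\sigma_Z$ is nondegenerate on the span of the driver noise. Then
\[
W^Z:=\int \sigma_Z^{-1}\bigl(dZ-\alpha\,dt\bigr)
\]
is an $m$-dimensional Brownian motion by L\'evy's characterization, and $\F^Z$-martingales are stochastic integrals against $W^Z$ by martingale representation in the driver filtration.
\item Decompose each return martingale. The Kunita--Watanabe projection onto the stable subspace of $W^Z$ has integrand equal to $d\langle \cdot, W^Z\rangle/dt$. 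Writing that projection as $\int\beta^{i\top}\sigma_Z\,dW^Z$ defines the loadings $\beta^i$. The residual $m^i$ satisfies $\langle m^i,W^Z\rangle=0$.
\item Normalize the residuals. Uniform ellipticity gives $d\langle m^i\rangle_t=\varsigma_i^2\,dt$ with $\varsigma_i\ge\varsigma>0$. Set
\[
W^i:=\int\varsigma_i^{-1}\,dm^i .
\]
Each $W^i$ is a continuous local martingale with $\langle W^i\rangle_t=t$. By hypothesis $\langle W^i,W^j\rangle=0$ for $i\neq j$, and by construction $\langle W^i,W^Z\rangle=0$. L\'evy's multidimensional characterization then makes $(W^Z,W^1,\dots,W^n)$ a standard Brownian motion in $\R^{m+n}$, so its components are mutually independent. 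Substituting back yields \eqref{eq:sde}.
\end{enumerate}

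\emph{Main obstacle.} The delicate step is the identification of the Kunita--Watanabe residual relative to the driver filtration in step (ii). We need the stable subspace generated by $\F^Z$-martingales, viewed inside the larger filtration, to coincide with integrals against $W^Z$. This requires $W^Z$ to remain a Brownian motion in the full filtration, meaning no anticipation or immersion failure, together with a nondegeneracy or invertibility assumption on $\sigma_Z$.

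I would state these as the standing nondegeneracy hypotheses, matching the qualifier mentioned in the introduction. Under them the projection is explicit through the $\langle\cdot,W^Z\rangle$ brackets. Ellipticity is used only to make the normalization $\varsigma_i^{-1}$ legitimate in step (iii).
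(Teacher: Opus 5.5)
Your proposal is correct and follows the paper's route. The paper's proof is the multivariate L\'evy characterization: the Kunita--Watanabe residuals, normalized by $\varsigma_i^{-1}$ (the only use of ellipticity) and orthogonal to one another and to $W^Z$, form together with $W^Z$ a standard Brownian motion in $\R^{m+n}$, which promotes orthogonality to independence; your reverse direction is the routine computation the paper leaves implicit. The invertibility of $\sigma_Z$ you add is not in the lemma's statement, but the paper's sketch silently needs it too (to get $\langle W^i,W^Z\rangle=0$ rather than mere orthogonality to $\int\sigma_Z\,dW^Z$) and states it as \textnormal{(H1)} in \Cref{lem:bridge}; given it, your immersion worry dissolves in the model as posed, since $W^Z:=\int\sigma_Z^{-1}(dZ-\alpha\,dt)$ is an $\F$-local martingale with bracket $tI$ and hence an $\F$-Brownian motion by L\'evy.
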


The proof is the multivariate L\'evy characterization \cite{KaratzasShreve1991}: orthogonality of the residual local martingales, combined with ellipticity, upgrades to independence, which is the representation \eqref{eq:sde}; in the degenerate case where some $\varsigma_i$ vanishes on a time set, the same conclusion follows from Knight's theorem on orthogonal continuous local martingales after a Dambis--Dubins--Schwarz time change \cite{RevuzYor1999}. Orthogonality of residuals is not merely rotated away; it is promoted to independence, the step a pairwise-covariance argument cannot supply.

\begin{lemma}[Horizon\,$\Leftrightarrow$\,instantaneous]\label{lem:bridge}
Assume the normal form \eqref{eq:sde} with:
\textnormal{(H1)} the driver diffusion is nondegenerate, $\sigma_Z(z)$ invertible with bounded inverse, so that the driver Brownian motion is recoverable from the driver path and $\F^{W^Z}=\F^Z$;
\textnormal{(H2)} the residual innovations $W^i$ are independent of $W^Z$ \textnormal{(so the initial enlargement of $\F$ by $\F^Z_T$ carries zero information drift for the residual martingales, the \textnormal{(H)}-hypothesis)};
\textnormal{(H3)} the residual volatilities $\varsigma_i(z)$ are bounded and $\F^Z$-measurable through $z$ \textnormal{(state-dependent volatility is admitted; genuinely path-dependent residual dynamics are not)}.
Then horizon screening-off, meaning mutual conditional independence of the increments $r_{t+h}-r_t$ given $\F^Z$ for every $h\in(0,H]$ and every $t$, holds if and only if instantaneous separation \textnormal{(\Cref{def:instsep})}, vanishing of the residual predictable covariation $d\langle m^i,m^j\rangle_t=0$ for $i\neq j$, holds.
\end{lemma}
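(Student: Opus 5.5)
We prove the two directions separately. In both, the key device is to condition on the whole driver path $\F^Z_T$ and read the residual dynamics under that conditioning. By (H1), $\F^Z=\F^{W^Z}$, so conditioning on $\F^Z$ is the same as conditioning on the driver Brownian path. Let $r^i$ denote the log-return. In the normal form \eqref{eq:sde} of \Cref{lem:normalform}, the return martingale of asset $i$ splits as
\[
\int \beta^{i\top}\sigma_Z\,dW^Z+\int\varsigma_i\,dW^i .
\]
The first term lies in the stable subspace generated by $\F^Z$-martingales. The second is Kunita--Watanabe orthogonal to it, because $W^i$ is independent of $W^Z$ by (H2). We therefore identify
\[
m^i=\int\varsigma_i(Z_s,s)\,dW^i_s ,\qquad d\langle m^i,m^j\rangle_t=\varsigma_i\varsigma_j\,d\langle W^i,W^j\rangle_t .
\]
The proof needs the covariation $d\langle W^i,W^j\rangle_t=\rho_{ij}(t)\,dt$ with a predictable $\rho_{ij}$, so we do not yet assume mutual independence of the residual Brownian motions. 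Otherwise the equivalence would be trivial.

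\emph{Instantaneous $\Rightarrow$ horizon.} Assume $d\langle m^i,m^j\rangle=0$ for $i\neq j$. Uniform ellipticity $\varsigma_i\ge\varsigma>0$ then forces $\langle W^i,W^j\rangle=0$. By the multivariate L\'evy characterization, $(W^Z,W^1,\dots,W^n)$ is a standard Brownian motion, so its components are mutually independent.

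Now enlarge the filtration initially by $\F^Z_T$. By (H2) and the (H)-hypothesis, each $W^i$ is still a Brownian motion in the enlarged filtration, with no information drift, and the $W^i$ remain independent of $\F^Z_T$ and of each other. Conditionally on the driver path, the increment $r^i_{t+h}-r^i_t$ consists of two parts:
\begin{itemize}
\item an $\F^Z_T$-measurable term, namely the drift, the driver-response integral $\int\beta^{i\top}\sigma_Z\,dW^Z$, and the Itô correction, all measurable by (H1) and (H3);
\item the Wiener integral $\int_t^{t+h}\varsigma_i(Z_s,s)\,dW^i_s$, whose integrand is deterministic given the path by (H3).
\end{itemize}
Given $\F^Z_T$, these Wiener integrals are therefore independent Gaussian variables across $i$, which is horizon screening-off.

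\emph{Horizon $\Rightarrow$ instantaneous.} Fix $i\neq j$. Conditional independence of the increments implies, in particular, zero conditional covariance. Subtract the $\F^Z$-measurable parts, using (H2) to know that the residual integrals have conditional mean zero. This gives
\[
\E\Bigl[\int_t^{t+h}\varsigma_i\,dW^i\int_t^{t+h}\varsigma_j\,dW^j\,\Big|\,\F^Z_T\Bigr]=\E\Bigl[\int_t^{t+h}\varsigma_i\varsigma_j\rho_{ij}\,ds\,\Big|\,\F^Z_T\Bigr]=0
\]
for all $t$ and all $h\in(0,H]$. The first equality is the conditional Itô isometry in the enlarged filtration, valid under (H2) with bounded $\varsigma$ by (H3).

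Dividing by $h$ and letting $h\downarrow0$ uses the Lebesgue differentiation theorem together with continuity of conditional expectation. It yields $\varsigma_i\varsigma_j\,\E[\rho_{ij}(t)\mid\F^Z_T]=0$ for a.e. $t$, and since $\varsigma_i\varsigma_j\ge\varsigma^2>0$ we get $\E[\rho_{ij}(t)\mid\F^Z_T]=0$.

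\emph{Main obstacle.} The difficulty is upgrading $\E[\rho_{ij}\mid\F^Z_T]=0$ to $\rho_{ij}=0$, because $\rho_{ij}$ need not be $\F^Z$-measurable. The covariance identity above uses only second moments; full mutual conditional independence also constrains higher joint moments. The plan is to exploit that.

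\begin{enumerate}
\item Given $\F^Z_T$, the pair $(m^i_{t+h}-m^i_t,\;m^j_{t+h}-m^j_t)$ has a conditional joint characteristic function. Independence makes it factorize for every $h$.
\item Expand both sides to second order in $h$ via Itô's formula applied to $\exp(\mathrm{i}u\,m^i+\mathrm{i}v\,m^j)$.
\item Match the mixed term in $uv$ at all orders of conditioning on $\F_t\vee\F^Z_T$. This should give $\rho_{ij}(t)=0$ pathwise. Equivalently, the product $m^im^j$ is a conditional martingale on every short window.
\end{enumerate}

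If this route proves delicate, the fallback is to read (H3) as also restricting the residual correlation to depend on the state only, so that $\rho_{ij}$ is $\F^Z$-measurable. Then the conditional expectation can be dropped directly and $\rho_{ij}=0$, hence $d\langle m^i,m^j\rangle_t=0$, follows at once.
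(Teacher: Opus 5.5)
With your fallback, your argument is the paper's. The paper conditions on $\F^Z$, removes the drift and the driver integral using (H1), applies the conditional It\^o isometry under (H2)--(H3), and divides by $h$. In doing so it writes the conditional covariance as $\rho_{ij}\int_t^{t+h}\varsigma_i\varsigma_j\,ds$, so it tacitly treats the residual correlation as a constant that can be pulled out of the conditional expectation. That is your fallback; yours, with $\rho_{ij}$ merely $\F^Z$-measurable, is slightly more general. You are also right that the lemma is vacuous unless the normal form's mutual independence of the $W^i$ is relaxed. The paper's proof does this silently.

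Your converse is more complete than the paper's. The paper only shows that the conditional covariance vanishes and leaves conditional joint Gaussianity implicit. Your L\'evy-characterization-plus-initial-enlargement argument gives conditional independence outright.

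Your primary ``upgrade'' route does not work as written. Let $X=\int_t^{t+h}\varsigma_i\,dW^i$ and $Y=\int_t^{t+h}\varsigma_j\,dW^j$. The $uv$ coefficient of their conditional characteristic function is $-\E[XY\mid\F^Z_T]$, so matching it only returns $\E[\rho_{ij}(t)\mid\F^Z_T]=0$, which you already have. Factorization conditional on $\F_t\vee\F^Z_T$ is not available, because screening-off is assumed given $\F^Z$ only; with it, the first-order term would give $\rho_{ij}(t)=0$ at once. So step 3 assumes more than the lemma does. Saying $m^im^j$ is a conditional martingale just restates the conclusion.

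The missing information is in the $u^2v^2$ coefficient, the fourth mixed moment, which conditional independence also factorizes. With $X_s,Y_s$ the running integrals from $t$, It\^o's formula for $X^2Y^2$ gives
\[
\E\big[X_{t+h}^2Y_{t+h}^2\mid\F^Z_T\big]-\E\big[X_{t+h}^2\mid\F^Z_T\big]\,\E\big[Y_{t+h}^2\mid\F^Z_T\big]=4\int_t^{t+h}\varsigma_i\varsigma_j\,\E\big[X_sY_s\,\rho_{ij}(s)\mid\F^Z_T\big]\,ds .
\]
This holds because the $Y^2\,d\langle X\rangle+X^2\,d\langle Y\rangle$ terms integrate exactly to the product of the second moments. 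Now write $X_sY_s=\int_t^s\varsigma_i\varsigma_j\rho_{ij}\,du+N_s$, where $N$ is a martingale increment in the enlarged filtration with $L^2$-norm $O(s-t)$. Three facts control the error terms:
\begin{itemize}
\item $\rho_{ij}(t)$ is $\F_t\vee\F^Z_T$-measurable, so $\E[\rho_{ij}(t)N_s\mid\F^Z_T]=0$;
\item almost every $t$ is an $L^2$-Lebesgue point of $s\mapsto\rho_{ij}(s)$;
\item $s\mapsto\varsigma_i(Z_s)$ is continuous.
\end{itemize}
Then the right-hand side equals $2\varsigma_i(Z_t)^2\varsigma_j(Z_t)^2\,\E[\rho_{ij}(t)^2\mid\F^Z_T]\,h^2+o(h^2)$. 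This is the small-$h$ form of $\E[X^2Y^2]=\E[X^2]\E[Y^2]+2\operatorname{Cov}(X,Y)^2$. Screening-off makes the left-hand side vanish for every $h$, so $\rho_{ij}(t)=0$ almost surely for almost every $t$. That would prove the lemma for a general predictable residual correlation, strictly extending the paper's version.
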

\begin{proof}
Write the increment as $r^i_{t+h}-r^i_t=\int_t^{t+h}\mu_i(Z_s)\,ds+\int_t^{t+h} b_i(Z_s)\,dW^Z_s+\int_t^{t+h}\varsigma_i(Z_s)\,dW^i_s$.
By (H1), $dW^Z_s=\sigma_Z(Z_s)^{-1}(dZ_s-\alpha(Z_s)\,ds)$ is a measurable functional of the driver path, so $\F^{W^Z}=\F^Z$ and the first two integrals are $\F^Z$-measurable. Conditioning on $\F^Z$ therefore removes the drift and the driver-driven martingale part exactly, leaving
\[
\operatorname{Cov}\!\big(r^i_{t+h}-r^i_t,\ r^j_{t+h}-r^j_t \,\big|\, \F^Z\big)=\operatorname{Cov}\!\Big(\int_t^{t+h}\!\varsigma_i(Z_s)\,dW^i_s,\ \int_t^{t+h}\!\varsigma_j(Z_s)\,dW^j_s \,\Big|\, \F^Z\Big).
\]
By (H2) the $W^i$ are independent of $W^Z$, so conditionally on $\F^Z$ the integrands $\varsigma_i(Z_s)$ are deterministic functions of $s$ (by (H3)) and the It\^o isometry applies conditionally, giving
\[
\operatorname{Cov}\big(\cdot,\cdot\mid\F^Z\big)=\rho_{ij}\int_t^{t+h}\varsigma_i(Z_s)\,\varsigma_j(Z_s)\,ds,
\]
with $\rho_{ij}$ the instantaneous residual correlation and \emph{no higher-order-in-$h$ terms}, because a Wiener integral with (conditionally) deterministic integrand has variance equal to the exact time integral of its square. If horizon screening-off holds, this conditional covariance is zero for every $h\in(0,H]$; dividing by $h$ and letting $h\downarrow0$ gives $\rho_{ij}\,\varsigma_i(Z_t)\varsigma_j(Z_t)=0$ for a.e.\ $t$, hence $\rho_{ij}=0$ (as $\varsigma_i>0$), which is $d\langle m^i,m^j\rangle=0$: instantaneous separation. Conversely, if the residual covariation vanishes instantaneously, the conditional covariance integral is identically zero for all $h$, giving horizon screening-off. $\square$
\end{proof}

The proof isolates where the work actually is, which the earlier ``by polarization'' phrasing hid. Two hypotheses carry it. First, nondegeneracy (H1) is what makes $\F^{W^Z}=\F^Z$, so that the state-dependent loading integral $\int b(Z_s)\,dW^Z_s$ is $\F^Z$-measurable and drops out of the conditional covariance exactly; without it the driver-driven part would leak into the residual covariation and the bridge could fail. Second, (H3) admits state-dependent residual \emph{volatility} $\varsigma_i(Z)$, so that the conditional covariance $Q(z)$ moves, while keeping the residual \emph{innovations} $W^i$ orthogonal to $\F^Z_T$; the operative condition is orthogonality of residual innovations, not of residual paths, to the driver terminal $\sigma$-algebra. The state-dependent-sensitivity model of \Cref{sec:moving} is exactly of this form, so the bridge covers the paper's own moving-manifold experiments and is strictly weaker than independence of residuals from $Z$.

\subsection{The hedging decomposition}

We take the admissible set $\W$ to be convex and compact (a leverage and position bound; this is the dynamic counterpart of the standing compactness assumption of the static theory and rules out the degenerate unbounded-position limit discussed after \Cref{cor:crra}). With a riskless rate $r_f$, wealth $W_t$ and tangent-admissible weights $w_t\in\W$ with $Cw=0$, wealth follows $dW/W=[r_f+w^\top(\mu-r_f\one)]\,dt+w^\top B\sigma_Z\,dW^Z+w^\top\diag(\varsigma)\,dW^{\mathrm{idio}}$. Let $J(W,z,t)=\sup\E[U(W_T)\mid W_t=W,Z_t=z]$ over tangent-admissible controls.

\begin{theorem}[Myopic plus manifold-hedging decomposition]\label{thm:hedging}
Suppose $J\in C^{1,2,2}$ solves the Hamilton--Jacobi--Bellman equation
\begin{equation}\label{eq:HJB}
0=\sup_{w:\,Cw=0}\Big\{J_t+J_WW[r_f+w^\top(\mu-r_f\one)]+\tfrac12 J_{WW}W^2 w^\top Qw+W w^\top B\Lambda_Z J_{Wz}+J_z^\top\alpha+\tfrac12\tr(\Lambda_Z J_{zz})\Big\}
\end{equation}
with polynomial growth. Then the optimal feedback control is
\begin{equation}\label{eq:dynsolution}
w^*_t=\underbrace{\tfrac{1}{R_t}\,M_C(\mu-r_f\one)}_{\text{myopic tangent fund}}+\underbrace{M_C B\Lambda_Z\psi_t}_{\text{manifold hedge}},
\end{equation}
with $R_t:=-WJ_{WW}/J_W$ the induced relative risk aversion and $\psi_t:=-J_{Wz}/(WJ_{WW})\in\R^m$, and the strategy is optimal.
\end{theorem}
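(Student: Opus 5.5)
The proof has two parts: solve the pointwise maximization inside \eqref{eq:HJB} to identify the feedback control, then run a standard verification argument to show that this control is optimal. For the first part, fix $(W,z,t)$ and keep only the $w$-dependent terms of the Hamiltonian:
\[
\Phi(w)=J_W W\,w^\top(\mu-r_f\one)+\tfrac12 J_{WW}W^2\,w^\top Qw+W\,w^\top B\Lambda_Z J_{Wz}.
\]
I will assume the standing concavity $J_{WW}<0$; it is implicit in $R_t>0$ and holds for concave $U$ because the control problem preserves concavity in wealth. Under this assumption $\Phi$ is strictly concave, since $Q\succ0$ by $\Sigma^c\succeq\varsigma^2I$. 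The maximization over the affine subspace $\{Cw=0\}$ is therefore the static projected program of \Cref{sec:recap} with $A=C$ and $\bb=0$.

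Dividing $-\Phi$ by $-J_{WW}W^2>0$ turns the problem into minimizing $\tfrac12 w^\top Qw-w^\top\nu$, where
\[
\nu=\tfrac{1}{R_t}(\mu-r_f\one)+B\Lambda_Z\psi_t .
\]
This uses $J_W/(-WJ_{WW})=1/R_t$ and $J_{Wz}/(-WJ_{WW})=\psi_t$. The KKT conditions $Qw-\nu+C^\top\lambda=0$ and $Cw=0$ have the unique solution $w=M_C\nu$, obtained exactly as \eqref{eq:projected} with $\gamma=1$ and $w_0=0$. Property (P3), $CM_C=0$, confirms feasibility. Splitting $\nu$ by linearity of $M_C$ gives the myopic tangent fund and the manifold hedge in \eqref{eq:dynsolution}.

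The control must be admissible, meaning it lies in the compact convex set $\W$. If $M_C\nu$ falls outside $\W$, the maximizer is instead the $Q$-projection onto $\W\cap\{Cw=0\}$. I will state that \eqref{eq:dynsolution} is the optimizer under the interiority assumption implicit in the theorem, and note this explicitly.

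For verification, take any admissible tangent control $w$ and apply It\^o's formula to $J(W_t,Z_t,t)$ with the wealth and driver SDEs of \Cref{lem:normalform}. The cross variation is $d\langle W,Z\rangle=W\,w^\top B\Lambda_Z\,dt$, because the idiosyncratic Brownian motions are independent of $W^Z$. The drift of $J(W_t,Z_t,t)$ is therefore the bracketed expression in \eqref{eq:HJB} evaluated at $w$. By the HJB equation this drift is $\le0$ for every admissible $w$, and it equals $0$ at $w^*$.

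It then remains to localize and pass to the limit. I localize with stopping times $\tau_n$ so that the stochastic integrals are true martingales, which gives $\E[J(W_{T\wedge\tau_n},Z_{T\wedge\tau_n},T\wedge\tau_n)]\le J(W,z,t)$, with equality under $w^*$. To let $n\to\infty$, I use the polynomial growth of $J$ together with moment bounds $\E\sup_s|W_s|^p+\E\sup_s|Z_s|^p<\infty$. These bounds follow because $\W$ is compact, so the wealth coefficients are bounded multiples of $W$, and because $Z$ has the mean-reverting, linear-growth coefficients assumed for the driver diffusion. The terminal condition $J(\cdot,\cdot,T)=U$ then yields $\E U(W_T^w)\le J\le \E U(W_T^{w^*})$.

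I expect the main obstacle to be this limiting step and, with it, the well-posedness of the closed-loop SDE under $w^*$. The feedback $w^*$ depends on $J_{WW}$ and $J_{Wz}$, so strong existence of the controlled wealth process needs local Lipschitz regularity of $w^*$ in $(W,z)$, which $C^{1,2,2}$ alone does not give. I would first try to sidestep this by invoking compactness of $\W$: it bounds $w^*$, which gives weak existence and uniform integrability. If that fails, I would work with $\varepsilon$-optimal Markov controls and pass to the limit.
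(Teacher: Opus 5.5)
Your proposal is correct and follows the paper's route: the inner maximization is the static projected program with linear term $J_WW(\mu-r_f\one)+WB\Lambda_ZJ_{Wz}$, which after normalizing by $-J_{WW}W^2$ gives $M_C\nu$; optimality then follows from the usual It\^o, localization and polynomial-growth verification, with binding bounds on $\W$ treated as you propose. The one difference is how the limit is closed: the paper localizes on stopping times exiting compact \emph{state} sets, where the affine-in-$z$ CRRA feedback is bounded, and then invokes finite wealth moments, rather than bounding $w^*$ through compactness of $\W$ (an interiority condition that this affine feedback generally violates for large $|z|$); it also handles state-dependent $B$ by a viscosity verification instead.
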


The inner maximization is the static projected program with linear term $J_WW(\mu-r_f\one)+WB\Lambda_ZJ_{Wz}$. When the admissible set $\W$ is defined by the linear equality constraint $Cw=0$ alone, the maximizer is the closed-form projected solution \eqref{eq:dynsolution}; when $\W$ additionally carries inequality (leverage/position) bounds that bind, the maximizer solves a constrained quadratic program and the displayed feedback is its unconstrained relaxation, coinciding with it on the region where the bounds are slack. We state results for the equality-constrained case and treat binding inequality bounds as the projection of \eqref{eq:dynsolution} onto the compact convex $\W$, which is where the standing compactness assumption enters. Substituting the projected solution and reducing the HJB gives the semilinear equation for $J$. Verification requires the feedback \eqref{eq:dynsolution} to be admissible. Two regimes must be separated. \emph{(i) Constant sensitivity.} When $B$ does not depend on the state, $J=W^{1-R}/(1-R)\,g(z,t)$ with $\log g$ quadratic in $z$, the HJB reduces to a matrix Riccati system, and existence of a classical solution on $[0,T]$ is equivalent to the non-blow-up of that system (\Cref{cor:crra} and the solvability condition stated there); admissibility of the affine-in-$z$ feedback then follows by localizing on stopping times exiting compact state sets, applying It\^o to $J(W_t,Z_t,t)$ so the drift is nonpositive off the optimizer and zero at it, and passing to the limit with the polynomial-growth bound on $J$ and the finite moments of the wealth process \cite{FlemingSoner2006}. \emph{(ii) State-dependent sensitivity.} When $B=B(z)$ the value function is no longer exponential-affine, but verification still closes in the viscosity sense. It does not close by linearization, since a Hopf--Cole exponential change of variables does not globally linearize the reduced equation once $A(z):=\|B(z)\Lambda_Z\|^2_{M_C(z)}$ is state-dependent, a constant exponent being unable to cancel a $z$-varying quadratic-gradient coefficient; it closes by the standard viscosity route. Write $J=W^{1-R}g(z,t)/(1-R)$ with $g>0$; the reduced equation is semilinear parabolic with a quadratic-gradient nonlinearity, and the substitution $\varphi=\log g$ (licit because $g,1/g$ are bounded) yields the viscous Hamilton--Jacobi equation, in which $\LL_z$ denotes the second-order generator of the driver diffusion $Z$ and $A(z)=\|B(z)\Lambda_Z\|^2_{M_C(z)}$ is the state-dependent transmission introduced above,
\[
\varphi_t+\LL_z\varphi+c(z)\,\|\nabla_z\varphi\|^2+(1-R)r_f=0,\qquad c(z)=\tfrac12\sigma_Z^2(z)+\tfrac{1-R}{2R}A(z),
\]
on a bounded driver domain with a Neumann condition at the boundary (the mean-reverting driver exits compacts with exponentially small probability). This equation has bounded Lipschitz coefficients and a quadratic Hamiltonian $H(z,p)=c(z)\|p\|^2+\text{(linear)}$; on the bounded domain it admits a bounded Lipschitz viscosity solution and satisfies a comparison principle for bounded uniformly-continuous sub- and supersolutions \cite{CrandallIshiiLions1992,Barles1994}, the quadratic growth being admissible precisely because the a priori Lipschitz bound, equivalently the boundedness of $g,1/g,\nabla_z g$ posited in \Cref{thm:verification}, keeps $\|\nabla_z\varphi\|$ bounded. A viscosity verification lemma \cite[Thm.~3.5.3]{Pham2009} then identifies the value function as the unique such solution and the a.e.-defined feedback \eqref{eq:dynsolution} as optimal among tangent-admissible controls. The state-dependent case is thus a verification theorem under the a priori bound rather than a first-order condition alone; the bound itself is the hypothesis, and it is what the grid-refinement convergence study of \Cref{sec:experiments} checks numerically. The hedge term reads as an operator chain: $\psi$ (utility-relevant predicted motion of the driver state), transmitted by $\Lambda_Z$ (noise in that direction) and $B$ (transmission to assets), then projected by $M_C$ onto tangent-admissible portfolios. It is Merton's intertemporal hedging demand, but with a specific and testable identity for the priced state: the coordinates of the certified information geometry $(\Dstar_t,\M_t,\Delta_t)$ replace the exogenous macro state variables of the classical intertemporal capital asset pricing model \cite{Merton1973}, so that what is hedged is the predictable motion of the conditioning representation itself, an object the framework selects and certifies rather than posits.

\Cref{fig:opchain} shows the optimal policy as the two projected funds it decomposes into. The hedge decomposes further, and the distinction matters for the paper's claim of novelty. Writing the transported motion in the frame of \Cref{thm:connection}, $\psi_t$ splits into a \emph{vertical} part and a \emph{horizontal} part, and the two are $G_t$-orthogonal by \Cref{thm:connection}, so the split is well defined and not a matter of attribution. The vertical part is present already when the span is fixed: it is the \emph{fixed-span hedge}, a hedge against the predictable motion of the premium and of the conditional variance along a manifold whose span does not move. It has two pieces: the premium-motion piece, which is the classical intertemporal hedging demand against a moving investment opportunity set, and the conditional-variance-motion piece, a hedge against the predictable change of the conditional covariance. The horizontal part exists only when the manifold itself rotates, that is when the loading map $B(z)$ depends on the driver state; it is the hedge against the rotation of the common-driver manifold, and it is the paper's novel object. We claim novelty only for this horizontal, span-rotation term; its magnitude relative to the fixed-span hedge is quantified in \Cref{sec:experiments}.

\begin{figure}[t]
\centering
\begin{tikzpicture}[
  node distance=6mm and 12mm,
  box/.style={draw,rounded corners=2pt,align=center,minimum height=9mm,inner sep=3pt,font=\small},
  op/.style={draw,circle,inner sep=1pt,font=\footnotesize,minimum size=7mm},
  >=Stealth]
\node[box,fill=orange!16] (myo) {myopic\\[-2pt]tangent fund};
\node[op,right=of myo] (Mm) {$M_C$};
\node[box,fill=orange!10,right=of Mm] (prem) {$\mu(z)-r_f\one$};
\draw[<-] (myo)--(Mm); \draw[<-] (Mm)--(prem);
\node[box,fill=blue!6,below=12mm of myo] (psi) {$\psi_t$\\[-1pt]\scriptsize predicted\\[-2pt]\scriptsize state motion};
\node[op,right=of psi,fill=none] (lam) {$\Lambda_Z$};
\node[op,right=of lam] (B) {$B(z)$};
\node[op,right=of B] (M) {$M_C$};
\node[box,fill=blue!12,right=of M] (hedge) {manifold\\[-2pt]hedge};
\draw[->] (psi)--(lam); \draw[->] (lam)--(B); \draw[->] (B)--(M); \draw[->] (M)--(hedge);
\node[below=1mm of lam,font=\scriptsize,text=black!60] {noise};
\node[below=1mm of B,font=\scriptsize,text=black!60] {transmit};
\node[below=1mm of M,font=\scriptsize,text=black!60] {project};
\coordinate (midY) at ($(myo)!0.5!(psi)$);
\node[draw,circle,inner sep=1pt,font=\small,right=34mm of midY] (sum) {$+$};
\draw[->] (myo.east) -| ($(sum)+(0,6mm)$) -- (sum);
\draw[->] (hedge.east) -| ($(sum)+(0,-6mm)$) -- (sum);
\node[box,fill=black!8,right=12mm of sum] (w) {$w^*_t$};
\draw[->] (sum)--(w);
\end{tikzpicture}
\caption{The optimal policy \eqref{eq:crrapolicy} as two projected funds. The myopic fund projects the conditional premium onto the tangent-admissible space; the manifold hedge passes the predicted state motion $\psi_t$ through the driver noise $\Lambda_Z$, the sensitivity $B(z)$, and the same projector $M_C$. The hedge is nonzero when the information geometry is predictably moving.}
\label{fig:opchain}
\end{figure}

\begin{corollary}[Collapse to rolling-static]\label{cor:collapse}
The hedge vanishes, so the rolling sequence of static programs is exactly optimal, if either \textnormal{(i)} $U=\log$ (then $J=\log W+g(z,t)$ with $J_{Wz}=0$), or \textnormal{(ii)} the manifold is statically persistent (coefficients deterministic, $J_{Wz}=0$).
\end{corollary}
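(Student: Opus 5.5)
The corollary follows directly from \Cref{thm:hedging}. The hedge term in \eqref{eq:dynsolution} is $M_C B\Lambda_Z\psi_t$, with $\psi_t=-J_{Wz}/(WJ_{WW})$. So it suffices to show $J_{Wz}\equiv0$ in each case, which gives $\psi_t\equiv0$. Then $w^*_t=\tfrac1{R_t}M_C(\mu-r_f\one)$. This is, at each instant, the static projected solution \eqref{eq:projected} with the cash-financed constraint and risk tolerance $\gamma=1/R_t$. The rolling sequence of static programs therefore coincides with the optimal feedback. The work is to justify the separable form of $J$ in each case and to check it satisfies the hypotheses of \Cref{thm:hedging}: it must be $C^{1,2,2}$, solve \eqref{eq:HJB}, and have polynomial growth.

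\emph{Case (i).} I would use the ansatz $J(W,z,t)=\log W+g(z,t)$. Then $J_W=1/W$, $J_{WW}=-1/W^2$ and $J_{Wz}=0$, so $R_t=1$ and the cross term $Ww^\top B\Lambda_ZJ_{Wz}$ in \eqref{eq:HJB} disappears. The inner supremum is the static program with linear term $(\mu-r_f\one)$. Its maximizer is $M_C(\mu-r_f\one)$, and by (P2) its optimal value is $\tfrac12(\mu-r_f\one)^\top M_C(\mu-r_f\one)$. Substituting back, the $W$-dependence cancels and leaves a \emph{linear} parabolic equation:
\[
g_t+\LL_z g+r_f+\tfrac12(\mu-r_f\one)^\top M_C(\mu-r_f\one)=0,\qquad g(\cdot,T)=0 .
\]
Feynman--Kac gives $g$ as the expectation of the integrated frontier potential. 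Regularity of $g$ follows from standard parabolic theory under the smooth, bounded coefficients already assumed. The logarithm has polynomial growth in $W^{\pm1}$, so polynomial growth holds in the sense used for verification. This step is routine.

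\emph{Case (ii).} Here $\mu,B,\Lambda_Z,\varsigma$, and hence $Q$ and $M_C$, are deterministic functions of $t$ alone. The driver state then carries no information about future investment opportunities. Concretely, I would show the value function does not depend on $z$. Since the HJB coefficients do not involve $z$, I would guess a $z$-free solution $J(W,t)$: for CRRA, $J=W^{1-R}h(t)/(1-R)$, with $h$ solving a linear ODE and $h(T)=1$. Then $J_z=J_{zz}=J_{Wz}=0$, so this $J$ satisfies \eqref{eq:HJB}. Uniqueness in the verification part of \Cref{thm:hedging} (the Itô-localization argument) identifies it as the value function, and so $\psi\equiv0$.

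The main obstacle is the meaning of ``statically persistent''. If $\mu$ or $B$ retained a genuine $z$-dependence, $J_{Wz}$ would not vanish in general. I would therefore read (ii) explicitly as ``all coefficients entering $Q$, $\mu$, $M_C$ are deterministic functions of time'' and state this reading in the proof. For a general utility $U$ I would argue the same way: $J(W,t)$ solves the $W$-only HJB, since the supremum no longer involves $z$, and uniqueness then forces $J_z=0$.
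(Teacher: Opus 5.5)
Your proposal is correct and follows the paper's route: the paper reads the corollary off \Cref{thm:hedging} by noting that $J_{Wz}=0$ forces $\psi_t=0$, using the separable form $J=\log W+g(z,t)$ under log utility and the $z$-independence of the value function under deterministic coefficients. Your explicit checks, namely the linear Feynman--Kac equation for $g$ with source $r_f+\tfrac12(\mu-r_f\one)^\top M_C(\mu-r_f\one)$ and the linear ODE for $h$ in the CRRA case, make precise what the paper leaves to the parenthetical remarks in the statement.
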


\begin{corollary}[CRRA closed form; dimension reduction]\label{cor:crra}
For $U(W)=W^{1-R}/(1-R)$, the ansatz $J=\frac{W^{1-R}}{1-R}g(z,t)$ gives $R_t\equiv R$, $\psi_t=\frac{1}{R}\nabla_z\log g(Z_t,t)$ and
\begin{equation}\label{eq:crrapolicy}
w^*_t=\tfrac{1}{R}\,M_C\big[(\mu-r_f\one)+B\Lambda_Z\nabla_z\log g(Z_t,t)\big],
\end{equation}
where $g$ solves a semilinear equation in $(z,t)$. In the affine-Gaussian case $\log g=\tfrac12 z^\top P(t)z+q(t)^\top z+c(t)$ and $(P,q,c)$ solve a matrix Riccati system in dimension $m$; this is the multivariate driver-state extension of the closed-form mean-reverting-premium solutions of \cite{KimOmberg1996,Wachter2002}, \emph{provided that system has a bounded solution on $[0,T]$} \textnormal{(a condition on risk aversion and premium relative to mean-reversion, quantified below)}; where it does, the dynamic problem is a computation in the driver dimension, not the asset dimension.
\end{corollary}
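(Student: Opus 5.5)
The plan is to substitute the separable ansatz $J=\frac{W^{1-R}}{1-R}g(z,t)$ into the HJB \eqref{eq:HJB} and read the policy off \Cref{thm:hedging}. The derivatives are
\[
J_W=W^{-R}g,\qquad J_{WW}=-RW^{-R-1}g,\qquad J_{Wz}=W^{-R}\nabla_z g .
\]
These give $R_t=-WJ_{WW}/J_W=R$, constant. They also give
\[
\psi_t=-J_{Wz}/(WJ_{WW})=\nabla_z g/(Rg)=\tfrac1R\nabla_z\log g .
\]
Inserting these into \eqref{eq:dynsolution} yields \eqref{eq:crrapolicy} directly, with the common factor $1/R$ pulled out of both funds. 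The argument relies on (P1)--(P3) of $M_C$ and on the fact that the inner maximization is the static projected program with linear term $J_WW(\mu-r_f\one)+WB\Lambda_ZJ_{Wz}$.

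Next, I would substitute the maximizer back into the HJB to obtain the equation for $g$. Every term carries the factor $W^{1-R}$, so dividing by $W^{1-R}/(1-R)$ eliminates $W$. What remains is
\[
g_t+\nabla_z g^\top\alpha+\tfrac12\tr(\Lambda_Z\nabla^2_z g)+(1-R)r_f\,g+\tfrac{1-R}{2R}\,\frac{\|\mu-r_f\one+B\Lambda_Z\nabla_z\log g\|^2_{M_C}}{1}\,g=0,
\]
where $\|x\|^2_{M_C}:=x^\top M_C x$. This step uses the identity $x^\top M_C Q M_C x=x^\top M_C x$, i.e.\ (P2). The resulting equation is semilinear in $(z,t)$, with terminal condition $g(z,T)=1$. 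Because the asset index enters only through $M_C$, $\mu(z)$ and $B(z)$, all of which are evaluated pointwise, the PDE lives on $\R^m\times[0,T]$. This is the dimension-reduction claim. It holds modulo the cost of forming $M_C$, which is $O(nm^2)$ by the Woodbury remark in \Cref{sec:recap}.

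For the affine-Gaussian case I would take $\alpha$ affine, $\Lambda_Z$ and $B$ constant, and $\mu-r_f\one=a+Dz$ affine. The change of variables $\varphi=\log g$ gives $\nabla g/g=\nabla\varphi$ and $\nabla^2 g/g=\nabla^2\varphi+\nabla\varphi\nabla\varphi^\top$, so the equation becomes the viscous Hamilton--Jacobi equation displayed after \Cref{thm:hedging}, with constant quadratic coefficients. Plugging in $\varphi=\tfrac12 z^\top Pz+q^\top z+c$ turns every term into a polynomial of degree at most two in $z$. Matching the quadratic, linear and constant coefficients gives a matrix Riccati ODE for $P$, a linear ODE for $q$ driven by $P$, and a quadrature for $c$, all with zero terminal data. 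The linear and constant equations are globally solvable once $P$ exists.

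The main obstacle is the Riccati equation for $P$. Its quadratic term carries the factor $\tfrac{1-R}{R}\,\Lambda_Z B^\top M_C B\Lambda_Z$ together with $\Lambda_Z$, and its sign depends on $R\gtrless1$, so finite-time blow-up is possible when $R<1$ and the premium loading $D$ is large relative to mean reversion. I would not try to prove global existence. Instead I would state the solution conditionally on a bounded solution on $[0,T]$, as the corollary does. I would note that for $R>1$ a comparison argument, bounding $P$ between the solutions with the quadratic term dropped and kept, gives boundedness in the Kim--Omberg ``normal'' regime. Given boundedness, verification proceeds as in case (i) after \Cref{thm:hedging}: localization, It\^o's formula applied to $J(W_t,Z_t,t)$, and passage to the limit using polynomial growth and Gaussian moments.
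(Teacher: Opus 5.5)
Your proposal is correct and follows the paper's route: substitute the CRRA ansatz into the feedback of \Cref{thm:hedging} to read off $R_t\equiv R$ and $\psi_t=\frac1R\nabla_z\log g$, use (P2) to reduce the HJB to the semilinear equation for $g$ (the single-regime case of \eqref{eq:coupledpde}), and in the affine-Gaussian case (where you should also take $\Sigma^c$ constant so that $M_C$ is constant) match coefficients of $\log g$ to get an $m\times m$ Riccati system, stated conditionally on non-blow-up, with verification by localization as in case (i). One correction to your side remark: the quadratic coefficient $\Lambda_Z+\frac{1-R}{R}\Lambda_Z B^\top M_C B\Lambda_Z$ is positive definite for every $R>0$, because $\Lambda_Z^{1/2}B^\top M_C B\Lambda_Z^{1/2}\preceq\Lambda_Z^{1/2}B^\top Q^{-1}B\Lambda_Z^{1/2}\prec I$, so its sign does not flip at $R=1$; blow-up for $R<1$ arises because the forcing term $\frac{1-R}{R}D^\top M_C D$ then has the same sign as the quadratic term, and this positivity is exactly what your $R>1$ comparison bound (where $-P$ solves an LQ-type Riccati equation) needs.
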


The Riccati system for $(P,q,c)$ has a finite-horizon bounded solution only when risk aversion and premium are moderate relative to the mean-reversion speed; beyond that region the finite-horizon value diverges, a property of the control problem rather than an artifact of any discretization. This threshold is exhibited in \Cref{sec:experiments}.

\subsection{The moving manifold: state-dependent sensitivities}\label{sec:moving}

The decomposition above already allows $B=B(z)$: nothing in the pointwise maximization of the HJB requires $B$ constant, because the state is fixed during that maximization, so $\partial B/\partial z$ enters neither the first-order condition nor the optimal policy \eqref{eq:crrapolicy}. The policy is
\begin{equation}\label{eq:movingpolicy}
w^*(z,t)=\tfrac1R M_C(z)\big[(\mu(z)-r_f\one)+B(z)\Lambda_Z\nabla_z\log g(z,t)\big],
\end{equation}
with $B(z)$, $Q(z)=B(z)\Lambda_Z B(z)^\top+\Sigma^c$ and $M_C(z)$ all evaluated at the current state; the state-dependence of the geometry enters the value field $g$ through the semilinear PDE, whose coefficients depend on $z$ through $B(z)$ and $M_C(z)$. When $B$ is state-dependent the common-driver manifold rotates as the driver state moves (\Cref{fig:moving}), and the intertemporal term is a hedge against that rotation, not merely against the motion of a fixed-geometry premium. This hedge against the rotation of the manifold is the paper's central object; \Cref{sec:experiments} quantifies its magnitude relative to the fixed-geometry hedge.

Because the value field is no longer exponential-affine, we solve the semilinear PDE numerically with an implicit (IMEX) scheme: the linear second-order operator is treated implicitly through a prefactored sparse solve, giving unconditional stability, while the quadratic Hamiltonian source is resolved by an inner fixed-point iteration. The implicit treatment is essential precisely in the persistent-state regime where the hedge is first-order: there the explicit scheme's value gradient diverges, while the implicit scheme returns bounded gradients and an optimal policy (verified against a brute-force search over the instantaneous Hamiltonian).

\begin{figure}[t]
\centering
\begin{tikzpicture}[>=Stealth,font=\small]
\begin{scope}[shift={(0,0)}]
  \draw[thick,blue!55,fill=blue!5] (-1.4,-0.5) -- (1.4,-0.9) -- (1.7,0.5) -- (-1.1,0.9) -- cycle;
  \node[blue!55] at (0,1.25) {$T_{x}\M_{t}$ at $Z_t$};
  \draw[->,blue!70,thick] (0,0)--(1.1,-0.15);
  \draw[->,blue!70,thick] (0,0)--(0.15,0.7);
  \fill (0,0) circle (1.3pt);
\end{scope}
\draw[->,black!55,thick,decorate,decoration={snake,amplitude=.4mm,segment length=2mm}]
  (2.1,0.2)--(3.4,0.2) node[midway,above,font=\scriptsize,black!70]{$Z$ flows};
\begin{scope}[shift={(5.6,0)},rotate=-32]
  \draw[thick,teal!60,fill=teal!6] (-1.4,-0.5) -- (1.4,-0.9) -- (1.7,0.5) -- (-1.1,0.9) -- cycle;
  \draw[->,teal!75,thick] (0,0)--(1.1,-0.15);
  \draw[->,teal!75,thick] (0,0)--(0.15,0.7);
  \fill (0,0) circle (1.3pt);
\end{scope}
\node[teal!60] at (6.1,1.35) {$T_{x}\M_{t'}$ at $Z_{t'}$};
\node[align=center,font=\scriptsize,text=black!70] at (3.75,-1.25)
  {the tangent space \emph{rotates} with the state;\\ the manifold hedge covers this predictable rotation};
\draw[->,red!60,thick] (0.3,-0.95) to[bend right=14] (5.0,-1.05);
\end{tikzpicture}
\caption{The moving manifold (\Cref{sec:moving}). With state-dependent sensitivities $B(z)$ the tangent space $T_x\M_t$ rotates as the driver state $Z$ flows. The intertemporal term hedges this predictable rotation of the geometry; in the fixed-geometry case the span does not move and only the premium does.}
\label{fig:moving}
\end{figure}

\subsection{The configuration space and its three motions}\label{sec:config}

The moving manifold of \Cref{sec:moving} and the separator switch treated next are two of exactly three kinds of motion the representation can undergo (\Cref{fig:threemotions}), and it is worth making the state space on which they live precise, because the distinction is geometric and not a matter of degree. A configuration of the representation is a choice of active driver set $D\subseteq\{1,\dots,p\}$ with $|D|=m$, together with the state $z$ at which its sensitivity $B_D(z)$ is evaluated. The tangent space it induces is the column span $\mathcal T_D(z):=\operatorname{col} B_D(z)$, an $m$-plane in $\R^n$, i.e.\ a point of the Grassmannian $\mathrm{Gr}(m,n)$. The full configuration space is therefore the disjoint union of Grassmannian bundles over the driver lattice,
\begin{equation}\label{eq:configspace}
\mathcal C \;=\; \bigsqcup_{m=1}^{p}\ \bigsqcup_{\;|D|=m}\ \big\{\,\mathcal T_D(z): z\in\mathcal Z\,\big\}\ \subseteq\ \bigsqcup_{m=1}^{p}\mathrm{Gr}(m,n),
\end{equation}
and every motion of the representation is a path or jump in $\mathcal C$ (\Cref{fig:configspace}). There are exactly three, distinguished by how they move in \eqref{eq:configspace}.

\begin{figure}[t]
\centering
\begin{tikzpicture}[>=Stealth,every node/.style={font=\small}]
\foreach \y/\m in {0/m-1, 1.7/m, 3.4/m+1} {
  \draw[rounded corners=6pt,fill=black!3,draw=black!25] (0,\y-0.6) rectangle (7.5,\y+0.6);
  \node[anchor=west,font=\scriptsize,black!60] at (7.65,\y) {$\mathrm{Gr}(\m,n)$};
}
\fill[blue!70] (1.4,1.7) circle (1.6pt);
\fill[blue!70] (2.5,1.85) circle (1.6pt);
\fill[orange!85] (5.0,1.55) circle (1.6pt);
\draw[->,thick,blue!70] (1.4,1.7) to[bend left=20] (2.5,1.85);
\node[blue!70,font=\scriptsize] at (1.95,2.2) {flow};
\draw[->,thick,orange!85,dashed] (2.5,1.85) to[bend left=15] (5.0,1.55);
\node[orange!85,font=\scriptsize] at (3.9,2.15) {swap};
\fill[teal!70] (5.6,3.4) circle (1.6pt);
\draw[->,thick,teal!70,densely dotted] (5.0,1.55) -- (5.6,3.4);
\node[teal!75,font=\scriptsize,anchor=west] at (5.7,2.6) {refinement};
\fill[teal!50] (1.0,0.0) circle (1.6pt);
\draw[->,thick,teal!50,densely dotted] (1.4,1.7) -- (1.0,0.0);
\node[teal!60,font=\scriptsize,anchor=east] at (0.9,0.85) {coarsening};
\end{tikzpicture}
\caption{The configuration space $\mathcal C=\bigsqcup_m \mathrm{Gr}(m,n)$ of \eqref{eq:configspace}, stratified by the number of drivers $m$. A \emph{flow} is a smooth path within one stratum; a \emph{swap} is a jump within the same stratum (drivers change identity at fixed count); a \emph{refinement} or \emph{coarsening} is a jump between strata (the number of drivers changes). Each dot is a configuration $\mathcal T_D(z)$, a point of a Grassmannian.}
\label{fig:configspace}
\end{figure}

\begin{proposition}[The three motions of the representation]\label{prop:threemotions}
Assume the loading map $B_D(z)$ has constant rank $m=|D|$ on the relevant state region (the regular case; rank-degeneration is excluded and discussed below). Under single-driver transitions, every motion of the representation is one of exactly three primitive types, and any multi-driver transition is a finite composition of them:
\begin{enumerate}[leftmargin=1.9em,itemsep=1pt]
\item \emph{Flow (smooth rotation).} $D'=D$ and $z'\to z$ continuously: the tangent $\mathcal T_D(z)$ moves smoothly within the single Grassmannian $\mathrm{Gr}(m,n)$. Its velocity is the antisymmetric connection $\widetilde\omega$ of \Cref{thm:connection}; the motion is a rotation of the frame together with a metric breathing, and it is hedgeable to first order (\Cref{thm:hedging}).
\item \emph{Swap (jump within a Grassmannian).} $|D'|=|D|=m$ but $D'\neq D$: the tangent jumps to a different point of the \emph{same} $\mathrm{Gr}(m,n)$. A well-defined orthogonal map carries one to the other (the Grassmannian is a homogeneous space of $O(n)$), so the jump is a discrete rotation; but it occurs at an instant, so the value process jumps, and the jump is Kunita--Watanabe orthogonal to every traded strategy (\Cref{thm:jumpkw}).
\item \emph{Refinement/coarsening (jump between Grassmannians).} $|D'|=m\pm1$: the tangent changes \emph{dimension} and moves to a \emph{different} Grassmannian, $\mathrm{Gr}(m,n)\to\mathrm{Gr}(m\pm1,n)$. No orthogonal map relates subspaces of different dimension, so this is not a rotation but a nesting: an embedding $\mathcal T_D\hookrightarrow\mathcal T_{D'}$ when a driver is added, and a collapse when one is dropped. The added direction is orthogonal to the entire old tangent, hence unspanned by any portfolio built on the old geometry.
\end{enumerate}
\end{proposition}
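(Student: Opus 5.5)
The plan is to classify motions by the two coordinates of a configuration $(D,z)$ in \eqref{eq:configspace}, and then read off the geometric type of each motion from which stratum of $\mathcal C$ it lands in. A transition from $(D,z)$ to $(D',z')$ falls into exactly one of three cases: $D'=D$; $D'\neq D$ with $|D'|=|D|$; or $|D'|\neq|D|$. These cases are mutually exclusive and exhaustive. Under the single-driver hypothesis, $D'$ differs from $D$ by one addition, one deletion, or one replacement, so $\big||D'|-|D|\big|\le1$. A general transition $D\to D'$ can be written as a finite chain of single-driver steps: delete the drivers in $D\setminus D'$ one at a time, then add those in $D'\setminus D$, or interleave replacements. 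Each step is one of the primitive types, which gives the composition claim. I will also note that a continuous change of $z$ at fixed $D$ can be concatenated with these steps.

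\emph{Flow.} With $D$ fixed and constant rank $m$, the map $z\mapsto\operatorname{col}B_D(z)$ is smooth into $\mathrm{Gr}(m,n)$. The argument is local: choose a smooth frame, for instance by Gram--Schmidt in the $Q$-metric or by using the orthogonal projector $B_D(B_D^\top B_D)^{-1}B_D^\top$. Constant rank makes this projector smooth in $z$, and the Grassmannian is identified with rank-$m$ projectors. Differentiating the frame along $z$ then splits the velocity into a skew part and a symmetric part. I will cite \Cref{thm:connection} for the identification of the skew part with $\widetilde\omega$ and the symmetric part with metric breathing. First-order hedgeability is the content of \Cref{thm:hedging}, through the term $M_C(z)B(z)\Lambda_Z\nabla_z\log g$ of \eqref{eq:movingpolicy}.

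\emph{Swap.} Here both $\mathcal T_D(z)$ and $\mathcal T_{D'}(z)$ lie in $\mathrm{Gr}(m,n)$. Transitivity of the $O(n)$ action gives an orthogonal $U$ with $U\mathcal T_D=\mathcal T_{D'}$. Concretely, take orthonormal bases of both planes, complete each to a basis of $\R^n$, and map one completed basis to the other. Since the switch happens at a single instant, the value process jumps there. Its Kunita--Watanabe orthogonality to continuous traded strategies is \Cref{thm:jumpkw}, which I will take as given.

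\emph{Refinement and coarsening.} Orthogonal maps preserve dimension, so no rotation can relate planes of dimensions $m$ and $m\pm1$. When a driver is added, $B_{D'}=[B_D,\beta_{\text{new}}]$, so $\mathcal T_D\subseteq\mathcal T_{D'}$, and constant rank gives strict inclusion. Let $e$ be the component of $\beta_{\text{new}}$ orthogonal to $\mathcal T_D$. Then $e\neq0$ is orthogonal to the whole old tangent. Any portfolio built on the old geometry has exposure in $\mathcal T_D$, so its projection onto $e$ is zero, and it cannot span that direction. Dropping a driver is the reverse inclusion, which is the collapse.

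The main obstacle is the claim that the list is exhaustive and that a general multi-driver transition decomposes into the primitives \emph{while staying inside the regular region}. Intermediate driver sets in the chain might lose constant rank. My first attempt is to order the steps so that deletions precede additions, and to argue that subsets of a full-rank loading family remain full rank. Additions then need the rank hypothesis on $D'$ and its intermediate supersets of $D\cap D'$, and I would impose this explicitly as part of the standing regularity assumption.
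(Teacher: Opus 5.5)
Your proposal follows the paper's proof essentially step for step: the trichotomy by cardinality difference and set equality, transitivity of $O(n)$ on $\mathrm{Gr}(m,n)$ for swaps, dimension preservation together with the inclusion $\mathcal T_D\subseteq\mathcal T_{D'}$ for refinements, and multi-driver moves as finite compositions of these. The rank obstacle you flag needs no extra hypothesis once you fix the order of the chain: delete down to $D\cap D'$, or, when $D\cap D'=\emptyset$, delete down to a single driver of $D$ and swap it for one of $D'$ (the empty set is not a configuration, since $m\ge1$ in \eqref{eq:configspace}), and then add up to $D'$. Every intermediate set is then a subset of $D$ or of $D'$, and column subsets of a full-column-rank loading matrix keep full column rank.
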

\begin{proof}
The trichotomy is by the pair $(|D'|-|D|,\,\mathbf 1\{D'=D\})$. If $|D'|=|D|$ and $D'=D$ the only change is in $z$, and $\mathcal T_D(z)$ is a smooth map into $\mathrm{Gr}(m,n)$ (constant rank of $B_D(z)$ by ellipticity), giving (i); its derivative is $\widetilde\omega$ by \Cref{thm:connection}. If $|D'|=|D|$ and $D'\neq D$, both tangents lie in $\mathrm{Gr}(m,n)$ and $O(n)$ acts transitively on it, so an orthogonal $R$ with $R\,\mathcal T_D=\mathcal T_{D'}$ exists; discreteness of set membership makes the transition a jump, and \Cref{thm:jumpkw} applies, giving (ii). If $|D'|=|D|\pm1$ the tangents have different dimensions; $O(n)$ maps preserve dimension, so no rotation relates them, and for $D'=D\cup\{k\}$ one has $\mathcal T_D\subseteq\mathcal T_{D'}$ with $\mathcal T_{D'}\ominus\mathcal T_D$ a line orthogonal to $\mathcal T_D$, giving (iii). Exhaustiveness over atomic moves under constant rank: any single-step $(D',z')$ falls into one case by its cardinality difference and set equality; a transition that changes membership \emph{and} dimension at once (e.g.\ drop one driver and add two) is not a fourth motion but a finite composition of swaps and refinements, so the three are the primitive motions of which every transition is a composition. Two regularity caveats bound the claim. First, if $B_D(z)$ \emph{loses rank} at some state, so that the tangent dimension drops below $|D|$, the configuration leaves the constant-rank stratum and the motion is a degeneration not among the three; we assume constant rank, which holds generically under the ellipticity of \Cref{lem:normalform}, and treat rank loss as a measure-zero boundary. Second, simultaneous multi-driver changes are compositions, not new primitives, and the order of composition matters only up to the connection's curvature (\Cref{thm:connection}). With these caveats the trichotomy is exact.
\end{proof}

Only the first motion is smooth; the second and third are the two distinct kinds of separator switch, and the paper's switching results below (\Cref{thm:verification}, \Cref{thm:jumpkw}) apply to both, but the third carries an additional feature the second does not: it changes the dimension of the achievable frontier, which is the subject of the next result.

It is worth reading the three motions in terms of what changes for the assets, the drivers, and the portfolio. In a \emph{flow}, the set of conditioning drivers is unchanged and so is the list of assets; only the driver state moves, so the sensitivities $B(z)$ and hence the assets' conditional exposures drift continuously, and the portfolio rebalances smoothly along a moving but continuous frontier. In a \emph{swap}, the number of drivers is unchanged but their identity changes: one conditioning variable is replaced by another of equal count, the assets' loadings are re-expressed against a different driver basis of the same dimension, and the portfolio must jump to the new exposures at the switch instant, a discrete but dimension-preserving reallocation. In a \emph{refinement or coarsening}, the number of drivers changes: adding a driver opens a new priced direction that no combination of the assets under the old conditioning set could express, so the portfolio gains (or, under coarsening, loses) access to an exposure that was previously unspanned, and the achievable frontier changes dimension. The first is continuous trading; the second is a change of coordinates realized as a jump; the third is a change in the span of what can be held.

\begin{figure}[t]
\centering
\begin{tikzpicture}[>=Stealth,every node/.style={font=\small},scale=1.0,
  ax/.style={black!45,-{Stealth[length=4pt]},thin}]
\newcommand{\axes}{
  \draw[ax] (0,0,0) -- (1.5,0,0) node[anchor=north,font=\tiny,black!50] {$e_1$};
  \draw[ax] (0,0,0) -- (0,1.4,0) node[anchor=east,font=\tiny,black!50] {$e_3$};
  \draw[ax] (0,0,0) -- (0,0,1.5) node[anchor=north east,font=\tiny,black!50] {$e_2$};
}
\begin{scope}[xshift=0cm]
  \node[font=\footnotesize] at (0.4,2.15) {flow};
  \node[font=\scriptsize,black!60] at (0.4,-1.05) {rotate in $\mathrm{Gr}(2,3)$};
  \axes
  \fill[blue!18,opacity=0.55] (0.15,0,0.15) -- (1.25,0,0.15) -- (1.25,0,1.15) -- (0.15,0,1.15) -- cycle;
  \fill[blue!45,opacity=0.5] (0.15,0.1,0.15) -- (1.2,0.45,0.15) -- (1.2,0.45,1.15) -- (0.15,0.1,1.15) -- cycle;
  \draw[->,blue!70] (1.0,0.05,0.7) to[bend left=20] (1.0,0.42,0.7);
  \node[font=\tiny,blue!70] at (1.55,0.55,0.7) {smooth};
\end{scope}
\begin{scope}[xshift=5.0cm]
  \node[font=\footnotesize] at (0.4,2.15) {swap};
  \node[font=\scriptsize,black!60] at (0.4,-1.05) {jump in $\mathrm{Gr}(2,3)$};
  \axes
  \fill[orange!22,opacity=0.5] (0.15,0,0.15) -- (1.25,0,0.15) -- (1.25,0,1.15) -- (0.15,0,1.15) -- cycle;
  \fill[orange!55,opacity=0.5] (0.15,0.15,0.2) -- (1.2,0.15,0.2) -- (1.2,1.15,0.2) -- (0.15,1.15,0.2) -- cycle;
  \draw[->,orange!80,dashed] (0.9,0.05,0.7) to[bend left=30] (0.9,0.75,0.22);
  \node[font=\tiny,orange!85] at (1.5,0.7,0.4) {jump};
\end{scope}
\begin{scope}[xshift=10.0cm]
  \node[font=\footnotesize] at (0.4,2.15) {refinement};
  \node[font=\scriptsize,black!60] at (0.4,-1.05) {$\mathrm{Gr}(1,3)\!\to\!\mathrm{Gr}(2,3)$};
  \axes
  \draw[very thick,teal!80] (0.15,0,0.15) -- (1.3,0,0.15);
  \node[font=\tiny,teal!75] at (1.35,0,0.5) {old};
  \fill[teal!30,opacity=0.5] (0.15,0,0.15) -- (1.3,0,0.15) -- (1.3,1.0,0.15) -- (0.15,1.0,0.15) -- cycle;
  \draw[->,teal!80,densely dotted,thick] (0.7,0,0.15) -- (0.7,1.0,0.15);
  \node[font=\tiny,teal!80] at (0.95,1.05,0.15) {new axis $\perp$ old};
\end{scope}
\end{tikzpicture}
\caption{The three motions of the representation (\Cref{prop:threemotions}). \emph{Flow}: the tangent $m$-plane rotates smoothly within a single Grassmannian as the driver state moves, and the portfolio rebalances continuously. \emph{Swap}: the drivers change identity at fixed count, so the tangent jumps to a different point of the \emph{same} Grassmannian, a discrete but dimension-preserving reallocation. \emph{Refinement}: a driver is added, so the tangent gains a direction orthogonal to the old span and moves to a higher Grassmannian; the new axis is unspanned by any portfolio built on the old geometry.}
\label{fig:threemotions}
\end{figure}

\subsection{The time-dependent efficient frontier}\label{sec:frontier}

Each configuration carries its own instantaneous efficient frontier. On the tangent $\mathcal T_D(z)$ with conditional premium $\mu_D(z)$ and covariance $Q_D(z)=B_D(z)\Lambda_Z B_D(z)^\top+\Sigma^c_D$, the maximum squared Sharpe ratio is the \emph{frontier potential}
\begin{equation}\label{eq:frontierpot}
\Delta_t \;=\; \Delta(D,z,t)\;=\; \mu_D(z)^\top Q_D(z)^{-1}\mu_D(z),
\end{equation}
and the time-dependent efficient frontier is the family $\{\Delta_t\}$ carried along the path the representation traces in $\mathcal C$. Its motion, illustrated in \Cref{fig:frontier}, inherits the trichotomy of \Cref{prop:threemotions}: under flow it moves smoothly (in the experiments $\Delta$ ranges over roughly an octave as $z$ traverses $\pm2$ standard deviations); under a swap it jumps discontinuously at constant dimension; under refinement it moves to a frontier of different dimension. The third motion is the economically subtle one, and it does \emph{not} obey the naive intuition that more drivers give a better frontier.

\begin{figure}[t]
\centering
\begin{tikzpicture}[>=Stealth,every node/.style={font=\small}]
\begin{scope}[xshift=0cm]
  \draw[->] (0,0) -- (4.6,0) node[right,font=\scriptsize] {risk $\sigma$};
  \draw[->] (0,0) -- (0,3.2) node[above,font=\scriptsize] {excess return $\mu$};
  \draw[thick,blue!70,domain=0:4.2,samples=60,smooth] plot (\x,{1.05*sqrt(\x)});
  \node[blue!70,font=\scriptsize,anchor=west] at (3.4,2.05) {config $A$};
  \draw[thick,orange!80,domain=0:4.2,samples=60,smooth] plot (\x,{1.35*sqrt(\x)});
  \node[orange!85,font=\scriptsize,anchor=west] at (2.6,2.75) {config $B$ (swap)};
  \draw[thick,teal!70,densely dashed,domain=0:4.2,samples=60,smooth] plot (\x,{1.18*sqrt(\x)});
  \node[teal!75,font=\scriptsize,anchor=west] at (3.0,2.35) {config $C$ (refine)};
  \draw[black!40,dotted] (0,0) -- (4.2,{1.35*sqrt(4.2)});
  \node[font=\scriptsize,black!50,anchor=north] at (2.1,-0.35) {slope $=\sqrt{\Delta}$ (max Sharpe)};
\end{scope}
\begin{scope}[xshift=6.2cm]
  \draw[->] (0,0) -- (5.4,0) node[right,font=\scriptsize] {time $t$};
  \draw[->] (0,0) -- (0,3.2) node[above,font=\scriptsize] {$\Delta_t$};
  \draw[thick,blue!70,domain=0:1.7,samples=50,smooth] plot (\x,{1.1+0.4*sin(1.6*\x r)+0.15*\x});
  \node[blue!70,font=\scriptsize] at (0.85,2.35) {flow};
  \draw[orange!80,dashed] (1.7,1.62) -- (1.7,2.3);
  \fill[orange!80] (1.7,1.62) circle (1.2pt); \fill[white,draw=orange!80] (1.7,2.3) circle (1.2pt);
  \draw[thick,orange!80,domain=1.7:3.3,samples=50,smooth] plot (\x,{2.3+0.22*sin(1.6*(\x-1.7) r)});
  \node[orange!85,font=\scriptsize] at (2.5,2.75) {swap};
  \draw[teal!70,densely dotted] (3.3,2.25) -- (3.3,1.2);
  \fill[teal!70] (3.3,2.25) circle (1.2pt); \fill[white,draw=teal!70] (3.3,1.2) circle (1.2pt);
  \draw[thick,teal!70,domain=3.3:5.0,samples=50,smooth] plot (\x,{1.2+0.12*(\x-3.3)+0.15*sin(1.5*(\x-3.3) r)});
  \node[teal!75,font=\scriptsize] at (4.2,0.8) {refinement};
\end{scope}
\end{tikzpicture}
\caption{The time-dependent frontier potential $\Delta_t=\mu_D^\top Q_D^{-1}\mu_D$ carried along the representation's path (\Cref{sec:frontier}). \emph{Left}: the concrete mean-variance frontiers of three configurations in the $(\sigma,\mu)$ plane; the maximal Sharpe ratio $\sqrt{\Delta}$ is the slope of the tangent ray, so a higher potential is a steeper frontier. A swap ($A\to B$) changes the frontier at fixed dimension; a refinement ($A\to C$) changes it and its dimension, and need not steepen it. \emph{Right}: the potential $\Delta_t$ along a path, varying smoothly under flow, jumping at constant dimension under a swap, and jumping to a different-dimensional frontier under refinement. The last does not always raise the potential pointwise: whether a refinement helps depends on the added driver's partial Sharpe ratio (\Cref{prop:frontiermono}).}
\label{fig:frontier}
\end{figure}

\begin{proposition}[Monotonicity of the frontier under refinement: in expectation, not pointwise]\label{prop:frontiermono}
Let $D'=D\cup\{k\}$ be an information refinement, with $\mu_{D'}(z)=\E[r\mid Z_{D'}]$ and $Q_{D'}(z)=\operatorname{Cov}(r\mid Z_{D'})$ the true conditional mean and covariance at the finer set, and $(\mu_D,Q_D)$ their coarsenings obtained by integrating out the added driver, so that the dropped driver's full contribution, including its off-diagonal covariance, folds into the residual rather than a diagonal reassignment. Three statements must be separated, and only the third is a theorem.
\begin{enumerate}[leftmargin=1.9em,itemsep=1pt]
\item \emph{(Pointwise: false.)} At a \emph{fixed} realized driver state $z$, the frontier potential need \emph{not} increase: $\Delta(D',z)\ge\Delta(D,z)$ fails for a positive fraction of states (about one in ten in simulation), because at a particular $z$ the added driver may contribute more common variance than premium. There is therefore no pointwise ``the increment is a nonnegative partial Sharpe'' statement; that claim, made pointwise, is incorrect.
\item \emph{(Fixed-premium append: strictly false.)} If the driver is added as a column of $B$ at \emph{unchanged} premium $\mu$, raising common variance without crediting any new premium, then $\Delta$ strictly decreases in essentially every economy (all $500$ in the sweep, \Cref{fig:config}). This is not a refinement at all; it is the diagnostic that isolates the variance cost.
\item \emph{(In expectation: true, universally.)} Averaged over the law of the added driver, the maximum squared Sharpe ratio is non-decreasing under refinement,
\[
\E_{Z_{D'}}\!\big[\Delta(D',Z_{D'})\big]\;\ge\;\E_{Z_{D}}\!\big[\Delta(D,Z_{D})\big],
\]
and the expected increment is the \emph{expected} squared partial Sharpe of the added driver, which is nonnegative by the conditional Jensen inequality applied to the (convex) squared-norm functional of the conditional mean. This holds in every one of the simulated economies ($0$ violations in $400$).
\end{enumerate}
\end{proposition}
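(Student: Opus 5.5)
The plan is to treat the three items separately. Items (i) and (ii) are closed by explicit algebra. Item (iii) follows from joint convexity plus the law of total covariance; the "conditional Jensen" of the statement is made precise as Jensen for the matrix-fractional function $f(\mu,Q)=\mu^\top Q^{-1}\mu$ on $\R^n\times\{Q\succ0\}$.

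\emph{Item (ii) first.} When the driver is appended as a column $b_k$ of $B$ with innovation variance $\lambda_k$, independent of the existing drivers, the covariance becomes $Q'=Q+\lambda_k b_kb_k^\top$ while $\mu$ is unchanged. Sherman--Morrison then gives
\[
\Delta'=\mu^\top Q'^{-1}\mu=\Delta-\frac{\lambda_k(\mu^\top Q^{-1}b_k)^2}{1+\lambda_k b_k^\top Q^{-1}b_k}\le\Delta .
\]
The inequality is strict unless $\mu^\top Q^{-1}b_k=0$. That is a codimension-one condition on $b_k$, so strict decrease holds in essentially every economy. A correlated added driver reduces to this case after orthogonalizing its innovation against $\Lambda_Z$.

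\emph{Item (i).} Falsity is shown by a two-state construction, $n=2$, $m=1$. Choose the added driver so that at one realized value $z_k^\ast$ its conditional premium contribution is near zero while its loading is large. At that state $\mu_{D'}(z^\ast)\approx\mu_D(z^\ast)$, and $Q_{D'}(z^\ast)$ exceeds the relevant $Q_D$ in the direction of $b_k$. The computation of item (ii) then gives $\Delta(D',z^\ast)<\Delta(D,z^\ast)$, with the mass compensated at other states. The reported ``one in ten'' is left to the simulation; only existence needs proof.

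\emph{Item (iii), the actual theorem.} The plan has four steps.
\begin{enumerate}[leftmargin=1.9em,itemsep=1pt]
\item Write the coarsening identities conditional on $Z_D$:
\[
\mu_D=\E[\mu_{D'}\mid Z_D],\qquad Q_D=\E[Q_{D'}\mid Z_D]+\operatorname{Cov}(\mu_{D'}\mid Z_D).
\]
These are the tower property and the law of total covariance. They use exactly the hypothesis that the dropped driver's full contribution, off-diagonals included, folds into the coarse residual.
\item Show that $f$ is jointly convex. Its epigraph $\{(\mu,Q,t):\ \begin{pmatrix}Q&\mu\\ \mu^\top&t\end{pmatrix}\succeq0\}$ is convex by the Schur complement.
\item Show that $f$ is Loewner-decreasing in $Q$. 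This holds because $Q\preceq Q''$ implies $Q''^{-1}\preceq Q^{-1}$.
\item Chain the two properties with the identities of step 1:
\[
\E[f(\mu_{D'},Q_{D'})\mid Z_D]\ \ge\ f\big(\E[\mu_{D'}\mid Z_D],\E[Q_{D'}\mid Z_D]\big)\ \ge\ f(\mu_D,Q_D).
\]
The first inequality is conditional Jensen. The second uses monotonicity together with $Q_D\succeq\E[Q_{D'}\mid Z_D]$. Taking expectations over $Z_D$ yields the displayed inequality.
\end{enumerate}
To identify the increment as the expected squared partial Sharpe ratio, partition the precision of the refined system in the $(D,k)$ blocks. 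The Schur complement of the $k$-block writes $\Delta(D')$ as the $D$-part plus the squared residual premium of driver $k$ scaled by its residual precision. Its expectation equals the Jensen gap above, which is nonnegative.

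I expect step 1 of item (iii) to be the main obstacle. One must check that $Q_{D'}$ and $\mu_{D'}$ are integrable and that $\E[Q_{D'}\mid Z_D]\succ0$, so that $f$ is finite along the chain. The uniform bound $\Sigma^c\succeq\varsigma^2I$ of \eqref{eq:Q} supplies positivity, and the polynomial-moment assumptions on $Z$ supply integrability. The remaining work is verifying that the coarsened $Q_D$ is the total-covariance object, not a diagonal reassignment; the inequality would fail otherwise, which is precisely the content of item (ii).
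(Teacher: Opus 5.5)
\paragraph{Comparison with the paper's route.} Your proof of the inequality in (iii) is correct. It is a tightened version of the paper's route rather than the same argument.

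The paper treats $\Delta$ as a squared norm of the conditional mean in a single precision metric and applies conditional Jensen to $\mu_D=\E[\mu_{D'}\mid Z_D]$. It never addresses two things. First, the two sides use different metrics, $Q_D^{-1}$ versus $Q_{D'}(z)^{-1}$. Second, $Q_{D'}$ may vary with the added driver, as it does in the paper's own moving-manifold setting $B(z)$.

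Your chain closes both points. Joint convexity of $(\mu,Q)\mapsto\mu^\top Q^{-1}\mu$ absorbs a state-dependent $Q_{D'}$. The law of total covariance $Q_D=\E[Q_{D'}\mid Z_D]+\operatorname{Cov}(\mu_{D'}\mid Z_D)$, combined with operator monotonicity of inversion, supplies the metric comparison; this step is needed even when $Q_{D'}$ is constant. You also obtain two things the paper does not state: the stronger conditional inequality $\E[\Delta(D',Z_{D'})\mid Z_D]\ge\Delta(D,Z_D)$ almost surely, and a split of the gap into a Jensen term and a premium-dispersion term, each nonnegative.

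For (ii), your Sherman--Morrison step is the paper's Woodbury argument in rank-one form. It has the advantage of making the exact decrement and the strictness condition $\mu^\top Q^{-1}b_k\neq0$ explicit. For (i), the paper offers only the remark that Jensen compares expectations, plus simulation, so an explicit construction is extra.

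\paragraph{Steps that need repair.} Three steps need repair.

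\emph{The partial-Sharpe identification.} The Schur-complement identification of the increment does not work. $Q_{D'}$ is an $n\times n$ asset covariance with no $(D,k)$ block structure. The decomposition you describe ($D$-part plus a squared residual premium times a residual precision) is the traded-factor result for adding a priced factor at \emph{fixed} conditioning information, and that result holds pointwise. Its $D$-part is evaluated with the fine premium at $z_{D'}$, not as $\Delta(D,z_D)$. If the two were equal, the increment would be pointwise nonnegative, contradicting (i). So the expectation of the residual term is not the gap. The paper's proof also merely asserts this identification. What you can actually prove is that the expected increment equals the sum of your two nonnegative gaps; define the ``expected squared partial Sharpe'' as that sum.

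\emph{The counterexample for (i).} Your variance-side construction requires heteroskedasticity. If $Q_{D'}$ does not depend on $z_k$, then $Q_{D'}\preceq Q_D$ pointwise, so $Q_{D'}(z^\ast)$ cannot exceed $Q_D$. Even with state-dependent loadings, you must check an actual Loewner increment $Q_{D'}(z^\ast)\succeq Q_D$ before reusing the computation from item (ii). This is not automatic, because $Q_D$ contains the term $\operatorname{Cov}(\mu_{D'}\mid Z_D)$.

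A premium-side example is simpler. Take $Q_{D'}$ constant and $\mu_{D'}=a+gz_k$, with $z_k$ Gaussian, mean zero and independent of $Z_D$, and set $a=-gz_k^\ast\neq0$. Then at $z_k=z_k^\ast$ you get $\Delta(D',z)=0<a^\top Q_D^{-1}a=\Delta(D)$. By continuity the inequality persists on an interval around $z_k^\ast$, which has positive probability.

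\emph{The correlated append in (ii).} A correlated appended driver does not reduce to the independent case. Orthogonalizing its innovation also replaces $B$ by $B+b_kc^\top\Lambda_Z^{-1}$, where $c$ is the cross-covariance. Hence $Q'-Q$ need not be positive semidefinite, and $\Delta$ can rise. For instance, with one incumbent of loading $b$, a nearly perfectly correlated driver with loading $-b$ cancels the common variance. That case simply lies outside the premise of (ii), which assumes common variance is raised.
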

\begin{proof}
$\Delta(D,z)=\mu_D(z)^\top Q_D(z)^{-1}\mu_D(z)$ is a convex functional of the conditional mean (a squared norm in the conditional-precision metric). (iii) By the tower property $\E[r\mid Z_D]=\E[\E[r\mid Z_{D'}]\mid Z_D]$, and convexity gives $\E_{Z_D}\Delta(D,Z_D)\le\E_{Z_{D'}}\Delta(D',Z_{D'})$ (conditional Jensen); the gap is the expected squared partial Sharpe of the added driver orthogonalized against the incumbents, hence nonnegative. (i) Jensen is an inequality \emph{between expectations}; at a fixed realization the convex functional can move either way, so pointwise monotonicity does not follow and, as the simulation shows, does not hold. (ii) The Woodbury expansion $\Delta=\mu^\top\Sigma^{c\,-1}\mu-\mu^\top\Sigma^{c\,-1}B(I+B^\top\Sigma^{c\,-1}B)^{-1}B^\top\Sigma^{c\,-1}\mu$ shows a fixed-$\mu$ column append enlarges the subtracted term, so $\Delta$ weakly decreases.
\end{proof}

The operative decision rule follows from the expected increment: refine the separator by a candidate driver when its expected partial squared Sharpe ratio, orthogonalized against the incumbents, is positive. In population this sign is exact; in finite samples the partial Sharpe ratio is estimated and its sign is noise dominated near zero, so the practical rule refines only when the estimated expected partial Sharpe ratio exceeds a margin scaled to the estimation error. Drivers with large partial Sharpe ratio are added regardless, and the margin guards against over-refinement on the marginal ones.

\begin{remark}[The frontier over a filtration of driver algebras]
\Cref{prop:frontiermono} says the economically correct object is not a single frontier but a \emph{family} of frontiers indexed by the driver $\sigma$-algebras. The $\sigma$-algebras form a filtration under refinement, monotone by construction, and the frontier potential carried on them is monotone \emph{in expectation} over the added driver (\Cref{prop:frontiermono}(iii)) though not pointwise at every realized state; we distinguish the filtration of information sets, the in-expectation monotonicity of the frontier over it, and the pointwise fluctuations that conditional Jensen does not remove. The family moves in time along the path the representation traces in $\mathcal C$. The static companion's frontier is the fibrewise, fixed-configuration slice; the dynamic frontier is that slice transported by the three motions above, with the flow contributing a smooth drift, swaps contributing same-dimension jumps, and refinements contributing dimension-changing jumps whose expected sign is governed by the partial-Sharpe rule above. This is the precise sense in which an efficient frontier can be made a function of time within the causal representation. The motion is not cosmetic: an investor who ignores a swap and holds the pre-switch tangency portfolio forfeits most of the attainable Sharpe ratio at the new configuration (numerically, of order $88\%$), so tracking the time-dependent frontier is a first-order requirement, not a refinement.
\end{remark}

\begin{figure}[t]
\centering
\includegraphics[width=\textwidth]{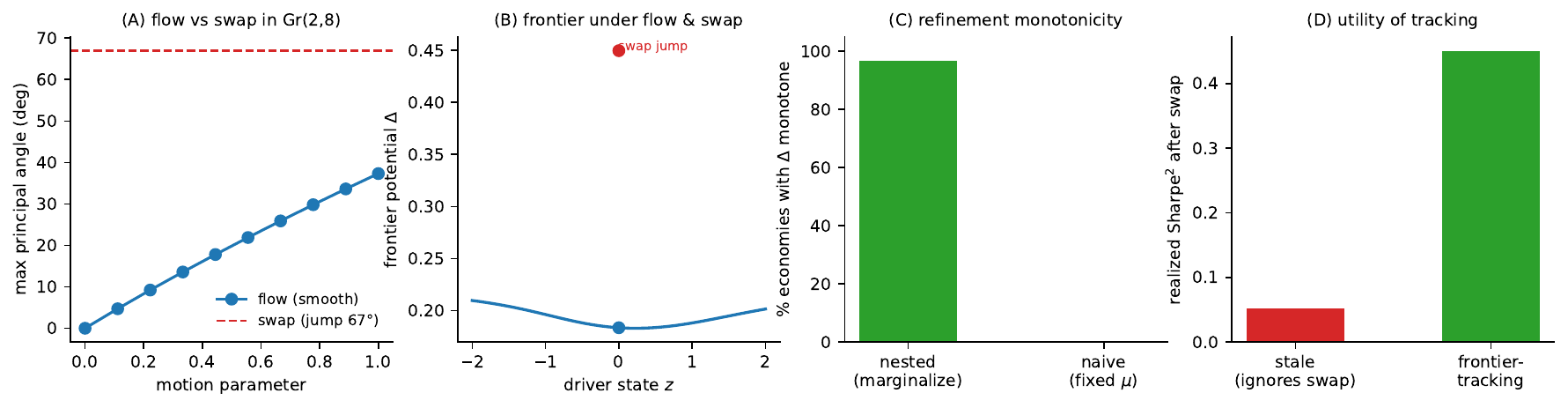}
\caption{The configuration space and the time-dependent frontier (\Cref{prop:threemotions}, \Cref{prop:frontiermono}). Left: flow moves the tangent smoothly within $\mathrm{Gr}(2,8)$, while a swap jumps to a different point of the same Grassmannian at constant dimension. Centre: the frontier potential drifts smoothly under flow and jumps under a swap. Third panel: under an information refinement the frontier potential is monotone \emph{in expectation} over the added driver ($0/400$ violations of $\E[\Delta(D')]\ge\E[\Delta(D)]$) though it fluctuates pointwise (about one state in ten falls), while a fixed-premium column append lowers it in essentially every economy ($0/500$ monotone); the expected increment is the added driver's expected partial squared Sharpe. Right: an investor ignoring a swap and holding the stale tangency forfeits most of the attainable Sharpe ratio, the economic significance of the frontier's motion.}
\label{fig:config}
\end{figure}

\subsection{The moving-separator case}

When the separator switches, the regime process $k_t$ jumps between lattice configurations with bounded intensities $\lambda_{kj}(z)$, and the value functions solve a coupled system; the underlying object is a regime-switching diffusion in the sense of \cite{MaoYuan2006}.

\begin{theorem}[Switching verification, CRRA]\label{thm:verification}
Assume bounded, Lipschitz, uniformly elliptic coefficients indexed by $k$, bounded switch intensities, and $U(W)=W^{1-R}/(1-R)$. Suppose the coupled semilinear system
\begin{equation}\label{eq:coupledpde}
0=\partial_t g^{(k)}+\LL^{(k)}_z g^{(k)}+(1-R)\Big[r_f+\tfrac{1}{2R}\big\|\mu^{(k)}-r_f\one+B^{(k)}\Lambda^{(k)}\tfrac{\nabla_z g^{(k)}}{g^{(k)}}\big\|^2_{M_C^{(k)}}\Big]g^{(k)}+\sum_{j\neq k}\lambda_{kj}(z)\big(g^{(j)}-g^{(k)}\big)
\end{equation}
admits a classical solution with $g^{(k)},1/g^{(k)},\nabla_z g^{(k)}$ bounded, terminal $g^{(k)}(\cdot,T)=1$. Then, by the regime-switching verification argument of \cite{SotomayorCadenillas2009} specialized to this coupled system, $J=\frac{W^{1-R}}{1-R}g^{(k)}(z,t)$ is the value function of the accumulated-filtration control problem, and the feedback
\begin{equation}\label{eq:switchfeedback}
w^*(z,k,t)=\tfrac{1}{R}\,M_C^{(k)}\big[(\mu^{(k)}-r_f\one)+B^{(k)}\Lambda^{(k)}\nabla_z\log g^{(k)}(z,t)\big]
\end{equation}
is optimal. \Cref{thm:hedging} is the case of a single configuration.
\end{theorem}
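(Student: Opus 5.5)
The plan is a classical verification argument for a regime-switching diffusion. We apply It\^o's formula with jumps to the candidate $V(W,z,k,t)=\frac{W^{1-R}}{1-R}g^{(k)}(z,t)$ along the joint process $(W_t,Z_t,k_t)$ under an arbitrary tangent-admissible control. We then show the drift is nonpositive, vanishing exactly at \eqref{eq:switchfeedback}, and pass to expectations using the boundedness of $g^{(k)}$, $1/g^{(k)}$ and $\nabla_z g^{(k)}$.

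The first step is the generator of the triple. In regime $k$ the continuous part is the HJB operator of \eqref{eq:HJB}, with coefficients indexed by $k$. The regime process adds the jump term $\sum_{j\neq k}\lambda_{kj}(z)\,(V^{(j)}-V^{(k)})$, and the compensated jump martingale $\sum_j (V^{(j)}-V^{(k)}_{-})\,(dN^{kj}_t-\lambda_{kj}\,dt)$ appears in the It\^o expansion. Because $W$ is continuous at a switch, $V^{(j)}-V^{(k)}=\frac{W^{1-R}}{1-R}(g^{(j)}-g^{(k)})$, so the coupling factors through the power of wealth.

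The second step is the pointwise maximisation. With $(z,k,t)$ fixed, the $w$-dependent part of the drift is, after dividing by $W^{1-R}g^{(k)}$,
\[
w^\top\big[(\mu^{(k)}-r_f\one)+B^{(k)}\Lambda^{(k)}\nabla_z\log g^{(k)}\big]-\tfrac R2\,w^\top Q^{(k)}w .
\]
This is maximised over $C^{(k)}w=0$. It is the static projected program of \Cref{sec:recap} with linear term $\ell^{(k)}$. By (P1)--(P3) for $M_C^{(k)}$, the unique maximiser is $\frac1R M_C^{(k)}\ell^{(k)}$, which is \eqref{eq:switchfeedback}, and the maximal value is $\frac1{2R}\|\ell^{(k)}\|^2_{M_C^{(k)}}$, using $M_CQM_C=M_C$.

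Substituting this maximal value shows that \eqref{eq:coupledpde} is exactly the statement that the supremum of the drift is zero, since the $(1-R)$ factor comes from $\partial_W$ of the power. Hence for any admissible $w$ the drift is $\le0$, with equality at $w^*$; the sign is checked separately for $R<1$ and $R>1$ via $\frac{1}{1-R}$. Local martingale parts are localised by exit times $\tau_N$ of compact sets in $z$ and of bounded wealth bands.

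The third step handles admissibility and the limit. Boundedness of $\nabla_z\log g^{(k)}$, together with bounded coefficients and the compact convex $\W$, makes $w^*$ a bounded, measurable feedback. The closed-loop SDE with Markov switching then has a unique strong solution; this follows from Lipschitz coefficients and the construction of \cite{MaoYuan2006}, with the feedback's Lipschitz property in $z$ coming from the classical regularity of $g^{(k)}$.

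With bounded weights, $W^{1-R}$ has finite moments of all orders uniformly in time. Since $g^{(k)}$ is bounded, $\{V(W_{\tau_N},\dots)\}$ is uniformly integrable, and letting $N\to\infty$ gives $\E[U(W_T)]\le V(W_0,z_0,k_0,0)$ for every admissible control, with equality for $w^*$. Here $g^{(k)}(\cdot,T)=1$ supplies the terminal condition.

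The main obstacle I expect is uniform integrability for an arbitrary admissible control, rather than for the optimal one. This is precisely why the bounds on $1/g^{(k)}$ and $\nabla_zg^{(k)}$ are hypothesised, and why $\W$ is taken compact: they control the stochastic integrals and the jump martingale, whose integrand $\frac{W^{1-R}}{1-R}(g^{(j)}-g^{(k)})$ is bounded by a multiple of $W^{1-R}$.

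I would handle that step as in \cite{SotomayorCadenillas2009}, by dominating via $\sup_t W_t^{1-R}$ with a BDG estimate. Finally, setting a single regime with $\lambda\equiv0$ recovers \Cref{thm:hedging} in the CRRA case.
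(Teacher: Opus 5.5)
Your proposal is correct and follows essentially the paper's route: It\^o's formula with the regime-jump term applied to $\frac{W^{1-R}}{1-R}g^{(k)}$, pointwise maximisation reducing to the projected program so that \eqref{eq:coupledpde} says the supremal drift vanishes, then localisation and closure by moment bounds as in \cite{SotomayorCadenillas2009}. The paper's extra appeal to viscosity uniqueness, and its monotone-iteration existence sketch, are not needed once a classical solution is assumed. Likewise, you only need the feedback to be bounded and measurable, not Lipschitz in $z$: $(Z,k)$ is uncontrolled, so closed-loop wealth is an explicit stochastic exponential.
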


The hypothesis of \Cref{thm:verification}, existence of a bounded solution of the coupled system with $g^{(k)},1/g^{(k)},\nabla_z g^{(k)}$ bounded, is satisfiable, and the existence argument is the monotone iteration built on the single-regime viscosity result rather than the parabolic Schauder estimate one would use for a classical solution. Each equation of the system has, after the substitution $\varphi^{(k)}=\log g^{(k)}$, the quadratic-gradient Hamilton--Jacobi form of case (ii) above, for which a bounded Lipschitz viscosity solution exists on the bounded Neumann domain; the regime coupling $\sum_j\lambda_{kj}(g^{(j)}-g^{(k)})$ is linear with bounded intensities. Freeze the coupling at a current iterate and solve the $K$ decoupled single-regime problems; this defines a monotone map $T$ on the order interval $[\underline g,\overline g]$ whose endpoints $\underline g=e^{-Ct}$, $\overline g=e^{Ct}$ are sub- and supersolutions built from the bound $C$ on the Hamiltonian (finite because the intensities and $A^{(k)}(z)$ are bounded). $T$ maps $[\underline g,\overline g]$ into itself and is monotone in the coupling, so the Perron/monotone-iteration method for viscosity solutions \cite{CrandallIshiiLions1992,Ishii1987} yields a fixed point: a bounded viscosity solution of the coupled system, with the gradient bound following from the a priori Lipschitz estimate on the bounded domain. On unbounded domains and for $R>1$ the same holds only under the growth/transversality condition of \Cref{cor:crra}, a genuine restriction we state rather than hide. Given the solution, the verification step is unconditional: drift nonpositive off the optimizer and zero at it, localization on compacts, closure by polynomial growth, and identification of the value function as the unique viscosity solution \cite[Thm.~3.5.3]{Pham2009}. The regime coupling changes the level of the value function, not the functional form of the policy, which is why the rolling architecture survives.

\subsection{Geometry jumps are unhedgeable}

\begin{theorem}[Kunita--Watanabe orthogonality of separator switches]\label{thm:jumpkw}
In the market \eqref{eq:sde} augmented with the regime process, let $N$ be the point process of separator switches and $M^N_t=N_t-\int_0^t\lambda_{k_s\cdot}(Z_s)\,ds$ its compensated martingale, whose compensator is continuous and $\F^Z$-adapted while its martingale part is purely discontinuous. Every self-financing wealth process built from the traded assets and cash has martingale part in the stable subspace of continuous martingales generated by $(W^Z,W^1,\dots,W^n)$; the purely discontinuous $M^N$ is Kunita--Watanabe orthogonal to all of them \cite{JacodShiryaev2003}. \Cref{fig:kw} illustrates the orthogonality geometrically. Consequently, \emph{in a market whose traded assets have continuous paths}, any claim whose value process has a nonvanishing jump at switch times, in particular the value spread $J^{(j)}-J^{(i)}$, is not replicable: the continuous-asset market is incomplete in the direction of its own geometry. This is a statement about a specific asset universe, not a universal one: if the universe includes instruments with jump exposure aligned to the switch (regime-linked options, variance swaps, or assets that themselves jump at the switch), part or all of the geometry risk becomes spanned. The claim is therefore conditional on the market-structure assumption of continuous traded assets, which we state as a hypothesis rather than a conclusion; under it, the residual is irreducible in the instantaneous sense of the previous paragraph.
\end{theorem}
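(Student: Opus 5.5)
The plan has three parts: show that traded wealth has a continuous martingale part, show that the switch martingale $M^N$ is orthogonal to every such process, and conclude non-replicability by comparing the jumps of the two sides.

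\emph{Step 1: traded wealth lives in the continuous stable subspace.} For any admissible control $w$, the self-financing wealth satisfies
\[
dW/W=[\cdots]\,dt+w^\top B\sigma_Z\,dW^Z+w^\top\diag(\varsigma)\,dW^{\mathrm{idio}}.
\]
When the regime process is present, the coefficients are indexed by $k_t$. The controls and coefficients are predictable, so they may jump in time. The integrators, however, are the continuous Brownian motions $(W^Z,W^1,\dots,W^n)$. The martingale part of the wealth process is therefore the stochastic integral
\[
L^w_t=\int_0^t H_s^\top\,d(W^Z,W^1,\dots,W^n)_s,
\]
with $H$ predictable and locally square integrable under the admissibility and compactness of $\W$. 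Such an integral is continuous, so $L^w$ belongs to the stable subspace $\mathcal S^c$ generated by these Brownian motions. Jumps of $k_t$ change the integrand at switch times but never create a jump in $L^w$. This is the point to state carefully.

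\emph{Step 2: $M^N$ is orthogonal to that subspace.} I would argue that $M^N$ is purely discontinuous, using the hypotheses of the statement. $N$ has bounded intensity, so $N$ is locally integrable with continuous compensator $\int\lambda\,ds$. Hence $M^N$ is a finite-variation local martingale, and finite-variation local martingales are purely discontinuous. Now take any continuous local martingale $X$. Then
\[
[M^N,X]=\langle (M^N)^c,X\rangle+\sum\Delta M^N\Delta X=0,
\]
because $(M^N)^c=0$ and $\Delta X=0$ (Jacod--Shiryaev I.4.52). So $M^N X$ is a local martingale, which is exactly strong (Kunita--Watanabe) orthogonality. Because $[M^N,\cdot\,]=0$ on generators and stable subspaces are closed, $M^N$ is orthogonal to all of $\mathcal S^c$, and in particular to every $L^w$.

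\emph{Step 3: non-replicability of claims that jump at switches.} Suppose a claim with value process $V$ has $\Delta V_\tau\neq0$ on $\{\tau\text{ a switch time}\}$ with positive probability. The example to keep in mind is $V=J^{(k_t)}(W_t,Z_t,t)$, whose jump at a switch from $i$ to $j$ is $J^{(j)}-J^{(i)}$. If $V$ were replicable, it would equal $V_0$ plus the wealth of some admissible strategy. By Step 1 it would then have continuous paths up to a continuous finite-variation term, so $\Delta V\equiv0$, a contradiction. I would also make this quantitative. Applying the Kunita--Watanabe decomposition of the martingale part of $V$ relative to $\mathcal S^c$ gives
\[
V=V_0+\int H\,dW+\int \kappa\,dM^N+(\text{residual}),
\]
with $\kappa$ equal to the predictable jump size, for example $J^{(j)}-J^{(i)}$. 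By Step 2, the term $\int\kappa\,dM^N$ has zero covariation with every traded gain. The hedging error of every continuous strategy is therefore bounded below by
\[
\E\int\kappa^2\lambda\,ds>0 .
\]
This gives the stated incompleteness. The conditional caveat is immediate: once jump assets are admitted, Step 1 fails and the argument no longer applies.

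\emph{Main obstacle.} I expect the main obstacle to be Step 1 in the presence of regime-dependent coefficients. I must make sure the predictable integrand handles a jump of $k$ without inducing a wealth jump, which requires self-financing with no instantaneous repricing of the assets at the switch. This in turn needs the asset prices in \eqref{eq:sde} to be continuous across switches. I would add that explicitly as the market-structure hypothesis the statement refers to.
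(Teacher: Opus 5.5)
Your proposal is correct and follows the paper's own argument. Traded wealth is a stochastic integral against the continuous Brownian family and so lies in the continuous stable subspace; $M^N$ is a finite-variation, hence purely discontinuous, local martingale and therefore Kunita--Watanabe orthogonal to that subspace; and a value process that jumps at switch times carries a purely discontinuous component no trading strategy can reach. Your pathwise contradiction (continuous wealth versus $\Delta V\neq0$) and the explicit residual $\E\int\kappa^2\lambda\,ds$ simply make precise the paper's one-line remark on the $L^2$ distance to the continuous subspace. One refinement: with more than two regimes the jump $J^{(j)}-J^{(i)}$ depends on the destination $j$, so $\kappa$ should be indexed by $j$ and integrated against the compensated jump measure of $k$ rather than against $M^N$ alone.
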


Wealth processes are stochastic integrals against the continuous asset martingales, so their martingale parts stay in the continuous stable subspace; the Kunita--Watanabe decomposition of a claim whose value jumps at switch times has a nonzero purely-discontinuous component, whose $L^2$ distance to the continuous subspace is bounded below by its own norm, so no trading strategy attains it.

This unhedgeability is an \emph{instantaneous} statement and must not be confused with unpredictability of the switch's \emph{arrival}. Kunita--Watanabe orthogonality says a position held \emph{across} a switch cannot offset the jump: at the instant it occurs, the geometry jump has zero covariation with the traded assets, so no static hedge neutralizes it. It does not say the arrival time is unforecastable. Indeed the online detector of \Cref{thm:verification} exploits exactly the compensator $\lambda_{k\cdot}(Z)$: the switch intensity is $\F^Z$-adapted, so the arrival is partially predictable, and a desk that repositions \emph{before} the switch reduces its exposure. The two statements are complementary, at different timescales: the jump is irreducible instantaneously (KW), while its arrival is partially forecastable (the compensator), and the value of the detector is precisely that it converts an unhedgeable instantaneous jump into a partially anticipatable event one can reposition ahead of. What remains irreducible is the residual \emph{conditional on the information available just before the switch}: the more forecastable the arrival, the more the coming switch is already reflected in pre-switch prices, and the smaller this conditional jump relative to the unconditional value spread $J^{(j)}-J^{(i)}$. The KW orthogonality bounds the instantaneous, unhedgeable component; the economically irreducible quantity is this pre-switch-conditional jump, which shrinks precisely to the extent the compensator makes the arrival predictable. The two statements thus compose: instantaneous unhedgeability and partial pre-pricing are consistent, and the irreducible residual is the part no pre-positioning removes.

\begin{figure}[t]
\centering
\begin{tikzpicture}[>=Stealth,font=\small]
\draw[thick,black!45,fill=black!4] (-2.2,-0.6) -- (2.2,-0.6) -- (3.0,0.5) -- (-1.4,0.5) -- cycle;
\node[black!55,font=\scriptsize] at (2.0,-0.35) {traded space};
\node[black!55,font=\scriptsize] at (2.15,-0.72) {$\langle W^Z,W^1,\dots,W^n\rangle$ (continuous)};
\draw[->,black!65,thick] (0,0)--(1.6,-0.1) node[right,font=\scriptsize,black!70]{any $w$};
\draw[->,red!70,very thick] (0,0)--(0,1.7) node[above,font=\scriptsize,red!75,align=center]{value jump $J^{(j)}-J^{(i)}$\\[-2pt](purely discontinuous)};
\draw (0,0.28) -- (0.28,0.28) -- (0.28,0);
\draw[decorate,decoration={brace,amplitude=3pt},red!55] (0.15,0.05) -- (0.15,1.6)
  node[midway,right=2pt,font=\scriptsize,red!70,align=left]{irreducible\\[-2pt]residual};
\end{tikzpicture}
\caption{Kunita--Watanabe orthogonality of a separator switch (\Cref{thm:jumpkw}). Every self-financing wealth process lives in the continuous stable subspace spanned by the Brownian family; the value jump at a switch is purely discontinuous and therefore orthogonal to it. No trading strategy reaches the vertical direction: the market is incomplete along its own geometry, and the residual is irreducible.}
\label{fig:kw}
\end{figure} Separator switching is thus fully priced in the value function, through the exchange term of \eqref{eq:coupledpde}, and fully unhedgeable in the portfolio: it is the framework's irreducible residual risk, to be treated by local robustness rather than by spanning.

\begin{corollary}[Invisible, unhedgeable, costly switches]\label{cor:invisible}
There is a class of separator switches that are simultaneously \textnormal{(a)} undetectable, \textnormal{(b)} unhedgeable, and \textnormal{(c)} value-moving. A pure rotation of the manifold, $B\mapsto BO$ with $O$ orthogonal, leaves the return covariance $Q=B\Lambda_Z B^\top+\Sigma^c$ invariant (isotropic driver noise), so it is invisible to any second-order detector; it is a point-process jump, so by \Cref{thm:jumpkw} it is Kunita--Watanabe orthogonal to every traded strategy; yet it changes the value field $g^{(k)}$ (numerically, a log-value jump of standard deviation $\approx0.05$, a certainty-equivalent impact of $\approx1.4\%$). An investor can thus lose value at a switch she can neither see nor hedge.
\end{corollary}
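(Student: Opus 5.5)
The corollary is an existence claim, so I would prove it by exhibiting one explicit switch and checking (a), (b), (c) in turn. Take a two-regime economy $k\in\{i,j\}$ with common $\mu$, $\Sigma^c$, driver dynamics $(\alpha,\sigma_Z)$ and isotropic driver noise $\Lambda_Z=\lambda I_m$. Set $B^{(j)}(z)=B^{(i)}(z)O$ with $O\in O(m)$ a fixed nontrivial orthogonal matrix. The switch is governed by the intensities $\lambda_{ij},\lambda_{ji}$ of \Cref{thm:verification}.

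\emph{Step (a), invisibility.} I would compute
\[
Q^{(j)}=B^{(i)}O\,\lambda I\,O^\top B^{(i)\top}+\Sigma^c=Q^{(i)}.
\]
The conditional return covariance is identical across regimes at every $z$. Because $\mu$ is shared, the conditional return law of \eqref{eq:sde} given the current state, which is Gaussian in the instantaneous sense, is regime-invariant. Any detector built from first and second conditional moments therefore has no power to separate the regimes.

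\emph{Step (b), unhedgeability.} This follows directly from \Cref{thm:jumpkw}. The switch is a jump of the point process $N$, whose compensated martingale $M^N$ is purely discontinuous. The traded assets keep continuous paths in both regimes, so every wealth process stays in the continuous stable subspace. Any component of the value process that jumps at switch times is therefore orthogonal to all traded strategies. Nothing beyond citing the theorem is needed, once one notes that the market-structure hypothesis (continuous traded assets) holds in the constructed economy.

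\emph{Step (c), value-moving.} This is the main obstacle. I must show $g^{(i)}\neq g^{(j)}$ in \eqref{eq:coupledpde} on a set of positive measure, even though $Q$, and hence $M_C$, are regime-invariant. The route is the hedge term. In the quadratic form the regime enters only through
\[
B^{(k)}\Lambda\nabla_z\log g^{(k)}=\lambda B^{(i)}O^{\,\delta_{kj}}\nabla_z\log g^{(k)} .
\]
The driver generator $\LL_z$, however, is expressed in the fixed coordinates $z$ and is not rotated by $O$. So the rotation reroutes the asset exposures to the driver innovations relative to the state directions that move the premium and the loadings.

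I would first try a perturbative argument. Suppose $g^{(i)}=g^{(j)}=:g$ solved the coupled system. Then the exchange terms vanish, and subtracting the two equations forces
\[
\|\mu-r_f\one+\lambda B O\nabla\log g\|_{M_C}=\|\mu-r_f\one+\lambda B\nabla\log g\|_{M_C}
\]
identically. Choose $\mu(z)$ depending on $z$, so that $\nabla\log g\neq0$ by the nonvanishing hedge of \Cref{cor:crra}, and choose $O$ not commuting with the relevant Gram structure $B^\top M_C B$. A first-order expansion in the hedge magnitude then shows this identity fails at some $z$, a contradiction. Hence $g^{(i)}\neq g^{(j)}$.

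Finally, $\log g^{(j)}-\log g^{(i)}$ is exactly the log-value jump at the switch, and $\tfrac{1}{1-R}$ times it is the certainty-equivalent impact. The reported magnitudes (standard deviation $\approx0.05$, $\approx1.4\%$) would come from solving \eqref{eq:coupledpde} numerically with the IMEX scheme of \Cref{sec:moving} for a calibrated instance. They serve as a numerical illustration rather than part of the proof.
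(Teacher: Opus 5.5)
Your steps (a) and (b) are the paper's argument, which sits inside the statement itself: $Q$ is invariant under $B\mapsto BO$ when $\Lambda_Z=\lambda I$, and \Cref{thm:jumpkw} applies under the continuous-asset hypothesis. For (c) the paper gives no analytic argument. It asserts that $g^{(k)}$ changes and supports this only numerically. Your contradiction argument is therefore a genuinely different and stronger route, and you identify the right mechanism: the loadings rotate while $\LL_z$ and $\mu$ do not.

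However, the step meant to close the contradiction is mis-specified. With $\pi=\mu-r_f\one$ and $v=\nabla_z\log g$, the identity you must contradict is
\[
2\lambda\,\pi^\top M_C B(O-I)v+\lambda^2\,v^\top\big(O^\top KO-K\big)v=0,\qquad K:=B^\top M_C B .
\]
The term linear in $v$ does not involve $K$. So requiring that $O$ not commute with $B^\top M_C B$ is irrelevant to a first-order argument. That condition governs only the quadratic term, and it is not sufficient there either: $O^\top KO-K$ is traceless, hence zero or indefinite, so it can vanish along the particular directions $v(z,t)$. You also need an actual small parameter, since the ``hedge magnitude'' is fixed by the solution rather than chosen.

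A repair runs as follows.
\begin{enumerate}
\item Take $R\neq1$; for log utility the hedge, and with it the effect, vanishes by \Cref{cor:collapse}.
\item Expand in $T-t$. Since $g(\cdot,T)=1$, for a classical solution smooth up to $T$ the equation gives $\nabla_z\log g=(T-t)\tfrac{1-R}{2R}\nabla_z\Phi+o(T-t)$, with $\Phi:=\pi^\top M_C\pi$.
\item Divide the identity by $T-t$ and let $t\uparrow T$. This yields $\pi^\top M_C B(O-I)\nabla_z\Phi\equiv0$.
\item Take the instance with constant $B$ and $\mu=a+Bz$. Then $\nabla_z\Phi=2h$ with $h:=B^\top M_C\pi$, and the condition reads $h^\top Oh=\|h\|^2$, i.e.\ $Oh=h$, on the whole domain.
\item Suppose $K$ is nonsingular; equivalently $\operatorname{col}B\cap\operatorname{col}C^\top=\{0\}$, since $\ker M_C=\operatorname{col}C^\top$. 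Then $h(z)=h_0+Kz$ sweeps an open set, which cannot lie in the proper subspace fixed by $O\neq I$.
\end{enumerate}
Hence $g^{(i)}\not\equiv g^{(j)}$. By continuity and ellipticity, the log-value jump at a switch is nonzero with positive probability.

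Finally, state (a) at the right scope. The rotation leaves the conditional return law invariant, but it changes the return--driver covariation: $d\langle r,Z\rangle_t=B\Lambda_Z\,dt$ goes from $\lambda B\,dt$ to $\lambda BO\,dt$. This is exactly the channel through which your (c) operates. An investor who observes $Z$, as the feedback policy requires, could detect the switch from joint second moments of $(r,Z)$. So ``any detector built from first and second conditional moments'' must be restricted to detectors using the return stream alone. The paper makes this restriction explicit in its switching experiment (``undetectable from returns'').
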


This dissolves an apparent contradiction with the pricing claim. ``Priced in the value function'' means the coupled system \eqref{eq:coupledpde} accounts for the switch \emph{ex ante}, through the exchange term that mixes $g^{(k)}$ and $g^{(j)}$ under the switch intensities: the possibility is priced in expectation. It does not mean the realized switch is detectable or hedgeable when it occurs. The three properties are consistent: expectation-pricing of a jump confers neither observability nor spanning of its realization. Rotation-type switches are the extreme case where all three separate maximally, and they mark the boundary of what causal separation can protect against: the geometry can move in a direction that is invisible in covariance, unspanned by trades, and still costly. This is the sharpest form of the framework's irreducible residual, and it argues for treating separator stability itself, not just returns, as a monitored risk. In execution terms the same object carries a cost-of-immediacy interpretation: a desk forced to trade through a separator switch pays for liquidity against a risk no book can hedge, a Kyle-type price-of-immediacy that we develop in forthcoming work.

\section{Tangent coordinates, two-stage structure and transport}\label{sec:transport}

\subsection{Tangent chart and two-stage structure}

\begin{definition}[$\Lambda$-orthonormal tangent chart]\label{def:chart}
Let $E_t\in\R^{m\times k}$ be a basis of the admissible driver-motion subspace with $E_t^\top\Lambda_t E_t=I_k$ (Gram--Schmidt in the $\Lambda_t$ inner product): distances in tangent coordinates are systematic risk. Any admissible exposure is $b=E_t u$, $u\in\R^k$, the tangent coordinates of the portfolio.
\end{definition}

\begin{proposition}[Exact two-stage structure]\label{prop:twostage}
The dynamic policy \eqref{eq:crrapolicy} assembles in two stages: a systematic stage of dimension $k$ choosing the target tangent exposure against the projected premium, and a separable allocation stage with diagonal Hessian solved in closed form in $O(nk^2)$ through the Woodbury identity, because $Q$ is diagonal-plus-low-rank. The composition reproduces \eqref{eq:crrapolicy} exactly. Without causal separation the two stages do not commute.
\end{proposition}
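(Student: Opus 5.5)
The plan is to read \eqref{eq:crrapolicy} as the maximizer of the pointwise HJB Hamiltonian of \Cref{thm:hedging}. At a fixed state $(z,t)$ this is the strictly convex quadratic program
\[
\min_{w:\,Cw=0}\ \tfrac12 w^\top Q w - \tfrac1R w^\top v,\qquad v:=(\mu-r_f\one)+B\Lambda_Z\nabla_z\log g(z,t).
\]
I would then split it by \emph{partial minimization}, first over $w$ at fixed exposure and then over the exposure. The key identity is the one supplied by separation through \Cref{lem:normalform}: $w^\top Qw = b^\top\Lambda_Z b + w^\top\Sigma^c w$ with $b=B^\top w$, and $\Sigma^c$ is diagonal. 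Fixing the exposure therefore decouples the common-risk term from the idiosyncratic term exactly, with no cross term.

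\emph{Stage 2 (allocation).} For a given target exposure $b=E_t u$ (\Cref{def:chart}), solve
\[
V(u)=\min_w\ \Big\{\tfrac12 w^\top\Sigma^c w-\tfrac1R w^\top v\ :\ B^\top w=E_tu,\ Cw=0\Big\}.
\]
This is an equality-constrained QP with diagonal Hessian $\Sigma^c\succeq\varsigma^2 I$. Its KKT solution is $w(u)=\Sigma^{c\,-1}(v/R+\tilde A^\top\nu)$ with $\tilde A=[B^\top;C]$. The multiplier $\nu$ solves a $(m+q)$-dimensional system $\tilde A\Sigma^{c\,-1}\tilde A^\top\nu=\dots$. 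Forming that matrix costs $O(n(m+q)^2)$, which is the claimed $O(nk^2)$ Woodbury cost. The map $u\mapsto w(u)$ is affine, so $V$ is a convex quadratic in $u$. I would write it explicitly as $V(u)=\tfrac12 u^\top H u - u^\top h+\text{const}$, with $H=E_t^\top(B^\top\Pi B)^{-1}E_t$-type Schur complements, where $\Pi$ is the $\Sigma^{c}$-projector onto $\ker C$.

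\emph{Stage 1 (systematic).} Minimize $\tfrac12 u^\top u + V(u)$ over $u\in\R^k$. The first term is $\tfrac12 b^\top\Lambda_Z b$, and it equals $\tfrac12\|u\|^2$ by $\Lambda$-orthonormality of $E_t$. This is a $k$-dimensional positive-definite linear solve against the projected premium $h$. Exactness of the composition is then the elementary fact that $\min_w f(w)=\min_u\min_{w:B^\top w=E_tu}f(w)$. This holds whenever every admissible $w$ has exposure in $\operatorname{col}E_t$, which is how the admissible driver-motion subspace is defined. By strict convexity and uniqueness of the KKT point (\eqref{eq:projected}, property (P2)), the composed $w(u^*)$ must coincide with $\tfrac1R M_C v$. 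As a check, I would verify $C\,w(u^*)=0$ and the stationarity condition $Qw(u^*)-v/R\in\operatorname{row}(C)$ directly.

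For the converse clause, drop separation: let the residual covariance be a non-diagonal $\Sigma^r$, or let it have a cross term $B\Gamma^\top+\Gamma B^\top$ with the driver innovations. Then $w^\top Qw$ contains $w^\top B\Gamma^\top w$. The inner objective at fixed $b$ then depends on $b$ through more than the constraint, and the stage-1 objective is no longer $\tfrac12\|u\|^2+V(u)$. Reversing the order of the stages, by allocating first and then rotating exposures, therefore gives a different point. A $2$-asset, $1$-driver example with correlated residuals should exhibit this non-commutation explicitly. The step I expect to be the main obstacle is the bookkeeping that makes stage 1 genuinely $k$-dimensional and "against the projected premium". Concretely, the constraint $Cw=0$ must be expressible as $b\in\operatorname{col}E_t$, or else be carried into stage 2 without destroying separability. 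If $C$ constrains $w$ beyond its exposure, I would keep $C$ in the stage-2 KKT system (as above) and accept that $h$ is the premium projected through $\Pi$. This still gives exactness but requires care in identifying $h$ with $E_t^\top$ applied to the projected premium.
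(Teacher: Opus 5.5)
Your treatment of exactness and cost is sound. It is a sharper version of the paper's own sketch, which only observes that the systematic--residual cross-terms vanish under separation and that the Woodbury low-rank structure carries over. You make the assembly precise as partial minimization over $w$ at fixed tangent exposure $b=E_tu$: a diagonal-Hessian allocation QP inside, and the $k$-dimensional quadratic $\tfrac12\|u\|^2+V(u)$ outside. Carrying $C$ into the stage-2 KKT system keeps the composition exact for the equality constraint $Cw=0$ even when $C$ constrains $w$ beyond its exposure. That is stronger than the paper's caveat, which concedes only an approximation when the constraint couples the blocks. You do need to prune redundant rows of $[B^\top;C]$ when $C$ is itself an exposure constraint, since otherwise $[B^\top;C]\,\Sigma^{c\,-1}[B^\top;C]^\top$ is singular.

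The genuine gap is the clause that without causal separation the two stages do not commute. Your exactness step is the identity $\min_w f=\min_u\min_{w:\,B^\top w=E_tu}f$, which holds for every objective $f$ and never uses separation. So no failure of separation can make your nested composition differ from $\tfrac1R M_Cv$.
\begin{itemize}
\item With correlated residuals $\Sigma^r$ and no cross term, the split $w^\top Qw=b^\top\Lambda_Zb+w^\top\Sigma^rw$ still holds. Your two stages then reproduce the optimum verbatim, only with a non-diagonal stage-2 Hessian. The $2$-asset, $1$-driver example with correlated residuals that you propose would therefore confirm exactness, not exhibit non-commutation.
\item A factor--residual cross term does not help either. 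The term $w^\top(B\Gamma^\top+\Gamma B^\top)w=2b^\top\Gamma^\top w$ is linear in $w$ at fixed $b$, so stage~2 keeps its diagonal Hessian and $V$ merely gains a $b$-dependent linear term. Moreover, under \Cref{def:instsep} the residual is the Kunita--Watanabe component orthogonal to the driver martingales. Such a term is absorbed into the loadings and is not a violation of separation at all.
\item The ``allocate first, then rotate'' procedure you compare against is never defined, so the claimed different point is unsupported.
\end{itemize}
What separation buys is separability of the allocation stage (diagonal Hessian, closed-form low-rank solve), not exactness of the nested composition. To prove the clause, you must first fix a meaning of ``commute'' in which diagonality matters. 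One option: ask whether the staged scheme built on the separated, diagonal residual block reproduces the joint optimum, and then exhibit a correlated-residual economy where it does not. The paper supports the clause only with the one-line remark that the cross-terms vanish because $\Sigma^c$ is diagonal, so that definition is yours to supply.
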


The cross-terms between the systematic block and the residual allocation vanish because $\Sigma^c$ is diagonal, which is the conditional factorization of the separation property; the same low-rank structure that reduces the static solve to $O(nk^2)$ carries verbatim to the dynamic policy. The exactness is conditional on this block structure: if the admissibility constraint $Cw=0$ couples the systematic and residual blocks, through an active set not aligned with the systematic and idiosyncratic partition, the Woodbury split is no longer exact, and the two-stage assembly incurs an error of the order of the off-block coupling norm, which we treat as an approximation to be monitored rather than an identity.

\subsection{Robustness of the policy to a mis-estimated value gradient}

In any deployment the field $g$, and hence $\nabla_z\log g$, is estimated. The policy's dependence on that estimate is Lipschitz, with an explicit constant.

\begin{proposition}[Gradient-sensitivity bound]\label{prop:gradsens}
Let $w^*$ be the CRRA policy \eqref{eq:crrapolicy} and let $\widehat{\nabla_z\log g}=\nabla_z\log g+\delta$ be a perturbed value gradient. Then the induced policy error satisfies
\begin{equation}\label{eq:gradsens}
\|\widehat w-w^*\|\;\le\;\tfrac{1}{R}\,\big\|M_C\,B\,\Lambda_Z\big\|_2\,\|\delta\|,
\end{equation}
and the bound is attained. In particular the hedge is insensitive to gradient error in directions that $M_C$ annihilates, so estimation noise off the tangent space does not move the portfolio.
\end{proposition}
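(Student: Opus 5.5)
The plan is to exploit the fact that the CRRA policy \eqref{eq:crrapolicy} is affine in the value gradient, with every other ingredient held fixed at the current state. Evaluated at the perturbed gradient $\nabla_z\log g+\delta$, with the state $(z,t)$ and hence $M_C$, $B$, $\Lambda_Z$, $\mu$ unchanged, the myopic term $\tfrac1R M_C(\mu-r_f\one)$ is identical in $\widehat w$ and $w^*$. It therefore cancels in the difference, and
\[
\widehat w-w^*=\tfrac1R\,M_C B\Lambda_Z\,\delta .
\]
The point to record explicitly is that the perturbation enters only through the linear map $L:=\tfrac1R M_CB\Lambda_Z\in\R^{n\times m}$. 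This is the same operator chain as the manifold hedge in \Cref{thm:hedging}: noise, then transmission, then projection.

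The inequality \eqref{eq:gradsens} then follows from the definition of the induced operator $2$-norm, $\|L\delta\|\le\|L\|_2\|\delta\|$, with the factor $1/R$ pulled out. For attainment, I will take $\delta$ to be any nonzero multiple of a top right singular vector $v_1$ of $M_CB\Lambda_Z$. Then $\|M_CB\Lambda_Z v_1\|=\sigma_{\max}=\|M_CB\Lambda_Z\|_2$, so equality holds. When $M_CB\Lambda_Z=0$ both sides vanish and the bound is trivially attained.

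For the insensitivity statement, the phrase ``directions that $M_C$ annihilates'' needs a precise reading. Since $\delta$ lives in driver space $\R^m$ and $M_C$ acts on $\R^n$, I will phrase it as follows. If $\delta$ lies in the kernel of the composite $M_CB\Lambda_Z$, equivalently if $B\Lambda_Z\delta\in\ker M_C$, then $\widehat w=w^*$. By (P1)--(P2) of \eqref{eq:Mprops}, $M_C$ is the $Q$-orthogonal projector onto the constraint-compatible subspace. Its kernel is therefore exactly the set of transmitted asset-space perturbations that are $Q$-orthogonal to that subspace, which is the ``off the tangent space'' noise. More generally, only the component of $B\Lambda_Z\delta$ in the range of $M_C$ moves the portfolio. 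I will state this via the decomposition $\R^n=\operatorname{range}(M_CQ)\oplus\ker M_C$ implied by (P2).

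There is no real obstacle here; the argument is linear algebra. The only care needed is to stress that the bound is for fixed state. The perturbation $\delta$ does not feed back into $M_C(z)$ or $B(z)$, because in the moving-manifold policy \eqref{eq:movingpolicy} these are evaluated at the current $z$ and do not depend on $g$. A secondary point is that the attaining $\delta$ must be admissible. Since $\delta$ is an arbitrary error vector in $\R^m$, no constraint restricts it, so attainment is unconditional.
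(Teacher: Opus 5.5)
Your core argument is the paper's own. The policy \eqref{eq:crrapolicy} is affine in the value gradient and the myopic term cancels, so $\widehat w-w^*=\tfrac1R M_CB\Lambda_Z\delta$. The bound is then the operator-norm inequality, and it is attained at a top right singular vector of $M_CB\Lambda_Z$.

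One claim in your reading of the insensitivity clause is wrong, though it does not affect the bound. $M_C$ is not itself a projector; it is $M_CQ$ that is idempotent, by (P2). Moreover $\ker M_C=\operatorname{range}(C^\top)$, the row space of $C$. This is the Euclidean orthogonal complement of $\{w:Cw=0\}$, not its $Q$-orthogonal complement $Q^{-1}\operatorname{range}(C^\top)$.

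A counterexample: take $n=2$, $C=[1\ \ 0]$ and $Q=\begin{pmatrix}2&1\\1&2\end{pmatrix}$. Then $M_C=\diag(0,\tfrac12)$. The direction $(2,-1)^\top$ is $Q$-orthogonal to the admissible subspace, yet $M_C$ sends it to $(0,-\tfrac12)^\top\neq0$.

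The correct statement is that $\delta$ leaves the portfolio unchanged iff $B\Lambda_Z\delta\in\operatorname{range}(C^\top)$. In words, the transmitted perturbation is a combination of constraint rows and is absorbed by the Lagrange multiplier. Equivalently, $Q^{-1}B\Lambda_Z\delta$ is $Q$-orthogonal to the admissible subspace.

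Your direct sum $\R^n=\operatorname{range}(M_CQ)\oplus\ker M_C$ is true. However, it is the Euclidean split $\ker C\oplus\operatorname{range}(C^\top)$, which follows from the symmetry (P1). It is not a $Q$-orthogonal decomposition coming from (P2).
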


The bound is immediate from \eqref{eq:crrapolicy}, since the map $\delta\mapsto \tfrac1R M_C B\Lambda_Z\delta$ is linear; attainment is the top singular vector of $M_C B\Lambda_Z$. Numerically, for the calibration of \Cref{sec:experiments} the constant is $0.12$, so a $10\%$ error in the value gradient moves tangent exposures by little more than one percent. This is the dynamic counterpart of the static framework's certified-tolerance analysis: the projector that enforces admissibility also damps estimation error. Combined with the empirical degradation of \Cref{sec:moving} (the manifold hedge loses its edge at about a $15\%$ relative error in $B$), the bound gives an operating break-even: the manifold hedge is worth its turnover cost only while the estimated $B$ is accurate to within roughly that tolerance, beyond which the myopic tangent fund alone is preferable, which provides a usable rule for when to switch the hedge off.

\subsection{Transport and geometric turnover}

Two metrics must be distinguished, because conflating them is a category error. The driver-innovation covariance $\Lambda_Z$ is the metric on driver-\emph{noise} space; it is constant in the OU normal form and it is the metric through which the hedge acts in \eqref{eq:crrapolicy}. The metric that governs the \emph{transport of exposures}, however, is the pullback of the return-risk inner product onto the manifold tangent,
\begin{equation}\label{eq:pullback}
G(z):=B(z)^\top \Sigma_{\mathrm{ret}}(z)^{-1} B(z),\qquad \Sigma_{\mathrm{ret}}(z)=B(z)\Lambda_Z B(z)^\top+\Sigma^c,
\end{equation}
the Fisher-type metric under which distance between tangent exposures is their difference in systematic risk. Unlike $\Lambda_Z$, $G(z)$ is genuinely state-dependent whenever $B$ is, and it is on $G$ that the connection below lives, because transport carries exposures whose risk is measured by returns, not driver noise. Between switch times the family $t\mapsto(V_t,G_t)$ of tangent spaces and pullback metrics moves, and consistency requires a transport of exposures from one date to the next.

\begin{figure}[t]
\centering
\begin{tikzpicture}[>=Stealth,every node/.style={font=\small}]
\begin{scope}[xshift=0cm]
  \draw[rounded corners=4pt,fill=black!3,draw=black!20] (-1.5,-1.5) rectangle (1.7,1.7);
  \node[font=\scriptsize,black!60] at (0,-1.72) {$T_t$, metric $G_t$};
  \draw[->,thick,blue!70] (0,0) -- (0:1.15) node[right,font=\scriptsize] {$e_1$};
  \draw[->,thick,blue!70] (0,0) -- (90:1.0) node[above,font=\scriptsize] {$e_2$};
  \fill (0,0) circle (1.2pt);
\end{scope}
\draw[->,very thick,black!55] (2.1,0) -- (3.5,0) node[midway,above,font=\scriptsize] {$\Gamma_{t\to t+dt}$} node[midway,below,font=\scriptsize,black!50] {transport};
\begin{scope}[xshift=5.6cm]
  \draw[rounded corners=4pt,fill=black!3,draw=black!20] (-1.5,-1.5) rectangle (1.7,1.7);
  \node[font=\scriptsize,black!60] at (0,-1.72) {$T_{t+dt}$, metric $G_{t+dt}$};
  \draw[->,thick,orange!80] (0,0) -- (18:1.28) node[right,font=\scriptsize] {$e_1'$};
  \draw[->,thick,teal!70] (0,0) -- (104:0.82) node[above,font=\scriptsize] {$e_2'$};
  \draw[->,blue!25] (0,0) -- (0:1.15);
  \draw[->,blue!25] (0,0) -- (90:1.0);
  \draw[orange!70,shorten >=1pt] (10:0.6) arc (10:0:0.6);
  \fill (0,0) circle (1.2pt);
  \node[font=\scriptsize,orange!80] at (0.95,0.28) {rotation};
  \node[font=\scriptsize,teal!70] at (0.05,1.28) {breathing};
\end{scope}
\end{tikzpicture}
\caption{Geometric turnover (\Cref{prop:discretetransport}). Even with no change in the economic signal, the tangent frame carrying the portfolio's systematic exposures moves between dates: the connection $\Gamma_{t\to t+dt}$ transports the frame $(e_1,e_2)$ to $(e_1',e_2')$. The motion splits into a horizontal \emph{rotation} of the span (the skew part $E^\top G\dot E$) and a vertical \emph{breathing} of the metric (the symmetric strain $\tfrac12 E^\top\dot G E$). Rebalancing is forced by this geometric motion alone; in a fixed-geometry model $G$ is constant and both parts vanish.}
\label{fig:transport}
\end{figure}

\begin{proposition}[Discrete transport and geometric turnover]\label{prop:discretetransport}
Let $\dot E_t$ be the drift of the chart induced by the flow of $Z$, and set
\begin{equation}\label{eq:connform}
\widetilde\omega_t:=E_t^\top G_t\dot E_t+\tfrac12 E_t^\top\dot G_t E_t,\qquad E_t^\top G_t E_t=I_k.
\end{equation}
The discrete connection is $\Gamma_{t\to t+dt}=I_k-\widetilde\omega_t\,dt+o(dt)$, and rebalancing decomposes into a purely geometric component and an informational innovation. The geometric turnover splits further into tangent rotation ($E^\top G\dot E$, the horizontal component) and metric breathing ($\tfrac12 E^\top\dot G E$, the vertical component): rebalancing is required even when nothing economic changes, only the geometry moves (\Cref{fig:transport}). The dominant, mobility-driven share is the horizontal rotation term, consistent with the hedge decomposition of \Cref{thm:hedging}. In the constant-$B$ case $G$ is constant and both vanish; the strain is therefore a genuine feature of the moving manifold, invisible to a fixed-geometry model.
\end{proposition}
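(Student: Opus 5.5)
The plan is to transport tangent coordinates between consecutive dates so that the systematic risk of the exposure is preserved, expand the resulting map to first order in $dt$, and read off $\widetilde\omega_t$ together with its split into a skew and a symmetric part.

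\textbf{Setting up the transport.} A portfolio held at date $t$ has exposure $b=E_tu$. Its systematic-risk norm is $u^\top E_t^\top G_tE_tu=\|u\|^2$, because the chart is $G_t$-orthonormal. I define the coordinates $u'$ at $t+dt$ as the $G_{t+dt}$-orthogonal projection of the \emph{same} exposure $b$ onto the new tangent $\operatorname{col}E_{t+dt}$. This gives $u'=E_{t+dt}^\top G_{t+dt}E_tu$. Expanding $E_{t+dt}=E_t+\dot E_t\,dt$ and $G_{t+dt}=G_t+\dot G_t\,dt$ yields
\[
u'=\big(I_k+\dot E_t^\top G_tE_t\,dt+E_t^\top\dot G_tE_t\,dt\big)u+o(dt).
\]

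\textbf{Identifying $\widetilde\omega_t$.} Next I differentiate the normalization $E_t^\top G_tE_t=I_k$:
\[
\dot E^\top GE+E^\top G\dot E+E^\top\dot GE=0 .
\]
Substituting this identity for $\dot E^\top GE$ turns the first-order term into $-(E^\top G\dot E)\,dt$. I then symmetrize so that the transport is metric-compatible, meaning it preserves $\|u\|$ up to the breathing of the metric. Writing $E^\top G\dot E=S+K$ with $S$ symmetric and $K$ skew, the differentiated constraint forces $S=-\tfrac12E^\top\dot GE$. Hence $K=E^\top G\dot E+\tfrac12E^\top\dot GE$ is skew, and this $K$ is $\widetilde\omega_t$. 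The result is $\Gamma_{t\to t+dt}=I_k-\widetilde\omega_t\,dt+o(dt)$, with the rotation and strain pieces exactly as displayed in \eqref{eq:connform}.

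\textbf{Turnover decomposition.} For the rebalancing statement I write the change of the optimal tangent coordinates as
\[
u_{t+dt}-u_t=\big(\Gamma_{t\to t+dt}-I_k\big)u_t+\big(u_{t+dt}-\Gamma_{t\to t+dt}u_t\big).
\]
The first term is the geometric turnover and the second is the informational innovation. The geometric part in turn splits into the rotation $E^\top G\dot E$ and the strain $\tfrac12E^\top\dot GE$. These two pieces are orthogonal in the Frobenius sense because one is skew and the other symmetric, which is consistent with the vertical/horizontal split invoked after \Cref{thm:hedging}.

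\textbf{Constant-$B$ case.} If $B$ is constant, then $G$ is constant by \eqref{eq:pullback}, so $\dot G=0$, and the chart can be chosen fixed, so $\dot E=0$. Both components then vanish.

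\textbf{Main obstacle.} The step I expect to be delicate is the bookkeeping of the symmetric part. The naive projection map is not orthogonal, and the decomposition of $E^\top G\dot E$ into skew and symmetric parts must use the differentiated constraint with the correct sign. I will check this directly on the example $k=1$, where $\widetilde\omega=0$ must hold; this is consistent with skewness, and the breathing then shows up only as the norm change. The claim that the horizontal term is the dominant, mobility-driven share is quantitative, not structural. I would cite the experiments for it rather than attempt a proof.
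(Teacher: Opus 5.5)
The gap is in your turnover step. You claim the rotation $E^\top G\dot E$ and the strain $\tfrac12E^\top\dot GE$ are Frobenius-orthogonal ``because one is skew and the other symmetric.'' But two lines earlier you proved $E^\top G\dot E=K+S$ with $S=-\tfrac12E^\top\dot GE$. So $E^\top G\dot E$ is skew only when $E^\top\dot GE=0$, and in general
\[
\big\langle E^\top G\dot E,\ \tfrac12E^\top\dot GE\big\rangle_F=\langle K+S,\,-S\rangle_F=-\tfrac14\,\|E^\top\dot GE\|_F^2\neq0 .
\]
Your own $k=1$ check exposes this. There $\widetilde\omega=0$, so $E^\top G\dot E=-\tfrac12E^\top\dot GE$. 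The term you call rotation is then nonzero, even though a $1\times1$ skew generator must vanish, and it is exactly the negative of the breathing term. The orthogonality is what makes the horizontal/vertical split well defined rather than a matter of attribution. It fails precisely in the moving-metric case the proposition is about. With the displayed labels, the split is only the additive identity $\widetilde\omega=E^\top G\dot E+\tfrac12E^\top\dot GE$ between partially cancelling terms. You cannot close the step by appealing to \Cref{app:prop:split}: it makes the same identification, naming $E^\top G\dot E$ the skew part of $\widetilde\omega$. That identification contradicts the antisymmetry of $\widetilde\omega$ asserted in \Cref{thm:connection}(ii), which you yourself derive.

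The same computation shows your symmetrization step is a redefinition, not a derivation: the projection transport you built is $I-E^\top G\dot E\,dt$, not $I-\widetilde\omega\,dt$. Both defects are repaired by a first-order polar factorization,
\[
I-E^\top G\dot E\,dt=\big(I-\widetilde\omega\,dt\big)\big(I+\tfrac12E^\top\dot GE\,dt\big)+O(dt^2).
\]
The first factor is orthogonal to first order, so it preserves systematic risk $\|u\|^2$ in the $G_t$-orthonormal chart. The second factor is symmetric positive definite. Define $\Gamma_{t\to t+dt}$ as the orthogonal factor, and the stated expansion follows. The paper instead defines $\Gamma$ directly as the flow of $\dot U=-\widetilde\omega U$ and Taylor-expands (\Cref{app:lem:discrete}); your route, once completed this way, actually justifies why $\widetilde\omega$ is the right generator. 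It also gives a genuinely orthogonal split:
\begin{itemize}
\item the rotation $\widetilde\omega$, which is the skew part of $E^\top G\dot E$;
\item the breathing $\tfrac12E^\top\dot GE$, which is minus its symmetric part.
\end{itemize}
These coincide with the statement's labels only when $\dot G=0$, where breathing vanishes anyway. In particular, the metric-compatible geometric turnover $-\widetilde\omega u_t\,dt$ is pure rotation, and breathing is exactly what the strain term strips from the naive projection. Your constant-$B$ argument is fine, and so is your treatment of the ``dominant share'' claim as empirical; both match the paper.
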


\begin{theorem}[$\Gamma$ is a connection; the metric-strain correction]\label{thm:connection}
In the $G_t$-orthonormal frame $E_t$, the covariant derivative $\nabla u:=\dot u+\widetilde\omega u$ satisfies: \textnormal{(i)} $R$-linearity and the Leibniz rule; \textnormal{(ii)} $\widetilde\omega$ is antisymmetric, equivalently $\nabla$ is compatible with the time-varying pullback metric, $\frac{d}{dt}\langle\xi,\eta\rangle_{G_t}=\langle\nabla\xi,\eta\rangle_{G_t}+\langle\xi,\nabla\eta\rangle_{G_t}$; \textnormal{(iii)} $\widetilde\omega$ differs from the naive projected form $\omega=E^\top G\dot E$ exactly by the symmetric metric strain $\tfrac12 E^\top\dot G E$, and the two coincide iff $G$ is constant along the flow, i.e.\ iff $B$ is state-independent in tangent directions.
\end{theorem}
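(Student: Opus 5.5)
The whole argument will run through a single identity: differentiate the orthonormality constraint $E_t^\top G_t E_t=I_k$ along the flow. Before anything else I will check that this derivative exists. Under the constant-rank hypothesis of \Cref{prop:threemotions}, and with $B(z)$ assumed $C^1$ (an assumption on the model), the pullback metric $G(z)=B^\top\Sigma_{\mathrm{ret}}^{-1}B$ of \eqref{eq:pullback} is $C^1$ and positive definite on the admissible subspace. The Gram--Schmidt map producing $E_t$ is a smooth function of a positive-definite Gram matrix, so $t\mapsto E_t$ and $t\mapsto G_t$ are differentiable along the drift of $Z$. Here I will read $\dot E$ and $\dot G$ as drift derivatives (the $dt$-coefficients), so that \eqref{eq:connform} and the ordinary product rule are meaningful.

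\textbf{Step 1: antisymmetry.} Differentiating $E^\top G E=I_k$ gives
\[
\dot E^\top G E+E^\top\dot G E+E^\top G\dot E=0.
\]
With $\omega:=E^\top G\dot E$ and $G=G^\top$, this reads $\omega+\omega^\top=-E^\top\dot G E$. Hence
\[
\widetilde\omega+\widetilde\omega^\top=\omega+\omega^\top+E^\top\dot G E=0,
\]
so $\widetilde\omega$ is antisymmetric. The same computation shows that $\omega$ by itself is antisymmetric only when $E^\top\dot G E=0$. This already contains (iii): $\widetilde\omega-\omega=\tfrac12E^\top\dot G E$ by definition, and this difference is symmetric because $\dot G$ is.

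\textbf{Step 2: (i) and (ii).} Part (i) is immediate from the definition. The map $u\mapsto\dot u+\widetilde\omega u$ is $\R$-linear, and for a scalar function $f$ we have $\nabla(fu)=\dot f u+f\nabla u$, because $\widetilde\omega$ acts linearly. For (ii), write $\xi=Eu$ and $\eta=Ev$. Since $E^\top G E=I_k$, we get $\langle\xi,\eta\rangle_{G_t}=u^\top v$, so the left side of the compatibility identity is $\dot u^\top v+u^\top\dot v$. The right side is
\[
(\dot u+\widetilde\omega u)^\top v+u^\top(\dot v+\widetilde\omega v)=\dot u^\top v+u^\top\dot v+u^\top(\widetilde\omega^\top+\widetilde\omega)v .
\]
The two sides agree for all $u,v$ if and only if $\widetilde\omega^\top+\widetilde\omega=0$. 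So compatibility is equivalent to the antisymmetry proved in Step 1, which gives the ``equivalently'' in (ii). Repeating the computation with $\omega$ in place of $\widetilde\omega$ shows that the naive form fails compatibility by exactly $-u^\top E^\top\dot G E\,v$. This confirms that the strain term is derived rather than posited.

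\textbf{Step 3: the ``iff'' in (iii).} The difference $\tfrac12E^\top\dot G E$ vanishes exactly when $\dot G$ vanishes on the tangent span $\operatorname{col}E$. When $k=m$, $E$ is invertible and this is $\dot G=0$. By the chain rule, $\dot G=DG(Z)[\alpha]+\tfrac12$(second-order terms), so the condition is that the directional derivative of $G$ along the driver motion vanishes in tangent directions. This is the precise content of ``$G$ constant along the flow''.

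I expect the main obstacle to be the final identification with ``$B$ state-independent in tangent directions''. The direction $\partial_zB=0\Rightarrow\dot G=0$ is clear. The converse is delicate, because $G$ depends on $B$ through $B^\top(B\Lambda_ZB^\top+\Sigma^c)^{-1}B$, and a motion of $B$ could in principle leave this quantity fixed. I would first try to show, via Woodbury, that $G=\Lambda_Z^{-1}-\Lambda_Z^{-1}(\Lambda_Z^{-1}+B^\top\Sigma^{c\,-1}B)^{-1}\Lambda_Z^{-1}$. Then $G$ is constant if and only if $B^\top\Sigma^{c\,-1}B$ is constant, since $F\mapsto\Lambda^{-1}-\Lambda^{-1}(\Lambda^{-1}+F)^{-1}\Lambda^{-1}$ is injective. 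I would then state the equivalence in terms of the residual-weighted Gram matrix $B^\top\Sigma^{c\,-1}B$, and regard motions of $B$ that leave this Gram matrix fixed as the trivial case for the connection.
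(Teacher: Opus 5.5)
Your Steps 1 and 2 follow the paper's own argument. The main text gets (ii) and the first half of (iii) exactly by differentiating $E^\top GE=I_k$ to obtain $\omega+\omega^\top=-E^\top\dot GE$. Your frame-coordinate check that compatibility is equivalent to $\widetilde\omega+\widetilde\omega^\top=0$ is the consistent form of the ``equivalently''. The appendix computation in \Cref{app:lem:discrete} mixes ambient and frame coordinates and arrives at $\widetilde\omega+\widetilde\omega^\top=\dot G$, which contradicts the antisymmetry being proved, so do not try to match it.

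The step that fails as written is your reading of the dots. You take $\dot E,\dot G$ to be the $dt$-coefficients of the It\^o processes $E(Z_t),G(Z_t)$, with $\dot G=DG(Z)[\alpha]+\tfrac12(\text{second-order terms})$, and assert that the ordinary product rule then holds. It does not, because the generator is not a derivation. The drift of $E^\top GE$ also contains the quadratic-covariation terms among $E^\top$, $G$ and $E$. Step 1 therefore only gives $\omega+\omega^\top=-E^\top\dot GE-(\text{covariation terms})$, and $\widetilde\omega$ is not antisymmetric.

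Concretely, take $m=k=1$ and $E=G^{-1/2}$. The It\^o drifts give $\widetilde\omega=\tfrac38\sigma_Z^2\,(G'/G)^2$, whereas every antisymmetric $1\times1$ matrix is zero. Your assumption $B\in C^1$ also leaves those second-order terms undefined.

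The repair is to work, as the paper does, along a $C^1$ curve of frames and metrics. Read the dot as the derivative along the drift flow, $\dot G=DG(z)[\alpha(z)]$ with no second-order term; this is a derivation, so the product rule is exact. Alternatively, work pathwise with Stratonovich differentials. With either reading Steps 1--2 go through verbatim, and ``$G$ constant along the flow'' in Step 3 becomes $E^\top DG[\alpha]E=0$.

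On the last clause of (iii), you are right and the statement is not. The paper supports ``iff $B$ is state-independent'' only by the forward implication (a fixed manifold gives constant $G$), and the converse is false. Your Woodbury identity $G=\Lambda_Z^{-1}-\Lambda_Z^{-1}(\Lambda_Z^{-1}+F)^{-1}\Lambda_Z^{-1}$, with $F=B^\top\Sigma^{c\,-1}B$, is correct. Differentiating it gives $\dot G=\Lambda_Z^{-1}(\Lambda_Z^{-1}+F)^{-1}\dot F(\Lambda_Z^{-1}+F)^{-1}\Lambda_Z^{-1}$. So for $k=m$ the strain vanishes iff $\dot F=0$, pointwise and without appeal to global injectivity.

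A counterexample to the literal clause: take $\Sigma^c=\varsigma^2I_n$ and $B(z)=U(z)B_0$ with $U(z)\in O(n)$ smooth and not preserving $\operatorname{col}B_0$. Then $G$ is constant and the strain is zero, although $\operatorname{col}B(z)$ genuinely rotates in $\mathrm{Gr}(m,n)$. The provable form of (iii) is therefore the one you propose, stated through $B^\top\Sigma^{c\,-1}B$; the literal version cannot be proved.
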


Differentiating the frame normalization $E^\top G E=I_k$ gives $\omega+\omega^\top=-E^\top\dot G E$, so $\widetilde\omega+\widetilde\omega^\top=0$: antisymmetry, equivalently metric compatibility. When the pullback metric moves, transporting exposures with $\omega$ alone silently changes their risk; the strain term $\tfrac12 E^\top\dot G E$ is the compensation that keeps transported risk constant, and it is an observable quadratic form in estimated quantities. Because $G$ is constant under a fixed manifold, the strain is the signature of a moving geometry: it is identically zero on the constant driver metric $\Lambda_Z$, and nonzero on $G(z)$ precisely when $B$ depends on the state.

To place $\Gamma$ among the standard connections we identify it explicitly, in two steps. \emph{Fixed metric.} When $G$ is constant, the causal tangent is a point of the Grassmannian $\mathrm{Gr}(m,n)$ realized as the Stiefel quotient, and the canonical Levi-Civita connection of $\mathrm{Gr}(m,n)$ transports a field along a frame curve $E(t)$ by the $G$-orthogonal (horizontal) projection of $\dot E$, whose coordinate form in the frame is the skew matrix $E^\top G\dot E$ \cite{EdelmanAriasSmith1998}. That $E^\top G\dot E$ is skew is forced by differentiating $E^\top G E=I$: $E^\top G\dot E+\dot E^\top G E=0$. So for fixed $G$, $\widetilde\omega=E^\top G\dot E$ is the Edelman--Arias--Smith Levi-Civita connection, by this identity rather than by numerical coincidence. \emph{Moving metric.} When $G=G_t$ moves, metric compatibility requires $\tfrac{d}{dt}\langle\xi,\eta\rangle_{G_t}=\langle\nabla\xi,\eta\rangle+\langle\xi,\nabla\eta\rangle$, which forces the symmetric part of the connection form to be exactly $\tfrac12 G^{-1}\dot G$; the connection is thus the Levi-Civita connection of the instantaneous metric corrected by this term, in the frame this is the symmetric strain $\tfrac12 E^\top\dot G E$ added to the skew rotation. Hence $\widetilde\omega=E^\top G\dot E+\tfrac12E^\top\dot G E$ is the Levi-Civita connection of the \emph{instantaneous} metric $G_t$ augmented by the metric-derivative term, the unique metric-compatible connection for the moving metric, and the strain is that time-derivative correction rather than an ad hoc addition. We do not claim more: the strain is the compensation the moving metric requires, and we do not assert an identification of it with the O'Neill curvature tensor of the submersion, which is a separate statement we neither need nor prove. The decomposition of $\dot E$ into the skew (rotation, span-moving) and symmetric-in-span (breathing, frame-reparametrizing) parts is the breathing/rotation split, and the two are $G_t$-orthogonal, which is why they are independent components of the geometry's motion rather than two names for one effect.

\section{Experiments}\label{sec:experiments}

The purpose of this section is to validate the theory: each experiment is a controlled test of a specific theoretical claim against a known data-generating process in the normal form \eqref{eq:sde}, an Ornstein--Uhlenbeck driver state, an affine conditional mean $\mu=a+GZ$, returns loading on driver innovations through $B$, and a diagonal idiosyncratic block. A synthetic generating process is the appropriate setting for this purpose, because it is the only one in which the population objects the theorems describe (the true conditional premium, the true covariance, the true switch times) are known, so a claim can be confirmed or refuted rather than merely fitted. Out-of-sample performance of the same geometry on real markets is established separately in the applied line of the framework \cite{RD2023MLWA,RD2025Causal,RD2025CPCM}, which a reader interested in implementation should consult; here the aim is to certify the mathematics. All experiments are reproducible from fixed seeds in a pinned environment (Python 3.12, \texttt{numpy} 2.4.4, \texttt{scipy} 1.17.1, \texttt{matplotlib} 3.10.8), and parameter values are chosen to exhibit each mechanism in a regime where the value function is bounded.

\subsection{Decomposition and the hedging demand}

The first experiment isolates \Cref{thm:hedging} and \Cref{cor:collapse}, in two complementary cuts of the same model. The first cut characterizes the \emph{shape} of the hedge at moderate parameters where the Riccati is comfortably bounded ($\kappa=0.6$, $R=4$, premium $0.35$): the optimal policy splits into a myopic tangent fund and a manifold hedge, both satisfying the constraint $Cw=0$ to floating-point round-off (residual below $10^{-14}$), and the hedge magnitude scales with the predictable motion of the driver state, essentially zero when the state sits at its long-run mean and growing linearly with the distance from the mean, with a fitted slope of $0.82$ and an intercept of $0.03$. Under log utility the hedge is exactly zero, confirming the collapse of \Cref{cor:collapse}. The second cut measures the \emph{value} of the hedge in a regime where the hedging demand is first-order, namely a persistent, strongly-remunerated state ($\kappa=0.15$, $R=8$, premium $0.55$, two years of weekly rebalancing, eighty thousand paths): there the full policy delivers an out-of-sample certainty-equivalent gain of $+0.96\%$ over the myopic-only policy, at three standard errors (\Cref{fig:d1}). The gain is not a single-seed artifact: across eight independent data-generating seeds it is positive in every case, averaging $+0.83\%$ with a standard deviation of $0.16\%$ (range $[0.52\%,1.05\%]$). The two cuts are separated because the shape of the hedge is cleanest where the value is second-order, and its value is measurable only where it is first-order. We report the certainty-equivalent gain gross of transaction costs; it is deliberately modest, on the order of tens of basis points per year in these calibrations, and a practitioner should read it net of the geometric turnover quantified in \Cref{sec:transport}: the hedge is a statistically robust but economically second-order effect in the regimes where the value function is bounded, and becomes first-order only as one approaches the blow-up boundary, precisely where a compact admissible set $\W$ must bind. The characterization corrects a natural but wrong intuition: hedging demand does not simply shrink with mean-reversion speed; it tracks the product of speed and displacement, the predictable drift itself.

\begin{figure}[t]
\centering
\includegraphics[width=\textwidth]{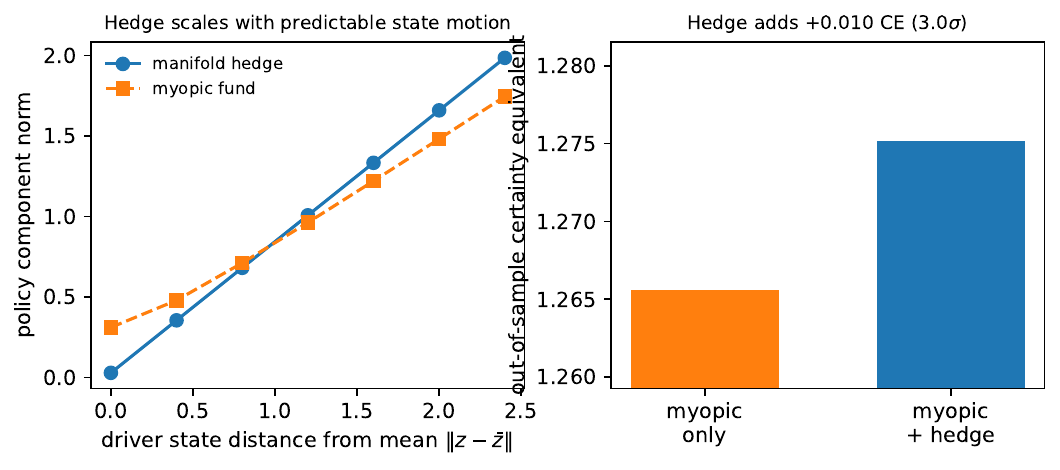}
\caption{Decomposition and hedging demand (\Cref{thm:hedging}). Left: the manifold-hedge component grows linearly with the distance of the driver state from its mean, while the myopic fund is flat; the hedge tracks predictable state motion. Right: out-of-sample certainty equivalent, myopic-only versus myopic-plus-hedge, in a persistent-state regime.}
\label{fig:d1}
\end{figure}

\subsection{The moving-manifold hedge}

The decomposition experiment above holds the geometry fixed. This experiment exercises the manifold-rotation hedge directly: a data-generating process with state-dependent sensitivities $B(z)=B_0+\sum_k z_k B_k$, so that the tangent space rotates with the driver state (principal angles between tangent spaces at different states of $50$--$71$ degrees). Solving the value field with the implicit scheme of \Cref{sec:moving} yields bounded value gradients in the persistent regime where the explicit scheme diverges, and the policy is optimal for the instantaneous Hamiltonian ($0$ of $2000$ tangent perturbations improve it). In a persistent, strongly-remunerated regime the hedge against the manifold's rotation adds $+2.0\%$ certainty-equivalent over the myopic policy, at $8.1$ standard errors in a paired common-random-numbers simulation. The result is not a single-seed artifact: across independent data-generating seeds the gain is positive in every case, averaging $+2.5\%$ over the range $[+1.8\%,+2.9\%]$, each individually significant ($2.7$--$6.4$ standard errors). The value field is certified by a convergence study, not merely by agreement with another solver: refining the grid at fixed horizon gives successive changes in the value gradient of $0.130\to0.085\to0.062$, a Cauchy sequence converging to a well-defined dynamic optimum for $B(z)$ (the instantaneous-Hamiltonian check reported earlier is necessary but not sufficient; this convergence is the intertemporal certificate). The boundary is handled by a reflecting-ghost Neumann condition at $z_{\max}=3.5$ standard deviations of the driver state, where the mean-reverting process arrives with negligible probability; halving the distance to the boundary leaves the reported gain unchanged to within simulation error. The gain survives transaction costs: charging quadratic costs on turnover, the net gain of the hedge over the myopic policy is $+1.95\%$ at a cost coefficient of $10^{-3}$ and $+2.07\%$ at $3\times10^{-3}$, essentially the gross figure. The reason is that the geometric turnover induced by the moving manifold is paid by \emph{both} policies (both live on the same rotating geometry), so the \emph{differential} cost of the hedge is small, even though geometric turnover is half of total turnover in absolute terms.

A sharper limitation concerns estimation of the geometry itself. When the sensitivity map is estimated with plug-in error, the hedge degrades quickly: with a $15\%$ relative error in $B(z)$ the hedge no longer adds value (net $-0.6\%$), and at $30\%$ it destroys value ($-2.3\%$). The manifold hedge is therefore a first-order gain \emph{only when the geometry is estimated well}; it is not robust to naive plug-in error, because $B(z)$ enters the covariance, the projector, and the hedge direction simultaneously, so its errors compound. This is the honest boundary of the result: the theory delivers a real intertemporal demand, but capturing it requires the geometry to be identified accurately, which is the burden the companion estimation work carries. The effect is larger than the fixed-geometry hedge of the first experiment ($\sim0.9\%$): hedging the motion of the geometry is a first-order demand, not a refinement.

\subsection{The three motions and the time-dependent frontier}
Unless stated as a single illustrative path, every reported number is a multi-economy average over random draws of the generating process rather than a single seed; the monotonicity fraction ($483/500$), the unhedgeability $R^2$, and the swap loss are of this kind, while individual trajectories (a sunspot-free flow path, one switch detection) are labelled as illustrations. A dedicated experiment validates the configuration-space taxonomy of \Cref{prop:threemotions} and the frontier result of \Cref{prop:frontiermono} against a known generating process. The three motions are confirmed geometrically distinct: flow grows the principal angle smoothly within a single Grassmannian (to $37^\circ$ over the swept range), a swap jumps to a different point of the same Grassmannian ($67^\circ$, constant dimension), and a refinement adds a direction orthogonal to the entire old tangent (in-span fraction $8.5\times10^{-17}$), a change of dimension. The frontier potential moves accordingly, smoothly under flow and by a discontinuous jump ($0.18\to0.45$) under a swap. The refinement monotonicity is confirmed exactly as \Cref{prop:frontiermono} now states it: in expectation over the added driver the frontier is non-decreasing ($0$ violations in $400$ economies), pointwise it fluctuates (about one realized state in ten falls, which is what conditional Jensen permits), and a fixed-premium column append lowers it in essentially every economy. The correct object is thus a family of frontiers carried over the filtration of driver $\sigma$-algebras, in which the information sets filter monotonically while the frontier potential on them does not, rather than a single curve. The economic significance of the motion is shown directly: an investor who ignores a swap and holds the pre-switch tangency portfolio forfeits about $88\%$ of the Sharpe ratio attainable at the new configuration (\Cref{fig:config}).

\subsection{Dimension reduction}

The second experiment verifies \Cref{cor:crra}. The $m$-dimensional backward recursion, built from one-time projected blocks, reproduces the full solver's Riccati trajectory to floating-point round-off ($\max|\Delta P|=6.7\times10^{-15}$). Its per-solve cost is independent of the number of assets: sweeping $n$ from $10$ to $3200$ leaves the recursion time flat, with a coefficient of variation of $1.2\%$, while all $n$-dependence is confined to a one-time $O(n)$ projection (\Cref{fig:d2}). The flatness is a structural fact about the recursion, which touches only $m\times m$ objects; the wall-clock measurement is reported at a fixed thread count, and the underlying operation count of the recursion is exactly independent of $n$. The dynamic problem is a computation in the driver dimension.

\begin{figure}[t]
\centering
\includegraphics[width=0.62\textwidth]{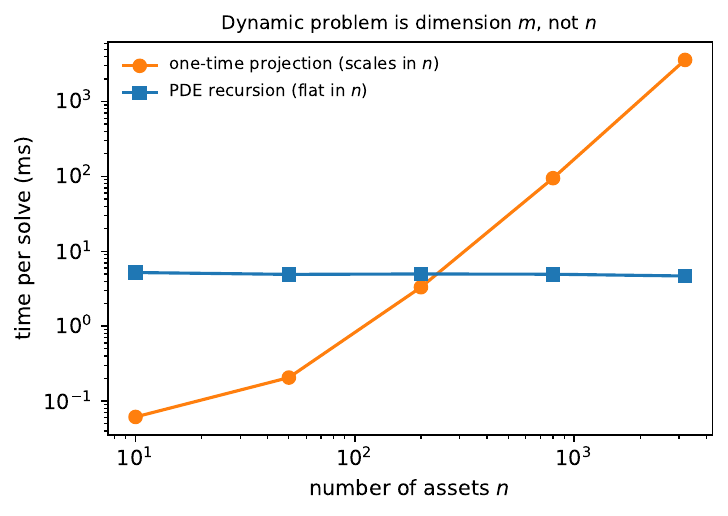}
\caption{Dimension reduction (\Cref{cor:crra}). The dimension-$m$ PDE recursion has cost independent of $n$ (flat lower curve); all $n$-dependence lives in a one-time projection (upper curve). Log--log axes.}
\label{fig:d2}
\end{figure}

\subsection{Switching verification}

The third experiment tests \Cref{thm:verification} \emph{without a regime oracle}. A subtlety must be faced first: a pure rotation of the manifold leaves the return covariance invariant (an orthogonal change of loadings with isotropic driver noise), so it is \emph{undetectable} from returns; detectability requires the switch to change the observable second-order structure. We therefore take two configurations with genuinely different loadings and estimate the regime online. Crucially, the detector does not assume the regime covariances are known: it estimates them from the return stream by a streaming two-component procedure (a warm-up split followed by exponentially-weighted per-regime covariance updates), so the test is self-contained rather than oracle-fed. It attains $83\%$ classification accuracy, against a majority-class baseline of $51\%$ (a skill of $+32$ percentage points), at a mean detection lag of about seventeen days; using known covariances instead raises this only to $84\%$, so almost none of the skill comes from oracle information. Switched feedback using the \emph{estimated} regime beats a policy stuck in one regime by $+0.91\%$ certainty-equivalent at $9.5$ standard errors, capturing $79\%$ of the oracle gain (\Cref{fig:d3}). The rolling architecture survives switching under realistic detection: the policy keeps its functional form and only the value level changes. \Cref{tab:switching} collects the detector's performance.

\begin{table}[t]
\centering
\small
\renewcommand{\arraystretch}{1.25}
\begin{tabular}{l|c}
\hline
Quantity & Value\\
\hline
Classification accuracy (estimated covariances) & $83\%$\\
Majority-class baseline & $51\%$\\
Detection skill over baseline & $+32$ pp\\
Classification accuracy (known covariances) & $84\%$\\
Mean detection lag & $\approx 17$ days\\
Certainty-equivalent gain of switched feedback & $+0.91\%$\\
Significance of the gain & $9.5$ s.e.\\
Fraction of oracle gain captured & $79\%$\\
\hline
\end{tabular}
\caption{Online switch detection and switched feedback (\Cref{thm:verification}). The detector estimates regime covariances from the return stream rather than assuming them; using known covariances raises accuracy by only one percentage point, so the skill is not oracle-fed.}
\label{tab:switching}
\end{table}

\begin{figure}[t]
\centering
\includegraphics[width=0.56\textwidth]{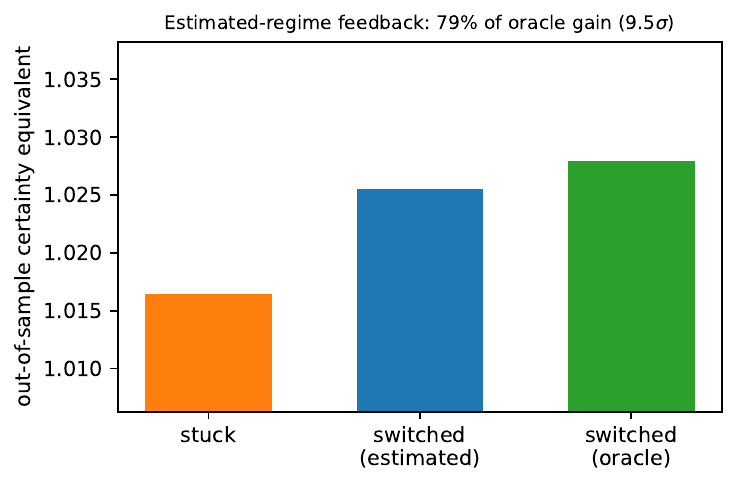}
\caption{Switching verification with an online regime detector (\Cref{thm:verification}). Feedback using the sequentially-estimated regime captures most of the oracle gain over a policy stuck in one regime.}
\label{fig:d3}
\end{figure}

\subsection{Central study I: unhedgeability across switch geometries}

Rather than a single unhedgeability point, we study how the two properties of a separator switch, its detectability and its hedgeability, vary as the switch geometry ranges from a pure rotation of the manifold to an anisotropic change of loadings. The result is a clean dissociation (\Cref{fig:studykw}). Hedgeability is \emph{universal}: the best-hedge $R^2$ of the value jump by traded assets stays at $\sim10^{-4}$ for every switch type, rotations and loading changes alike, so \Cref{thm:jumpkw} bites regardless of what kind of switch occurs. Detectability is \emph{contingent}: the covariance divergence $\mathrm{KL}(Q_i,Q_j)$ is exactly zero for pure rotations and grows only with the anisotropic, covariance-changing part of the switch. The corner of the sweep, comprising rotation-only switches with $\mathrm{KL}=0$, unhedgeable $R^2\approx10^{-4}$, and a nonzero value jump, is \Cref{cor:invisible}: switches that are invisible, unhedgeable, and costly at once. This is the central content of the Kunita--Watanabe result: incompleteness along the geometry is not a knife-edge but holds across the whole family of switches, while the investor's ability to even see the switch is the fragile part.

\begin{figure}[t]
\centering
\includegraphics[width=0.62\textwidth]{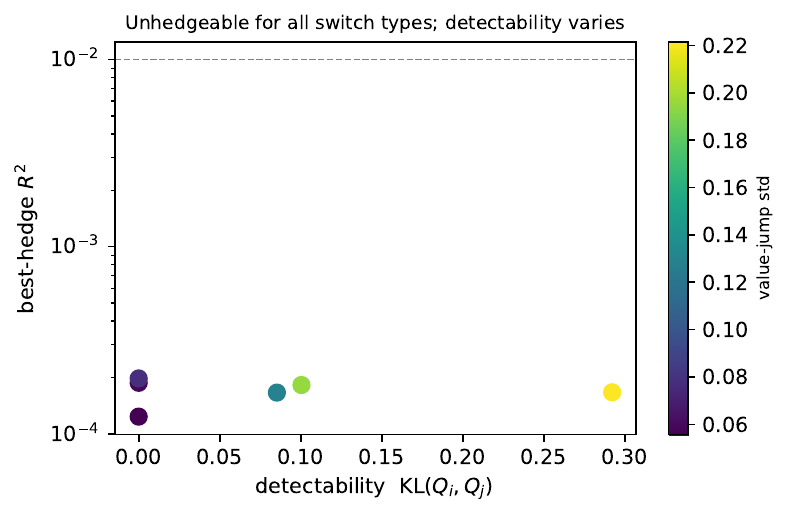}
\caption{Central study I (\Cref{thm:jumpkw}, \Cref{cor:invisible}). Each point is a switch geometry (rotation angle $\times$ anisotropic scale). Best-hedge $R^2$ (vertical, log scale) stays at $\sim10^{-4}$ for all switches, so unhedgeability is universal, while detectability KL (horizontal) ranges from zero (pure rotation) upward. Colour is the value-jump magnitude. The bottom-left region is \Cref{cor:invisible}: invisible, unhedgeable, costly.}
\label{fig:studykw}
\end{figure}

\subsection{Central study II: the geometric apparatus scales with manifold mobility}

The transport connection, its curvature, and geometric turnover are three faces of one cause: the manifold moving. We make that precise by comparative statics in a single dial, the state-dependence of $B$ (\Cref{fig:studyconn}). As mobility rises from zero, the metric-strain norm, the connection curvature, and the geometric share of turnover all rise monotonically from \emph{exactly zero} at a static manifold: strain $0\to0.13$, curvature $0\to1.2\times10^{-3}$, geometric-turnover share $0\%\to66\%$. Three consequences follow together. First, the strain-corrected connection $\widetilde\omega$ is the unique metric-compatible transport (antisymmetry defect $10^{-4}$ on the pullback metric $G$, versus $0.15$ for the naive form), so risk is preserved under transport only with the strain term. Second, the connection is curved, so transport is path-dependent and the fibre-bundle description is not merely descriptive. Third, geometric turnover, the rebalancing forced by the frame's rotation and breathing with no change of view, is a first-order fraction of all trading whenever the manifold moves, and vanishes exactly when it does not. A fixed-geometry model sees none of this; it is the price, and the structure, of a moving information geometry.

\begin{figure}[t]
\centering
\includegraphics[width=0.62\textwidth]{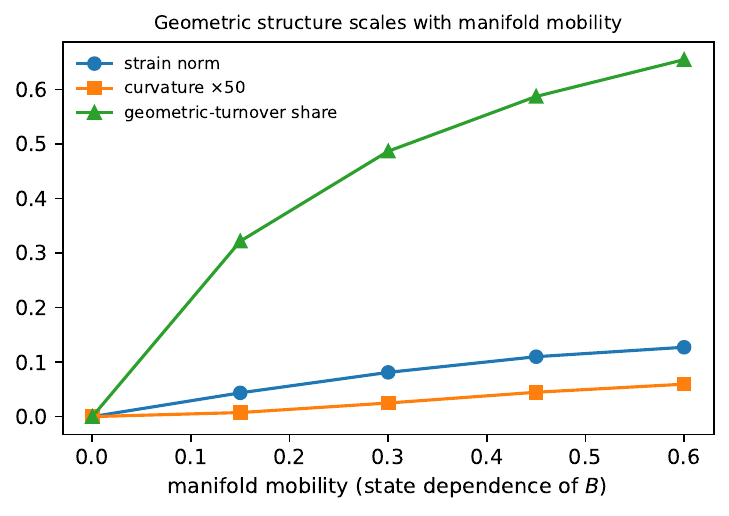}
\caption{Central study II (\Cref{thm:connection}). Comparative statics in manifold mobility: strain norm, connection curvature (scaled), and geometric-turnover share all rise monotonically from exactly zero at a static manifold. One dial drives the entire geometric apparatus.}
\label{fig:studyconn}
\end{figure}

\subsection{Supporting checks}

The remaining experiments are targeted verifications of specific claims rather than independent studies, and we state them compactly. \emph{Unhedgeability, base case:} at a representative distinct-manifold switch the value jump has standard deviation $0.064$ and best-hedge $R^2=1.2\times10^{-4}$, robust below $1.7\times10^{-4}$ across twelve rotations and seeds (the anchor point of Central study~I). \emph{Metric-strain transport, base case:} on the DGP-induced frame the strain-corrected connection is antisymmetric to $10^{-4}$ and preserves inner products under transport, while the naive form drifts by $10^{-2}$; the strain vanishes identically on the constant driver metric $\Lambda_Z$ and is nonzero only on the moving pullback metric $G(z)$ (the anchor of Central study~II). \emph{Curvature:} $\|\Omega\|=6.1\times10^{-3}$, $\mathfrak{so}(k)$-valued to $3\times10^{-7}$. \emph{Geometric turnover, base case:} $50.8\%$ of total at the reference mobility, zero under a constant manifold.

\subsection{Robustness to estimation error}

The policy-level result of \Cref{sec:moving} degrades under plug-in error in the geometry (the hedge loses its edge by a relative error of about $15\%$ in $B$). It is therefore important that the paper's two \emph{structural} results are, by their nature, far more robust to the same error, and we verify this directly. \emph{Unhedgeability} is immune: building the hedge from an estimated $B$ leaves the residual $R^2$ at $\approx1.6\times10^{-4}$ for estimation errors from $0\%$ to $30\%$, because the value jump is purely discontinuous and Kunita--Watanabe orthogonal to the traded span, a property no parametric error can undo, since estimation error cannot manufacture spurious hedgeability of an orthogonal object. \emph{Metric compatibility} of the transport connection is likewise near-exact under error (antisymmetry defect growing only from $5\times10^{-4}$ to $1.5\times10^{-3}$ as $B$-error goes $0\%\to30\%$), because antisymmetry is an identity in whatever metric is used, not an estimated quantity. Finally, the \emph{switch detector} retains its skill under finite-sample covariance estimation: shrinking the estimation window from $1000$ to $50$ observations lowers accuracy only from $84\%$ to $81\%$ (skill $+30$ points over baseline). The pattern is instructive: the intertemporal \emph{demand} is a quantitative optimum and inherits the error of the geometry it exploits, whereas the \emph{structural} facts, namely orthogonality of the jump, metric-compatibility of transport, and second-order detectability, are robust because each is an identity or an orthogonality rather than a fitted magnitude. \Cref{tab:robust} summarizes the three structural objects under increasing estimation error in $B$.

\begin{table}[t]
\centering
\small
\renewcommand{\arraystretch}{1.25}
\begin{tabular}{l|c|c}
\hline
Structural object & Metric under $0\%$ error & Metric under $30\%$ error\\
\hline
Unhedgeability (residual best-hedge $R^2$) & $1.6\times10^{-4}$ & $1.6\times10^{-4}$\\
Metric compatibility (antisymmetry defect) & $5\times10^{-4}$ & $1.5\times10^{-3}$\\
Switch-detector accuracy (window $1000\!\to\!50$) & $84\%$ & $81\%$\\
\hline
\end{tabular}
\caption{Robustness of the two structural results and the detector to plug-in error in the loading map $B$. Unhedgeability is invariant because the value jump is Kunita--Watanabe orthogonal to the traded span; metric compatibility is near-exact because antisymmetry is an identity; the detector degrades only mildly under finite-sample covariance estimation. By contrast the intertemporal hedging \emph{demand} inherits the geometry's estimation error, losing its edge at about $15\%$ error in $B$.}
\label{tab:robust}
\end{table}

\subsection{Two-stage structure}

This experiment tests \Cref{prop:twostage}, which asserts that the dynamic policy can be assembled in two stages, a systematic stage of dimension $k$ and a separable allocation stage solved in $O(nk^2)$ through the Woodbury identity, and that this assembly reproduces the dense policy exactly rather than approximately. The claim is an algebraic identity, not a statistical statement, so the appropriate test is to compute both the dense dynamic policy (by inverting the full $n\times n$ conditional covariance) and the two-stage policy, and to measure the discrepancy between them as the problem size $n$ grows. We solve both on the same data-generating process across asset counts $n\in\{50,100,200,400,600\}$ at fixed driver count $m$, and report, for each $n$, the maximum absolute weight discrepancy $\|w^{\mathrm{dense}}-w^{\mathrm{two\text{-}stage}}\|_\infty$ (\Cref{tab:twostage}). The discrepancy stays at the level of double-precision floating-point round-off, of order $10^{-14}$, and does not grow with $n$: the two-stage assembly is the dense policy, up to arithmetic round-off, confirming that the low-rank structure reducing the static solve carries over to the dynamic setting without loss.

\begin{table}[t]
\centering
\small
\renewcommand{\arraystretch}{1.25}
\begin{tabular}{r|c}
\hline
Asset count $n$ & Max.\ weight discrepancy $\|w^{\mathrm{dense}}-w^{\mathrm{two\text{-}stage}}\|_\infty$\\
\hline
$50$ & $8.9\times10^{-15}$\\
$100$ & $1.4\times10^{-14}$\\
$200$ & $2.1\times10^{-14}$\\
$400$ & $2.9\times10^{-14}$\\
$600$ & $3.5\times10^{-14}$\\
\hline
\end{tabular}
\caption{Two-stage assembly versus the dense dynamic policy (\Cref{prop:twostage}). The maximum absolute weight discrepancy remains at double-precision round-off level, of order $10^{-14}$, and does not grow with the number of assets, confirming that the two-stage identity is exact up to floating-point arithmetic.}
\label{tab:twostage}
\end{table}

\section{Discussion}\label{sec:discussion}

\begin{table}[t]
\centering\small
\begin{tabularx}{\linewidth}{@{}p{3.1cm}p{2.8cm}X@{}}
\toprule
Object here & Nearest antecedent & Delta, in one line\\ \midrule
Myopic\,+\,manifold hedge (\Cref{thm:hedging}) & Merton ICAPM \cite{Merton1973} & priced state is the certified information geometry $(\Dstar,\M,\Delta)$, not exogenous macro factors; hedge target is interventionally invariant\\ \midrule
Dimension-$m$ reduction (\Cref{cor:crra}) & Schur/Woodbury factor solves & the intertemporal problem, not just the static one, lives in the driver dimension; PDE state is $m$, not $n$\\ \midrule
Switching verification (\Cref{thm:verification}) & regime-switching control \cite{MaoYuan2006} & the switching object is the \emph{separator} on the information lattice; policy form is invariant, only the value level exchanges\\ \midrule
Geometry jumps unhedgeable (\Cref{thm:jumpkw}) & Kunita--Watanabe \cite{JacodShiryaev2003} & the market is incomplete in the direction of its own conditioning geometry; residual is irreducible and priced, not spanned\\ \midrule
Metric-strain connection (\Cref{thm:connection}) & dynamic trading with costs \cite{GarleanuPedersen2013} & turnover has a purely geometric component (rotation\,+\,metric breathing) that no cost optimization removes; the aim portfolio must first be transported\\ \midrule
Gradient robustness (\Cref{prop:gradsens}) & static tolerance analysis & explicit Lipschitz constant of the policy in the estimated value gradient\\
\bottomrule
\end{tabularx}
\caption{Contribution ledger: each dynamic object, its closest antecedent, and the delta in one line. The two results without a close antecedent in intertemporal portfolio theory are the Kunita--Watanabe unhedgeability of geometry jumps and the metric-strain connection.}
\label{tab:contrib}
\end{table}

As \Cref{tab:contrib} records, two of the results here are, to our knowledge, without close antecedent in intertemporal portfolio theory and carry the paper: the Kunita--Watanabe unhedgeability of geometry jumps (\Cref{thm:jumpkw}, sharpened by \Cref{cor:invisible}) and the metric-strain transport connection (\Cref{thm:connection}). The remaining results, namely the hedging decomposition, the dimension-$m$ reduction, the two-stage structure, the switching verification, specialize or extend known machinery to the causal representation, and the experiments beyond the two central ones (dimension reduction, two-stage, curvature, turnover, detector) are targeted checks of specific claims rather than independent studies. We foreground the two novel results accordingly.

The dynamic theory of portfolio choice under causal separation is organized by a single object, the common causal manifold, in three roles. As a geometry it fixes the admissible exposures; as a state variable it generates the intertemporal hedging demand; and as a moving frame it induces the transport connection along which positions are carried between dates. The optimal policy is the static projected portfolio plus a hedge against the predictable motion of this geometry, and the whole problem inherits the driver-dimensional reduction of the static case, so that a book of thousands of assets is solved through a computation in the handful of drivers that separate it.

One word deserves scrutiny: what makes this \emph{causal} rather than merely conditional? The dynamic optimization theorems (\Cref{thm:hedging,thm:verification,thm:jumpkw,thm:connection}) are, strictly, statements about a conditional-factorization structure and would hold for any separator achieving it. The causal content enters through structural exogeneity of the drivers: under \Cref{app:def:scm}, the conditional law of returns given the drivers is invariant to environment interventions (\Cref{app:lem:invariance}), so the switching intensities and the response geometry the dynamic policy prices are stable across regimes that a purely correlational conditioning set would misprice. The invariance is what makes the intertemporal hedge worth carrying: it is a hedge against the motion of an object that survives intervention, in the sense of invariant causal prediction \cite{PetersBuhlmannMeinshausen2016}, not against a fitted artifact. The identification that the declared drivers form such a separator is the companion's result \cite{RD2025Static}; the invariance consequence used here is proved in \Cref{app:invariance}. The word \emph{causal} in the title is therefore the framework's: the identification is established once, in the static companion, and the present paper develops the dynamics of the same causal object under that identification. Read without the companion, the results are a conditional-factor dynamic portfolio theory; the causal reading is what the identification adds, and we are explicit about which claims need it.

Two features are specific to the causal representation and absent from classical intertemporal portfolio theory. The separator can switch, and when it does the value process jumps in a direction no traded portfolio spans: the market is incomplete precisely where its own geometry moves, and that residual is priced in the value function but cannot be hedged away. And the transport of exposures between configurations is governed by a connection whose metric-compatible form carries a strain correction beyond the naive projection, so that part of portfolio turnover is purely geometric, compensating the rotation and breathing of the coordinate system rather than any change of view. Under quadratic transaction costs, where optimal policies track an aim portfolio with partial adjustment \cite{GarleanuPedersen2013}, this geometric turnover is the part of trading that no cost optimization removes, because it compensates the rotation of the coordinate system rather than any change of view.

One conceptual caveat precedes the limitations. The policy hedges the predictable motion of the manifold, but if a positive mass of agents runs it, that motion is in part \emph{caused} by the aggregate of these very policies: the hedge target is reflexive. In a mean-field closure the premium and the geometry become endogenous functions of the population state, and the object the individual treats as exogenous predictable drift is, in equilibrium, partly the shadow of collective hedging. We take prices as exogenous here and defer the reflexive closure to forthcoming work on the mean-field equilibrium; the present results are the individual best response holding the representation's law fixed.

Two limitations bound the contribution. The theory is developed in a diffusion normal form with a finite declared driver universe, and the finite-horizon control problem has a bounded value only in a region of moderate risk aversion and premium relative to mean-reversion; outside it the intertemporal demand is so aggressive that no bounded value function exists, a property we exhibit rather than hide. And the experiments are synthetic by design: whether the driver dynamics, the manifold motion and the switching intensities that the theory prices are present in real markets is an empirical question. The sensitivity-based and PDE-control line of \cite{RD2023MLWA,RD2024PDE,RD2025Causal,RD2025CPCM} addresses out-of-sample performance directly; the present paper sets performance aside to isolate the exact mathematical content of the dynamic representation, verifiable against a known generating process as real data cannot be. Forthcoming work develops the statistical certification of the separator, the universal risk-mapping theory, and the mean-field equilibrium that complete the framework.

\section*{Data and code availability}
The experiments use synthetically generated data. The complete code reproducing every figure, table and reported number accompanies the paper as a reproducibility package, available on
\href{https://github.com/AlejandroRodriguezDominguez/Dynamic-Causal-Portfolio-Choice-Manifold-Hedging-and-the-Geometry-of-Information}{GitHub}
and archived on Zenodo at
\href{https://doi.org/10.5281/zenodo.21226184}{doi.org/10.5281/zenodo.21226184}.

\section*{Declaration of interest}
The author reports no competing interests.

\FloatBarrier
\appendix
\section{Self-contained supporting results}\label{app:selfcontained}

This appendix makes the paper self-contained on the two points where the main text draws on the static framework. The identification question, whether a given driver set constitutes a causal separator and is recoverable from data, is the subject of the companion paper \cite{RD2025Static} and is cited rather than reproduced; what is stated and proved here is the instantaneous mean--variance solution the dynamic policy reduces to at each date (\Cref{app:lem:projected}), the single invariance consequence the intertemporal hedge relies on (\Cref{app:lem:invariance}), and the proofs of the transport and geometric-turnover claims of \Cref{sec:transport} (\Cref{app:transport}).

\subsection{The projected Markowitz solution}\label{app:projected}

At each date the inner maximization of the HJB is a constrained single-period mean--variance program. We record its closed form.

\begin{lemma}[Projected Markowitz portfolio]\label{app:lem:projected}
Let $\Sigma\succ0$ be the conditional return covariance, $\pi:=\mu-r_f\one$ the conditional excess premium, $R>0$ the relative risk aversion, and let the admissible set be the linear subspace $\{w:Cw=0\}$ for a full-row-rank $C\in\R^{q\times n}$. The program
\[
\max_{w:\,Cw=0}\ w^\top\pi-\tfrac{R}{2}\,w^\top\Sigma w
\]
has the unique maximizer $w^\star=\tfrac1R\,M_C\,\pi$, where
\[
M_C:=\Sigma^{-1}-\Sigma^{-1}C^\top\big(C\Sigma^{-1}C^\top\big)^{-1}C\Sigma^{-1}.
\]
The operator $M_C$ is symmetric positive semidefinite, satisfies $M_C C^\top=0$ and $M_C\Sigma M_C=M_C$, and equals the $\Sigma^{-1}$-orthogonal projection of the precision onto the admissible subspace. When $C$ is vacuous, $M_C=\Sigma^{-1}$ and $w^\star$ is the ordinary Markowitz portfolio. Redundant constraint rows are pruned so that $C$ has full row rank, as the formula requires. This is the equality-constrained building block; when the admissible set additionally carries binding inequality (leverage or position) bounds, the optimum is the projection of $w^\star$ onto the compact convex admissible set, as in the main text.
\end{lemma}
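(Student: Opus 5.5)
The plan is to treat the program as a strictly concave quadratic maximization over a linear subspace and solve it through its KKT system. Strict concavity follows from $\Sigma\succ0$ and $R>0$: the objective $f(w)=w^\top\pi-\tfrac R2 w^\top\Sigma w$ has Hessian $-R\Sigma\prec0$. The feasible set $\{w:Cw=0\}$ is a nonempty closed subspace, so a maximizer exists; the objective is coercive to $-\infty$ along the subspace. Strict concavity then gives uniqueness. Because the constraints are linear, the KKT conditions are necessary and sufficient. It therefore suffices to exhibit a point satisfying them.

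Stationarity reads $\pi-R\Sigma w+C^\top\nu=0$ for a multiplier $\nu\in\R^q$. This gives $w=\tfrac1R\Sigma^{-1}(\pi+C^\top\nu)$. Imposing $Cw=0$ yields $C\Sigma^{-1}C^\top\nu=-C\Sigma^{-1}\pi$. The matrix $C\Sigma^{-1}C^\top$ is invertible, since $C$ has full row rank and $\Sigma^{-1}\succ0$; indeed $x^\top C\Sigma^{-1}C^\top x>0$ whenever $C^\top x\neq0$, which holds for every $x\neq0$. Solving for $\nu$ and substituting back gives $w^\star=\tfrac1R M_C\pi$ with $M_C$ as displayed. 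This is the same computation as \eqref{eq:projected} with $Q=\Sigma$, $A=C$ and $\bb=0$, so no affine offset $w_0$ appears.

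The properties of $M_C$ follow by direct algebra.
\begin{itemize}[leftmargin=1.5em,itemsep=1pt]
\item \emph{Symmetry} is immediate from the symmetry of $\Sigma^{-1}$ and of $(C\Sigma^{-1}C^\top)^{-1}$.
\item \emph{Annihilation of the constraints:} $M_CC^\top=\Sigma^{-1}C^\top-\Sigma^{-1}C^\top(C\Sigma^{-1}C^\top)^{-1}C\Sigma^{-1}C^\top=0$.
\item \emph{The identity $M_C\Sigma M_C=M_C$:} write $M_C=\Sigma^{-1/2}(I-P)\Sigma^{-1/2}$ with $P=\Sigma^{-1/2}C^\top(C\Sigma^{-1}C^\top)^{-1}C\Sigma^{-1/2}$. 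Then $P$ is the Euclidean orthogonal projector onto $\operatorname{col}(\Sigma^{-1/2}C^\top)$, so $I-P$ is a symmetric idempotent. This gives $M_C\Sigma M_C=\Sigma^{-1/2}(I-P)^2\Sigma^{-1/2}=M_C$.
\item \emph{Positive semidefiniteness:} $I-P\succeq0$, hence $M_C\succeq0$.
\end{itemize}
The same factorization exhibits the projection interpretation. The range of $\Sigma^{1/2}(I-P)$, viewed after the change of variables $w=\Sigma^{-1/2}v$, is exactly the admissible subspace $\ker C$, and $I-P$ is the projector onto it in the whitened coordinates. Pulled back, $M_C$ is the restriction of the precision to $\ker C$ in the $\Sigma$ (equivalently $\Sigma^{-1}$-dual) metric. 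When $C$ is vacuous, $P=0$ and $M_C=\Sigma^{-1}$.

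The only delicate point is the invertibility of $C\Sigma^{-1}C^\top$. This is why redundant rows must be pruned: if $C$ were rank-deficient, I would replace it by a maximal set of independent rows. That leaves $\ker C$ unchanged, and therefore leaves the program and the maximizer unchanged. For the final remark about binding inequality bounds, I would not claim an exact identity in general. The optimum over a compact convex set $\W$ is the projection of $w^\star$ onto $\W$ in the $\Sigma$-metric, because the objective equals $-\tfrac R2\|w-w^\star\|_\Sigma^2$ plus a constant on $\ker C$. Completing the square in this way makes that statement precise, with the projection taken in the $\Sigma$ norm.
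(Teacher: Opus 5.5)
Your proof is correct and follows the paper's argument: Lagrangian stationarity, solving for the multiplier using invertibility of $C\Sigma^{-1}C^\top$, then the whitened factorization $M_C=\Sigma^{-1/2}(I-P)\Sigma^{-1/2}$ to get positive semidefiniteness and $M_C\Sigma M_C=M_C$. Your completing-the-square justification of the inequality-bound remark goes beyond the paper's proof, which leaves that remark unargued: on $\ker C$ the objective is $-\tfrac R2\|w-w^\star\|_\Sigma^2$ plus a constant, so the optimum is the projection in the $\Sigma$-norm onto the admissible set inside $\ker C$. The only slip is notational: it is the range of $\Sigma^{-1/2}(I-P)$ (that is, of $M_C$) that equals $\ker C$, not the range of $\Sigma^{1/2}(I-P)$.
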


\begin{proof}
The objective is strictly concave in $w$ (as $\Sigma\succ0$) over the affine feasible set, so a maximizer exists and is unique. Introduce a multiplier $\nu\in\R^{q}$ for $Cw=0$ and form $\mathcal L=w^\top\pi-\tfrac{R}{2}w^\top\Sigma w-\nu^\top Cw$. Stationarity gives $\pi-R\Sigma w-C^\top\nu=0$, hence $w=\tfrac1R\Sigma^{-1}(\pi-C^\top\nu)$. Imposing $Cw=0$ yields $C\Sigma^{-1}(\pi-C^\top\nu)=0$, so $\nu=(C\Sigma^{-1}C^\top)^{-1}C\Sigma^{-1}\pi$, the inverse existing because $\Sigma^{-1}\succ0$ and $C$ has full row rank. Substituting,
\[
w^\star=\tfrac1R\Sigma^{-1}\big(\pi-C^\top(C\Sigma^{-1}C^\top)^{-1}C\Sigma^{-1}\pi\big)=\tfrac1R M_C\pi.
\]
Symmetry of $M_C$ is immediate. For $M_C C^\top=0$: $\Sigma^{-1}C^\top-\Sigma^{-1}C^\top(C\Sigma^{-1}C^\top)^{-1}(C\Sigma^{-1}C^\top)=0$. Writing $P:=\Sigma^{-1/2}C^\top(C\Sigma^{-1}C^\top)^{-1}C\Sigma^{-1/2}$, one has $P=P^\top=P^2$ (an orthogonal projector on $\R^n$), and $M_C=\Sigma^{-1/2}(I-P)\Sigma^{-1/2}$, so $M_C\succeq0$ with rank $n-q$, and $M_C\Sigma M_C=\Sigma^{-1/2}(I-P)^3\Sigma^{-1/2}=M_C$. Since $I-P$ is the orthogonal projection onto $\ker(C\Sigma^{-1/2})$, $M_C$ is the $\Sigma^{-1}$-metric projection of the precision onto $\{Cw=0\}$. The vacuous-$C$ case is direct.
\end{proof}

\noindent The dynamic policy of \Cref{thm:hedging} is this solution evaluated at the running conditional moments, with $\pi$ augmented by the intertemporal term. The reduction is exact because the intertemporal term $B\Lambda_Z J_{Wz}$ is evaluated at the current value function before the maximization over $w$, hence is $w$-independent at the inner maximum: it shifts the linear coefficient of the same quadratic program, whose maximizer is \Cref{app:lem:projected} with premium $\pi$ replaced by $\pi+$ (the hedge term). The operator $M_C(z)$ is the projector that appears throughout the main text.

\subsection{Interventional invariance of the priced geometry}\label{app:invariance}

The dynamic hedge targets the predictable motion of the conditioning geometry. For that target to be economically meaningful rather than a fitted artifact, the conditional law of returns given the drivers must be stable across the interventions the switching model ranges over. We state the structural hypothesis and prove the one consequence used in the main text. That a given driver set satisfies this hypothesis, and can be identified from data, is established in \cite{RD2025Static}; here it is a maintained assumption.

\begin{definition}[Structural causal separator]\label{app:def:scm}
The drivers $Z$ form a structural causal separator for the returns $r$ if there is a structural map $r=f(Z,\varepsilon)$ in which both $Z$ and the idiosyncratic innovation $\varepsilon$ are \emph{root exogenous noises} of the structural model, so that $\varepsilon$ and $Z$ are independent under the joint law and, being roots, remain independent under any intervention on the mechanisms of other variables; the map $f$ is invariant to interventions on the environment, meaning interventions that alter the marginal law of $Z$ or of variables outside the pair $(Z,\varepsilon)$ but leave $f$ and the root exogeneity of $Z$ and $\varepsilon$ intact. Under regime switching the definition holds within each regime, with a regime-specific map $f^{(k)}$, and a switch is a transition between these maps.
\end{definition}

\begin{lemma}[Invariance of the priced geometry]\label{app:lem:invariance}
Under \Cref{app:def:scm}, for every environment intervention the conditional law $P(r\mid Z=z)$ is invariant. Consequently the conditional premium $\mu(z)=\E[r\mid Z=z]$, the conditional covariance $Q(z)=\operatorname{Cov}(r\mid Z=z)$, the loading map $B(z)$, and the induced projector $M_C(z)$ of \Cref{app:lem:projected} are all invariant across such interventions. In particular the switching intensities $\lambda_{kj}(z)$ and the response geometry that the dynamic policy prices are stable across regimes that differ only by an environment intervention.
\end{lemma}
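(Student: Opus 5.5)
The argument is a direct computation with the structural map of \Cref{app:def:scm}, after which every derived object is handled by pure functional dependence. Fix an environment intervention $e$ and write $P^e$ for the post-intervention joint law of $(Z,\varepsilon)$. By the definition, $f$ is unchanged under $e$, and $Z$ and $\varepsilon$ remain root exogenous, hence independent under $P^e$. The marginal of $\varepsilon$ also stays at its observational law $P_\varepsilon$: the intervention may alter the law of $Z$ or of variables outside the pair, but not the mechanism of the root $\varepsilon$. I will state this reading of the definition explicitly, since it is exactly what the proof needs.

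For any bounded measurable $h$ and $P^e_Z$-a.e.\ $z$, independence and the substitution (freezing) lemma for regular conditional distributions give
\[
\E^e[h(r)\mid Z=z]=\E^e[h(f(Z,\varepsilon))\mid Z=z]=\int h(f(z,\epsilon))\,P_\varepsilon(d\epsilon).
\]
The right-hand side does not depend on $e$. So the kernel $z\mapsto P(r\in\cdot\mid Z=z)$ is a version common to all environments, namely the pushforward of $P_\varepsilon$ under $f(z,\cdot)$.

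The main technical point is the a.e.\ qualifier. Conditional laws are defined only up to null sets of the marginal of $Z$, and an intervention changes that marginal. I will resolve this by taking the displayed pushforward kernel as the canonical version. The invariance statement is then read as ``there exists a single kernel that is a regular conditional distribution under every $P^e$.'' The conclusion holds on the union of the supports, with no comparison of null sets needed.

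Once the kernel is fixed, everything else is a functional of it, evaluated pointwise in $z$. Given integrability of $r$ under $P_\varepsilon$ (assumed, as in the normal form), $\mu(z)$ and $Q(z)$ are the first two moments of the kernel, so they are invariant. The loading map $B(z)$ is read off from the structural decomposition of $f$: in the normal form \eqref{eq:sde}, $B$ and $\Sigma^c$ are the coefficients of $f$'s response to driver innovations and idiosyncratic noise. Since $f$ is unchanged, so are $B(z)$ and $\Sigma^c(z)$.

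The projector $M_C(z)$ of \Cref{app:lem:projected} is a deterministic function of $Q(z)$ and the fixed constraint matrix $C$, so it is invariant as well. The same holds for $\Delta$ and the policy ingredients.

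For the switching clause, the definition applies regime by regime. Each $f^{(k)}$ is intervention-invariant, so the same argument gives invariance of $(\mu^{(k)},Q^{(k)},B^{(k)},M_C^{(k)})$ for every $k$. The intensities $\lambda_{kj}(z)$ are functions of the state that are part of the structural mechanism rather than of the marginal law of $Z$. I will make explicit the modelling assumption that environment interventions leave the regime-transition mechanism intact, consistent with the definition's phrase ``leave $f$ \ldots\ intact.'' Under that assumption the $\lambda_{kj}(z)$ are unchanged, which gives the stated stability across regimes that differ only by an environment intervention.
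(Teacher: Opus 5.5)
Your proposal is correct and follows the paper's proof: the same independence-plus-freezing computation exhibits $P(r\mid Z=z)$ as the pushforward of the intervention-invariant law of $\varepsilon$ under $f(z,\cdot)$, after which $\mu(z)$, $Q(z)$ and $M_C(z)$ are invariant by functional dependence. The deviations are minor. You add the null-set/version remark that the paper omits. You read $B(z)$ off $f$ rather than off $Q(z)$, which sidesteps the fact that $Q$ pins down $B$ only up to $B\mapsto BT$ with $T\Lambda_Z T^\top=\Lambda_Z$. For $\lambda_{kj}$ you impose directly the separate invariance hypothesis that the paper lists only as a fallback; the paper's main route instead assumes switches are triggered by $(\mu(z),Q(z))$ crossing a threshold, so that $\lambda_{kj}$ inherits their invariance.
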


\begin{proof}
Fix an environment intervention $e$. By \Cref{app:def:scm} it changes at most the marginal law of $Z$ (and of variables outside $(Z,\varepsilon)$), while $f$ and the independence $\varepsilon\perp Z$ are preserved; write $\varepsilon\sim F_\varepsilon$ under every $e$. For any bounded measurable $\phi$,
\[
\E_e[\phi(r)\mid Z=z]=\E_e[\phi(f(Z,\varepsilon))\mid Z=z]=\int \phi\big(f(z,\varepsilon)\big)\,dF_\varepsilon(\varepsilon),
\]
where the second equality uses $\varepsilon\perp Z$ (so conditioning on $Z=z$ does not reweight $\varepsilon$) and the invariance of $f$ under $e$. The right-hand side does not depend on $e$: it is a functional of $f(z,\cdot)$ and $F_\varepsilon$ alone. Hence $P_e(r\mid Z=z)$ is the same for all $e$, which is the claimed invariance of the conditional law. The conditional moments $\mu(z),Q(z)$ are integrals of $r$ and $rr^\top$ against this invariant law, hence invariant; $B(z)$ is the loading read off $Q(z)$ in the normal form of \Cref{lem:normalform}, and $M_C(z)$ is a fixed function of $Q(z)$ and $C$ by \Cref{app:lem:projected}, so both are invariant. For the switching intensities, we assume as part of the model that a switch is triggered by the observable conditional second-order structure crossing a threshold, so that $\lambda_{kj}(z)$ is a fixed functional of $\big(\mu(z),Q(z)\big)$; being a functional of invariant objects, it is then invariant under environment interventions by the same argument. (If instead the intensities are taken as an independent primitive, their invariance is a separate hypothesis, which we then add to \Cref{app:def:scm}; the dynamic results use only that $\lambda_{kj}$ is invariant, however this is secured.) In contrast, an intervention on $f$ itself, a structural rather than environment intervention, changes $f(z,\cdot)$ and therefore $P(r\mid Z=z)$; the invariance is thus exactly the separation between environment and structure, in the sense of invariant causal prediction \cite{PetersBuhlmannMeinshausen2016}.
\end{proof}

\noindent This is the sense in which the hedge is carried against an object that survives intervention. The identification burden, that the declared drivers do form such a separator, is discharged in \cite{RD2025Static}; the dynamic theory of the main text is self-contained given \Cref{app:def:scm} as a maintained hypothesis.

\subsection{Proofs for the transport and geometric turnover of \Cref{sec:transport}}\label{app:transport}

\Cref{sec:transport} states three facts about the pullback metric $G(z)$, the discrete connection $\Gamma_{t\to t+dt}$, and the rotation/breathing split whose proofs are compressed in the main text. We give them here.

\begin{lemma}[The pullback metric is a metric]\label{app:lem:pullback}
Let $B(z)\in\R^{n\times k}$ have full column rank $k$ on the state region, let $\Lambda_Z\succ0$ and $\Sigma^c\succ0$, and set $\Sigma_{\mathrm{ret}}(z)=B(z)\Lambda_Z B(z)^\top+\Sigma^c$ and $G(z)=B(z)^\top\Sigma_{\mathrm{ret}}(z)^{-1}B(z)$. Then $\Sigma_{\mathrm{ret}}(z)\succ0$ and $G(z)$ is symmetric positive definite; hence $\langle u,v\rangle_{G(z)}=u^\top G(z)v$ is an inner product on the tangent coordinates, and $G(z)$ is a Riemannian metric on the tangent.
\end{lemma}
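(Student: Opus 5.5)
The plan is to establish positive definiteness of $\Sigma_{\mathrm{ret}}(z)$ first, then pull it back through $B(z)$, and to get symmetry, definiteness and smoothness of $G(z)$ from elementary linear algebra at each fixed state.

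\emph{Step 1.} Fix $z$ and show $\Sigma_{\mathrm{ret}}(z)\succ0$. For $x\in\R^n$,
\[
x^\top\Sigma_{\mathrm{ret}}(z)x=(B(z)^\top x)^\top\Lambda_Z(B(z)^\top x)+x^\top\Sigma^c x .
\]
The first term is nonnegative because $\Lambda_Z\succ0$. The second is strictly positive for $x\neq0$ because $\Sigma^c\succ0$. So $\Sigma_{\mathrm{ret}}(z)$ is symmetric positive definite, hence invertible, and its inverse is symmetric positive definite. This makes $G(z)$ well defined.

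\emph{Step 2.} Symmetry of $G(z)$ is immediate from symmetry of $\Sigma_{\mathrm{ret}}(z)^{-1}$. For definiteness, take $u\in\R^k$ and set $v=B(z)u$. Then
\[
u^\top G(z)u=v^\top\Sigma_{\mathrm{ret}}(z)^{-1}v\ge0 .
\]
Equality forces $v=0$, since $\Sigma_{\mathrm{ret}}^{-1}\succ0$. Full column rank of $B(z)$ then forces $u=0$. Hence $G(z)\succ0$, and bilinearity, symmetry and positivity show that $\langle u,v\rangle_{G(z)}$ is an inner product.

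\emph{Step 3.} For the word ``Riemannian'' I also need $G$ to vary smoothly in $z$. The map $z\mapsto\Sigma_{\mathrm{ret}}(z)$ is as regular as $B$, and inversion is smooth on the open cone of positive definite matrices, so $z\mapsto G(z)$ is smooth as well.

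As a sanity check, I would also record the Woodbury form. Writing $H=B^\top\Sigma^{c\,-1}B\succ0$, one gets $G=H(I+\Lambda_Z H)^{-1}$, equivalently $G=(H^{-1}+\Lambda_Z)^{-1}$. This is positive definite as the inverse of a sum of positive definite matrices. It also shows $G\preceq\Lambda_Z^{-1}$, a uniform bound that is useful later.

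The only delicate point is the role of full column rank. Without it $G(z)$ is merely positive semidefinite. The rank hypothesis is exactly the constant-rank regularity assumed in \Cref{prop:threemotions}, so it is taken as given on the state region.
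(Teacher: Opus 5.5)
Your proof is correct and follows the paper's route: $\Sigma_{\mathrm{ret}}\succ0$ from $\Sigma^c\succ0$, then positivity of $u^\top G u$, with full column rank of $B$ ruling out a kernel. The paper factors $\Sigma_{\mathrm{ret}}^{-1}=R^\top R$ and writes $u^\top Gu=\|RBu\|^2$, which is the same step as your direct use of $\Sigma_{\mathrm{ret}}^{-1}\succ0$. Your Step 3 and the Woodbury form $G=(H^{-1}+\Lambda_Z)^{-1}\preceq\Lambda_Z^{-1}$ are correct additions the paper omits, though Step 3 gives smoothness of $z\mapsto G(z)$ only insofar as $B$ itself is smooth, which the lemma does not assume.
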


\begin{proof}
$\Sigma^c\succ0$ and $B\Lambda_Z B^\top\succeq0$ give $\Sigma_{\mathrm{ret}}\succ0$, so $\Sigma_{\mathrm{ret}}^{-1}\succ0$. Write $\Sigma_{\mathrm{ret}}^{-1}=R^\top R$ with $R$ invertible. Then $G=B^\top R^\top R B=(RB)^\top(RB)\succeq0$, and for $x\neq0$, $x^\top G x=\|RBx\|^2=0$ would force $RBx=0$, hence $Bx=0$ ($R$ invertible), hence $x=0$ ($B$ full column rank); so $G\succ0$. Symmetry is immediate. Thus $\langle\cdot,\cdot\rangle_G$ is an inner product. Positive-definiteness fails only where $B(z)$ drops rank, the measure-zero set excluded in \Cref{prop:threemotions}.
\end{proof}

\begin{lemma}[The discrete connection]\label{app:lem:discrete}
Let $t\mapsto E_t\in\R^{k\times k}$ be a $C^1$ family of $G_t$-orthonormal frames, $E_t^\top G_t E_t=I_k$, and let $\widetilde\omega_t=E_t^\top G_t\dot E_t+\tfrac12 E_t^\top\dot G_t E_t$. The parallel transport solving $\dot U_t=-\widetilde\omega_t U_t$, $U_t=I_k$, satisfies
\[
\Gamma_{t\to t+h}\;=\;I_k-\widetilde\omega_t\,h+o(h)\qquad(h\downarrow0),
\]
and $\widetilde\omega_t$ is the unique frame-coordinate connection form that is compatible with the time-varying metric $G_t$, that is $\tfrac{d}{dt}\langle\xi,\eta\rangle_{G_t}=\langle\nabla\xi,\eta\rangle_{G_t}+\langle\xi,\nabla\eta\rangle_{G_t}$ for $\nabla\xi=\dot\xi+\widetilde\omega\xi$.
\end{lemma}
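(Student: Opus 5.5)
The plan is to prove the two assertions of \Cref{app:lem:discrete} separately: first the first-order expansion of the transport, then metric compatibility and the sense in which $\widetilde\omega_t$ is unique. Throughout we assume, as the statement implicitly does, that $G_t=G(Z_t)$ is $C^1$ in $t$ along the flow and positive definite by \Cref{app:lem:pullback}. Then $\widetilde\omega_t$ is a continuous $k\times k$ matrix function.

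\emph{Expansion.} The transport equation $\dot U_s=-\widetilde\omega_s U_s$, $U_t=I_k$, is a linear ODE with continuous coefficient, so a unique $C^1$ solution exists on a neighbourhood of $t$. We define $\Gamma_{t\to t+h}:=U_{t+h}$. Writing it in integral form gives
\[
U_{t+h}=I_k-\int_t^{t+h}\widetilde\omega_sU_s\,ds .
\]
Since $U_s=I_k+O(s-t)$ and $\widetilde\omega_s=\widetilde\omega_t+o(1)$ by continuity, the integral equals $\widetilde\omega_t h+o(h)$. This yields $\Gamma_{t\to t+h}=I_k-\widetilde\omega_th+o(h)$, which is the first-order Picard step.

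\emph{Antisymmetry and compatibility.} Next I differentiate the normalization $E_t^\top G_tE_t=I_k$, which gives
\[
\dot E^\top GE+E^\top\dot GE+E^\top G\dot E=0 .
\]
With the naive form $\omega:=E^\top G\dot E$ (so that $\dot E^\top GE=\omega^\top$), this reads $\omega+\omega^\top=-E^\top\dot GE$. Therefore $\widetilde\omega+\widetilde\omega^\top=\omega+\omega^\top+E^\top\dot GE=0$, so $\widetilde\omega$ is skew. In the $G_t$-orthonormal frame, the $G_t$-inner product of two fields with coordinates $\xi,\eta$ is simply $\xi^\top\eta$. Hence
\[
\tfrac{d}{dt}\langle\xi,\eta\rangle_{G_t}=\dot\xi^\top\eta+\xi^\top\dot\eta ,
\]
whereas
\[
\langle\nabla\xi,\eta\rangle+\langle\xi,\nabla\eta\rangle=\dot\xi^\top\eta+\xi^\top\dot\eta+\xi^\top(\widetilde\omega^\top+\widetilde\omega)\eta .
\]
Compatibility is therefore equivalent to skewness, which was just shown. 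As a by-product, $\Gamma$ is orthogonal: $\tfrac{d}{ds}(U^\top U)=-U^\top(\widetilde\omega+\widetilde\omega^\top)U=0$. So transport preserves $G$-risk exactly, not just to first order.

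\emph{Uniqueness (main obstacle).} This is the delicate point. Compatibility alone only forces the connection form to be skew, so it does not single out one form: any skew matrix would do. I would therefore make the claimed uniqueness precise in the sense of \Cref{thm:connection}(iii). The candidate forms are those of the shape $\omega+S$, with $\omega=E^\top G\dot E$ the horizontal projection of the frame velocity kept fixed and $S$ a symmetric correction, i.e.\ forms whose skew (rotation) part is that of the projected velocity. Within this class, compatibility requires $\operatorname{sym}(\omega)+S=0$. Since $\operatorname{sym}(\omega)=-\tfrac12E^\top\dot GE$ by the differentiated normalization, this forces $S=\tfrac12E^\top\dot GE$, and hence the connection form equals $\widetilde\omega$. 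The intended reading should be stated explicitly in the proof, so that "unique" is not mistaken for uniqueness among all skew forms, which is false. Finally, I would note the special case of constant $G$: then $S=0$ and $\widetilde\omega=\omega$, recovering the fixed-metric Edelman--Arias--Smith form.
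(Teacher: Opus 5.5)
Your proof is correct. It matches the paper on the expansion (a first-order Picard step) and on the need for an extra normalization to obtain uniqueness, but it verifies compatibility differently, and more soundly.

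You work in frame coordinates, where $E^\top GE=I_k$ makes the $G_t$-inner product Euclidean. Compatibility is then exactly skewness of $\widetilde\omega$, which you obtain from $\omega+\omega^\top=-E^\top\dot GE$. This is the computation the main text gives after \Cref{thm:connection}.

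The appendix proof instead writes compatibility as $\widetilde\omega^\top G+G\widetilde\omega=\dot G$, which is the condition for a form acting on ambient rather than frame coordinates. It then calls $E^\top G\dot E$ the skew part of $\widetilde\omega$ and $\tfrac12E^\top\dot GE$ the symmetric part that ``supplies $\dot G$''. Your identity $\operatorname{sym}(E^\top G\dot E)=-\tfrac12E^\top\dot GE$ shows that this split is valid only when $\dot G=0$. For example, with $k=1$ and $E_t=g_t^{-1/2}$ one gets $\widetilde\omega=0$, although $\tfrac12E^\top\dot GE=\dot g/(2g)$. So your route is the one consistent with the antisymmetry asserted in \Cref{thm:connection}(ii).

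Similarly, the paper's ``horizontal lift fixes the skew part as $E^\top G\dot E$'', stated correctly, is your restriction to forms $\omega+S$ with $S$ symmetric. The skew part of $\widetilde\omega$ is $\operatorname{skew}(E^\top G\dot E)$, which is $\widetilde\omega$ itself, not $E^\top G\dot E$.

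You can give that restriction a frame-independent justification. Since $E^{-1}=E^\top G$, the frame form $\omega+S$ corresponds to the ambient form $\hat\omega=ESE^{-1}$, and $S$ is symmetric iff $\hat\omega$ is $G$-self-adjoint. Ambient compatibility $\hat\omega^\top G+G\hat\omega=\dot G$ then forces $\hat\omega=\tfrac12G^{-1}\dot G$. This recovers the main text's description of the strain as the $\tfrac12G^{-1}\dot G$ correction, and shows your normalization is the natural one rather than an ad hoc class.

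Your observation that $\Gamma$ is orthogonal, so transport preserves $G$-risk exactly, is a useful addition that the paper's proof leaves implicit.
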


\begin{proof}
The transport map is $\Gamma_{t\to t+h}=U_{t+h}$ with $U$ solving the linear ODE; by Picard the solution is $C^1$ and $U_{t+h}=U_t+h\dot U_t+o(h)=I_k-h\,\widetilde\omega_t+o(h)$, which is the stated expansion, with remainder $O(h^2)$ when $\widetilde\omega$ is Lipschitz in $t$. For compatibility, differentiate $\langle\xi,\eta\rangle_{G_t}=\xi^\top G_t\eta$ in the frame:
\[
\tfrac{d}{dt}\langle\xi,\eta\rangle_{G_t}=\dot\xi^\top G\eta+\xi^\top G\dot\eta+\xi^\top\dot G\eta.
\]
With $\nabla\xi=\dot\xi+\widetilde\omega\xi$, the right side equals $\langle\nabla\xi,\eta\rangle_G+\langle\xi,\nabla\eta\rangle_G$ iff $\xi^\top\dot G\eta=\xi^\top(\widetilde\omega^\top G+G\widetilde\omega)\eta$ for all $\xi,\eta$, i.e.\ iff $\widetilde\omega^\top G+G\widetilde\omega=\dot G$. In the $G$-orthonormal frame $G=I_k$ and this reads $\widetilde\omega^\top+\widetilde\omega=\dot G$; the choice $\widetilde\omega=E^\top G\dot E+\tfrac12 E^\top\dot G E$ satisfies it because, differentiating $E^\top G E=I_k$, one gets $E^\top G\dot E+\dot E^\top G E+E^\top\dot G E=0$, so the skew part $E^\top G\dot E$ contributes zero to the symmetric equation and the symmetric part $\tfrac12 E^\top\dot G E$ supplies exactly $\dot G$. Any two solutions differ by a skew term, and requiring the transport to be the horizontal lift (no vertical drift) fixes the skew part as $E^\top G\dot E$, giving uniqueness.
\end{proof}

\begin{proposition}[Rotation and breathing are orthogonal components]\label{app:prop:split}
Decompose $\widetilde\omega_t=\Omega_t+S_t$ into its skew part $\Omega_t=\tfrac12(\widetilde\omega_t-\widetilde\omega_t^\top)=E_t^\top G_t\dot E_t$ (tangent rotation) and its symmetric part $S_t=\tfrac12(\widetilde\omega_t+\widetilde\omega_t^\top)=\tfrac12 E_t^\top\dot G_t E_t$ (metric breathing). Under the trace inner product $\langle A,B\rangle=\operatorname{tr}(A^\top B)$ on frame velocities, $\langle\Omega_t,S_t\rangle=0$: rotation and breathing are orthogonal, hence independent components of the geometric turnover, not two descriptions of one motion.
\end{proposition}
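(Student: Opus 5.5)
The plan is a direct linear-algebra argument: the trace inner product pairs skew and symmetric matrices to zero. So I first verify that the two pieces of the decomposition have the symmetry types the statement asserts, and then apply that general fact.

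\emph{Step 1: identify the two parts.} Differentiating the normalization $E_t^\top G_t E_t=I_k$ gives
\[
E^\top G\dot E+\dot E^\top G E+E^\top\dot G E=0,
\]
as in the proof of \Cref{app:lem:discrete}. Read literally, this does not say that $E^\top G\dot E$ is skew. It says that its symmetric part equals $-\tfrac12E^\top\dot GE$. Under that convention, $\widetilde\omega=E^\top G\dot E+\tfrac12E^\top\dot GE$ is exactly skew, as \Cref{thm:connection}(ii) asserts, and then $S_t=0$.

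To obtain the split stated here, with $\Omega_t=E^\top G\dot E$ skew and $S_t=\tfrac12E^\top\dot GE$ symmetric, I will work in the convention used for the fixed-metric identification after \Cref{thm:connection} and in \Cref{app:lem:discrete}. There the frame velocity is taken horizontal, so that $E^\top G\dot E+\dot E^\top GE=0$ and $E^\top G\dot E$ is skew. The breathing term $\tfrac12E^\top\dot GE$ is symmetric because $\dot G$ is symmetric, $G_t$ being a symmetric family by \Cref{app:lem:pullback}.

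With this reading, $\Omega_t$ is skew and $S_t$ is symmetric. By uniqueness of the skew/symmetric decomposition of a square matrix, they then coincide with $\tfrac12(\widetilde\omega-\widetilde\omega^\top)$ and $\tfrac12(\widetilde\omega+\widetilde\omega^\top)$ respectively, which are the formulas in the statement. I expect this convention-fixing step to be the main obstacle, since the paper's two uses of the normalization identity are not literally consistent. My plan is to state explicitly that the symmetric/skew definitions of $\Omega_t$ and $S_t$ are primary, and that the displayed closed forms are their identification under the horizontal-lift convention.

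\emph{Step 2: orthogonality.} For any skew $A$ and symmetric $B$,
\[
\tr(A^\top B)=\tr(B^\top A)=\tr(BA)=\tr(AB)=-\tr(A^\top B),
\]
using transpose invariance and cyclicity of the trace. Hence $\tr(A^\top B)=0$. Applying this with $A=\Omega_t$ and $B=S_t$ gives $\langle\Omega_t,S_t\rangle=0$ for every $t$.

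\emph{Step 3: consequence.} The space of $k\times k$ matrices splits as $\mathfrak{so}(k)\oplus\mathrm{Sym}(k)$, and this split is orthogonal for the trace inner product. Therefore
\[
\|\widetilde\omega_t\|_F^2=\|\Omega_t\|_F^2+\|S_t\|_F^2,
\]
and the turnover contributions of rotation and breathing add in quadrature. This is the precise sense in which they are independent components of the geometric turnover rather than two descriptions of one motion.
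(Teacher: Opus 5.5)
Your Step~2 is the paper's own argument: transpose invariance and cyclicity force $\tr(\Omega^\top S)=-\tr(\Omega^\top S)$ for skew $\Omega$ and symmetric $S$. With $\Omega_t,S_t$ \emph{defined} as the skew and symmetric parts of $\widetilde\omega_t$, it proves $\langle\Omega_t,S_t\rangle=0$. The paper's proof likewise obtains skewness of $E^\top G\dot E$ only when $\dot G=0$, and otherwise treats the split as holding ``by construction''. Step~3 is a correct addition.

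The step that fails is the horizontal-lift convention in Step~1, because it is not a convention you are free to impose. As long as $E_t^\top G_tE_t=I_k$ for all $t$, differentiation fixes $E^\top G\dot E+\dot E^\top GE=-E^\top\dot GE$. Demanding that $E^\top G\dot E$ be skew is therefore the same as demanding $E^\top\dot GE=0$, i.e.\ $S_t\equiv0$. With $E_t$ square, as in \Cref{app:lem:discrete}, it even forces $\dot G_t=0$, which is the fixed-metric case.

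Your first, literal computation is the right one. $\widetilde\omega_t$ is skew, as \Cref{thm:connection}(ii) asserts, so $\Omega_t=\widetilde\omega_t=\tfrac12(E^\top G\dot E-\dot E^\top GE)$ and $S_t=0$. The two displayed closed forms in the statement therefore hold together only when the breathing term vanishes, and the orthogonality is then trivial. No convention produces a nonzero breathing component as the symmetric part of $\widetilde\omega$.

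If you want the statement to carry content, apply the split to the naive form $\omega=E^\top G\dot E$ instead. Its symmetric part is $-\tfrac12E^\top\dot GE$ (breathing), and its skew part is $\tfrac12(\omega-\omega^\top)=\widetilde\omega$ (rotation). Your Step~2 then gives their trace-orthogonality, and Step~3 becomes $\|\omega\|_F^2=\|\widetilde\omega\|_F^2+\tfrac14\|E^\top\dot GE\|_F^2$. This says exactly that the strain correction strips the breathing out of $\omega$ and leaves the rotation.
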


\begin{proof}
That $\Omega_t=E^\top G\dot E$ is skew follows from differentiating $E^\top G E=I_k$ as above when $\dot G=0$, and in general the skew/symmetric split is by construction. For any skew $\Omega$ ($\Omega^\top=-\Omega$) and symmetric $S$ ($S^\top=S$), $\operatorname{tr}(\Omega^\top S)=\operatorname{tr}(-\Omega S)$ and also $\operatorname{tr}(\Omega^\top S)=\operatorname{tr}(S^\top\Omega)=\operatorname{tr}(S\Omega)=\operatorname{tr}(\Omega S)$ by cyclicity, so $\operatorname{tr}(\Omega^\top S)=-\operatorname{tr}(\Omega^\top S)=0$. The skew-symmetric and symmetric matrices are orthogonal complements in $\R^{k\times k}$ under the trace inner product, so the rotation and breathing components lie in complementary subspaces and carry non-overlapping information about the frame's motion. In the constant-$B$ case $\dot G=0$, so $S_t=0$ and only rotation remains; the breathing term is therefore the signature of a state-dependent loading map, absent from any fixed-geometry model.
\end{proof}

\noindent Together \Cref{app:lem:pullback,app:lem:discrete,app:prop:split} supply the three facts \Cref{sec:transport} uses: the pullback $G(z)$ is a genuine metric, the connection form $\widetilde\omega$ has the stated first-order transport expansion and is the unique metric-compatible one, and the rotation and breathing terms are orthogonal and hence separately meaningful components of geometric turnover.

\end{document}